\definecolor{ForestGreen}{rgb}{0.1333,0.5451,0.1333}
\definecolor{DarkRed}{rgb}{0.8,0,0}
\definecolor{Red}{rgb}{1,0,0}
\newcommand{\dfa}{\ensuremath{{\sf dfa}}\xspace} 
\newcommand{\cnf}{\ensuremath{{\sf cnf}}\xspace}
\newcommand{\DFA}{\ensuremath{{\sf DFA}}\xspace} 
\newcommand{\NFA}{\ensuremath{{\sf NFA}}\xspace}
\par\vspace{4mm}}
\newcommand{\enc}[1]{\langle #1 \rangle}
\def\thmt@refnamewithcomma #1#2#3,#4,#5\@nil{%
  \@xa\def\csname\thmt@envname #1utorefname\endcsname{#3}%
  \ifcsname #2refname\endcsname
    \csname #2refname\expandafter\endcsname\expandafter{\thmt@envname}{#3}{#4}%
  \fi
}
\declaretheorem[numberwithin=section,refname={Theorem,Theorems},Refname={Theorem,Theorems}]{theorem}
\declaretheorem[numberlike=theorem,refname={Lemma,Lemmas},Refname={Lemma,Lemmas}]{lemma}
\declaretheorem[numberlike=theorem,refname={Corollary,Corollaries},Refname={Corollary,Corollaries}]{corollary}
\declaretheorem[numberlike=theorem,refname={Proposition,Propositions},Refname={Proposition,Propositions}]{proposition}
\declaretheorem[numberlike=theorem,refname={observation,observations},Refname={Observation,Observations}]{observation}
\declaretheorem[numberlike=theorem,refname={Claim, Claims},Refname={Claim, Claims}]{claim}
\declaretheorem[numberlike=theorem]{definition}
\newcommand{\yi}{{\sc Yes-Instance}\xspace}
\renewcommand{\ni}{{\sc No-Instance}\xspace}
\newcommand{\ceil}[1]{\ensuremath{\left\lceil#1\right\rceil}}
\newcommand{\paren}[1]{\left ( #1 \right ) }
\newcommand{\polylog}[1]{\mathrm{polylog}(#1)}
\newcommand{\opt}{\mbox{\sf OPT}}
\newcommand{\set}[1]{\left\{ #1 \right\}}
\newcommand{\pset}{{\mathcal{P}}}
\newcommand{\aset}{{\mathcal{A}}}
\newcommand{\hset}{{\mathcal{H}}}
\newcommand{\nset}{{\mathcal{N}}}
\newcommand{\fset}{{\mathcal{F}}}
\newcommand{\be}{\begin{enumerate}}
\newcommand{\ee}{\end{enumerate}}
\newcommand{\bd}{\begin{description}}
\newcommand{\ed}{\end{description}}
\newcommand{\bi}{\begin{itemize}}
\newcommand{\ei}{\end{itemize}}
\renewcommand{\phi}{\varphi}
\newcommand{\poly}{\operatorname{poly}}
\newcommand{\wall}{\overrightarrow{R}}
\newcommand{\awall}{\widehat{R}}
\newcommand{\edp}{{\sf edp}\xspace}
\newcommand{\tildeEDP}{\widetilde{\sf edp}\xspace}
\newcommand{\VDP}{{\sf VDP}\xspace}
\newcommand{\EDP}{{\sf EDP}\xspace}
\def\danupon#1{\marginpar{$\leftarrow$\fbox{D}}\footnote{$\Rightarrow$~{\sf #1 --Danupon}}}
\def\danuponb#1{{\bf Danupon:} #1}
\def\parinya#1{\marginpar{$\leftarrow$\fbox{P}}\footnote{$\Rightarrow$~{\sf #1 --Parinya}}}
\def\bundit#1{\marginpar{$\leftarrow$\fbox{B}}\footnote{$\Rightarrow$~{\sf #1 --Bundit}}}
\def\danupon#1{}
\def\danuponb#1{}
\def\parinya#1{}
\def\bundit#1{}
\title{Pre-Reduction Graph Products: Hardnesses of Properly Learning DFAs and Approximating EDP on DAGs}
\date{}
\author{ 
	      Parinya Chalermsook\thanks{Max-Planck-Institut f\"ur Informatik, Germany. 
          Work partially done while at IDSIA, Switzerland.
          Supported by the Swiss National Science Foundation project 200020\_144491/1}
 \and
        Bundit Laekhanukit\thanks{McGill University, Canada. 
          Supported by the Natural Sciences and
          Engineering Research Council of Canada (NSERC) grant
          no.~28833 and by Dr\&Mrs M.Leong fellowship. Istituto Dalle Molle di Studi sull'Intelligenza Artificiale (IDSIA). Supported by ERC starting grant 279352 (NEWNET).} 
 \and
        Danupon Nanongkai\thanks{University of Vienna, Austria. Work partially done while at Nanyang Technological University (NTU), Singapore, and ICERM, Brown University, USA.}      
}
\begin{document}


\begin{titlepage}
\maketitle
\pagenumbering{roman}

\begin{abstract}
The study of graph products is a major research topic  and typically concerns the term $f(G*H)$, e.g., to show that $f(G*H)=f(G)f(H)$. In this paper, we study graph products in a non-standard form $f(R[G*H]$ where $R$ is a ``reduction'', a transformation of any graph into an instance of an intended optimization problem. We resolve some open problems as applications.

The first problem is {\em minimum consistent deterministic finite automaton (DFA)}. We show a tight $n^{1-\epsilon}$-approximation hardness, improving the $n^{1/14-\epsilon}$ hardness of [Pitt and Warmuth, STOC~1989 and JACM 1993], where $n$ is the sample size. (In fact, we also give improved hardnesses for the case of {\em acyclic} DFA and NFA.) Due to Board and Pitt [Theoretical Computer Science 1992], this implies the {\em hardness of properly learning DFAs} assuming $NP\neq RP$ (the weakest possible assumption). This affirmatively answers an open problem raised 25 years ago in the paper of Pitt and Warmuth and the survey of Pitt [All 1989]. Prior to our results, this hardness only follows from the stronger hardness of {\em improperly} learning DFAs, which requires stronger assumptions, i.e., either a cryptographic or an average case complexity assumption [Kearns and Valiant STOC~1989 and J.~ACM 1994; Daniely~et~al. STOC 2014]. 
The second problem is {\em edge-disjoint paths} (EDP) on {\em directed acyclic graphs} (DAGs). This problem admits an $O(\sqrt{n})$-approximation algorithm [Chekuri, Khanna, and Shepherd, Theory of Computing 2006] and a matching $\Omega(\sqrt{n})$ integrality gap, but so far only an $n^{1/26-\epsilon}$ hardness factor is known  [Chuzhoy~et~al., STOC~2007].  ($n$ denotes the number of vertices.) Our techniques give a tight $n^{1/2-\epsilon}$ hardness for EDP on DAGs, thus resolving its approximability status. 

As by-products of our techniques: (i) We give a tight hardness of packing vertex-disjoint $k$-cycles for large $k$, complimenting [Guruswami and Lee, ECCC~2014] and matching [Krivelevich~et~al., SODA~2005 and ACM~Transactions~on~Algorithms~2007]. (ii) We give an alternative (and perhaps simpler) proof for the hardness of properly learning DNF, CNF and intersection of halfspaces [Alekhnovich~et~al., FOCS 2004 and J.~Comput.Syst.~Sci.~2008]. Our new concept reduces the task of proving hardnesses to merely analyzing graph product inequalities, which are often as simple as textbook exercises. This concept was inspired by, and can be viewed as a generalization of, the {\em graph product subadditivity} technique we previously introduced in SODA~2013. This more general concept might be useful in proving other hardness results as well.
\end{abstract}

\newpage
\setcounter{tocdepth}{2}
\tableofcontents
\end{titlepage}

\newpage
\pagenumbering{arabic}

\section{Introduction}\label{sec:intro}


\subsection{The Concept of Pre-Reduction Graph Product}

\paragraph{Background: Graph Product and Hardness of Approximation.}
Graph product is a fundamental tool with rich applications in both graph theory and theoretical computer science. 
It is, roughly speaking, a way to combine two graphs, say $G$ and $H$, into a new graph denoted by $G*H$. 
For example, the following {\em lexicographic product}, denoted by $G\cdot H$, will be particularly useful in this paper. 
\begin{align}
\mbox{(Lexicographic Product)} && V(G\cdot H) & = V(G)\times V(H)  =\{(u, v) : u\in V(G) \mbox{ \underline{and} } v\in V(H)\}.\nonumber\\ 
&&E(G\cdot H) &= \{(u, a)(v, b) : \mbox{$uv \in E(G)$ \underline{or} {\bf (}$u=v$ \underline{and} $ab\in E(H)${\bf )}}\}.\label{eq:lex product definition}
\end{align}

%
%
A common study of graph product aims at understanding how $f(G*H)$ behaves for some function $f$ on graphs denoting a graph property. 
For example, if we let $\alpha(G)$ be the {\em independence number} of $G$ (i.e., the cardinality of the maximum independent set),
then $\alpha(G\cdot H) = \alpha(G)\alpha(H).$ 

Graph products have been extremely useful in {\em boosting} the hardness of approximation. One textbook example is proving the hardness of $n^{\epsilon}$ for approximating the maximum independent set problem (i.e., approximating $\alpha(G)$ of an input graph $G$): Berman and Schnitger~\cite{BermanS92} showed that we can reduce from Max 2SAT to get a constant approximation hardness $c>1$ for the maximum independent set problem, and then use a graph product to boost the resulting hardness to $n^{\epsilon}$ for some (small) constant $\epsilon$.
To illustrate how graph products amplify hardness, suppose we have a $(1.001)$-gap reduction $R[I]$ that transforms an instance $I$ of SAT into a graph $G$. 
Since $\alpha(\cdot)$ is multiplicative, if we take a product $R[I]^k$ for any integer $k$, the hardness gap immediately becomes $(1.0001)^k = 2^{\Omega(k)}$. Choosing $k$ to be large enough gives $2^{\log^{1-\epsilon} n}$ hardness.
Therefore, once we can rule out the PTAS, graph products can be used to boost the hardness to almost polynomial. 
This idea is also used in many other problems, e.g., in proving the hardness of the longest path problem~\cite{KargerMR97}.  

\paragraph{Our Concept: Pre-Reduction Graph Product.} This paper studies a reversed way to apply graph products: instead of the commonly used form of $(R[I])^k = (R[I]*R[I]*\ldots )$  to boost the hardness of approximation, we will use $R[I^k]=R[I*I*\ldots]$; here, $I$ is a graph which is an instance of a hard graph problem such as maximum independent set or minimum coloring.  We refer to this approach as {\em pre-reduction graph product} to contrast the previous approach in which graph product is performed {\em after} a reduction (which will be referred to as {\em post-reduction} graph product). 
The main conceptual contribution of this paper is the demonstration to the power and versatility of this approach in proving approximation hardnesses.  We show our results in Section~\ref{sec:results} and will come back to explain this concept in more detail in Section~\ref{sec:overview}. 

%
We note one conceptual difference here between the previous post-reduction and our pre-reduction approaches: While the previous approach starts from a reduction $R$ that already gives some hardness result, our approach usually starts from a reduction that does {\em not immediately} provide any hardness result; in other words, such reduction alone cannot be used to even prove NP-hardness. (See Section~\ref{sec:overview} for an illustration.)
Moreover, in contrast to the previous use of $(R[I])^k$ which requires $R[I]$ to be a graph, our approach allows us to prove hardnesses of problems whose input instances are not graphs.
Also note that our approach gives rise to a study of graph products in a new form: in contrast to the usual study of $f(G*H)$, our hardness results crucially rely on understanding the behavior of $f(R[G*H])$ for some function $f$, reduction $R$, and graph product $*$ (which happens to always be the lexicographic product in this paper).
Another feature of this approach is that it usually leads to simple proofs that do not require heavy machineries (such as the PCP-based construction) -- some of our hardness proofs are arguably simplifications of the previous ones; in fact, most of our hardness results follow from the meta-theorem (see Section~\ref{sec:meta}) which shows that a bounds of $f(R[G*H])$ in a certain form will immediately lead to hardness results. We list some bounds of $f(R[G*H])$ in Theorem~\ref{thm:intro bounds of product}.

\subsection{Problems and Our Results}\label{sec:results}

\begin{table}
\centering
\begin{tabular}{c|c|c|c}
\hline
Problems & Upper Bounds &  Prev. Hardness & New Hardness\\
\hline
{\sc MinCon($DFA$, $DFA$)} & $O(n)$ &  $n^{1/14-\epsilon}$ \cite{PittW93} & $n^{1-\epsilon}$\\
\EDP on DAGs & $\tilde O(n^{1/2})$  \cite{ChekuriKS06} &  $n^{1/26-\epsilon}$ \cite{ChuzhoyGKT07} & $n^{1/2-\epsilon}$\\
$k$-cycle packing & $O(\min(k, n^{1/2}))$ & $\Omega(k)$ \cite{GuruswamiL14} & $O(\min(k, n^{1/2-\epsilon}))$\\
\hline
MinCon($CNF$, $CNF$), & & \\
MinCon($DNF$, $DNF$), & $O(n)$ & $n^{1-\epsilon}$ & $n^{1-\epsilon}$ \\
MinCon(Halfspace,Halfspace) & & & {\small (Alternative proof)} \\
\hline
\end{tabular}
\caption{Summary of our hardness results.}\label{table:results}
\end{table}

\subsubsection{Minimum Consistent DFA and Proper PAC-Learning DFAs} In the {\em minimum consistent deterministic finite automaton} (DFA) problem, denoted by {\sc MinCon($DFA$, $DFA$)}, we are given two sets $\pset$ and $\nset$ of {\em positive} and {\em negative sample} strings in $\set{0,1}^*$. We let the {\em sample size}, denoted by $n$, be the total number of bits in all sample strings. Our goal is to construct a DFA $M$ (see Section~\ref{sec:prelim} for a definition) of {\em minimum size} that is {\em consistent} with all strings in $\pset\cup\nset$. That is, $M$ accepts all positive strings $x\in \pset$ and rejects all negative strings $y\in \nset$. 

This problem can be easily approximated within $O(n)$. 
Due to its connections to PAC-learning automata and grammars (e.g.\ \cite{DelaHiguery10-book,Pitt89survey}), the problem has received a lot of attention from the late 70s to the early 90s. The NP-hardness of this problem was proved by Gold \cite{Gold78} and Angluin \cite{Angluin78}. Li and Vazirani \cite{LiV88} later provided the first hardness of approximation result of  $(9/8-\epsilon)$. 
This was greatly improved to $n^{1/14-\epsilon}$ by Pitt and Warmuth \cite{PittW93}.
%
%
Our first result is a tight $n^{1-\epsilon}$ hardness for this problem, improving \cite{PittW93}.
In fact, our hardness result holds even when we allow an algorithm to compare its result to the optimal {\em acyclic} DFA (ADFA), which is larger than the optimal DFA. This problem is called {\sc MinCon($ADFA$, $DFA$)}; see Section~\ref{sec:prelim} for detailed definitions.

\begin{theorem}\label{thm:hardness-mincon-dfa}
Given a pair of positive and negative samples $(\pset,\nset)$ of size $n$ 
where each sample has length $O(\log{n})$, 
for any constant $\epsilon>0$, 
it is NP-hard to distinguish between the following two cases of MinCon(ADFA,DFA):
\begin{itemize}
\item {\sc Yes-Instance:} There is an ADFA of size $n^{\epsilon}$ consistent with $(\pset,\nset)$.
\item {\sc No-Instance:} Any DFA that is consistent with $(\pset,\nset)$ has size at least $n^{1-\epsilon}$. 
\end{itemize}
In particular, it is NP-hard to approximate the minimum consistent DFA problem to within a factor of $n^{1-\epsilon}$.
\end{theorem}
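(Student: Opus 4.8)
The plan is to apply the pre-reduction graph product idea to a known NP-hard graph problem with a large multiplicative gap under the lexicographic product, and to design a reduction $R$ that turns a graph $G$ into a sample pair $(\pset,\nset)$ so that the minimum consistent DFA size is governed by $\alpha(G)$ (or by the chromatic number / a covering-type parameter of $G$). Concretely, I would first fix a starting hard instance: by the classical NP-hardness of graph coloring (or of gap independent set / clique via \cite{BermanS92}), it is NP-hard to distinguish graphs $G$ on $N$ vertices with $\alpha(G)\ge N/c$ from those with $\alpha(G)\le N^{\epsilon'}$ — equivalently a constant gap that we will amplify. Then, for each vertex $v\in V(G)$ I would encode $v$ by a short binary string $\sigma_v$ of length $O(\log N)$, and build positive samples out of the $\sigma_v$ and negative samples out of strings coding the edges $\sigma_u\#\sigma_v$ (with a separator symbol folded into the binary alphabet), so that a DFA consistent with $(\pset,\nset)$ must, in a precise sense, "separate" adjacent vertices while it may "merge" an independent set of vertices into a single state. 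The intended correspondence is: a consistent DFA with few states $\Leftrightarrow$ a partition of $V(G)$ into few independent sets (a proper coloring), so $\mathrm{OPT}_{\mathrm{DFA}}(R[G])\approx \chi(G)$ (up to the $O(\log N)$ overhead needed to read a string), and on the YES side an ADFA realizing this coloring has size $\mathrm{poly}(\log N)\cdot\chi(G)$.

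Next I would perform the pre-reduction product: instead of reducing $G$ directly, reduce $G^k := G\cdot K_N \cdot \ldots$ — more precisely the $k$-fold lexicographic power, or $G$ lexicographically composed with itself $k$ times — and feed $G^{(k)}$ into $R$. The sample size becomes $n = N^{\Theta(k)}$ because each vertex of $G^{(k)}$ is a $k$-tuple encoded by a string of length $O(k\log N)$, and there are $N^k$ such vertices together with the edge strings. The key multiplicativity facts to invoke are $\alpha(G^{(k)}) = \alpha(G)^k$ and, on the coloring side, $\chi(G^{(k)})\le \chi(G)^k$ while $\chi(G^{(k)})\ge N^k/\alpha(G^{(k)}) = (N/\alpha(G))^k$. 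Plugging the two cases of the starting gap: in the YES case $\chi(G)\le$ small, so $\mathrm{OPT}_{\mathrm{DFA}}(R[G^{(k)}])\le \chi(G)^k\cdot\mathrm{poly}(k\log N)\le n^{\epsilon}$; in the NO case $\alpha(G)$ is tiny, so $\chi(G^{(k)})\ge (N/N^{\epsilon'})^k = N^{(1-\epsilon')k}$, which against $n=N^{O(k)}$ yields a lower bound of $n^{1-\epsilon}$ for every consistent DFA (not just acyclic ones — this is where we must be careful, since the lower bound has to hold for arbitrary DFAs while the upper bound is witnessed by an ADFA). Choosing $k$ a large enough constant depending on $\epsilon,\epsilon',c$ closes the gap to $n^{1-\epsilon}$ versus $n^{\epsilon}$, and since strings have length $O(k\log N)=O(\log n)$ we get the stated "$O(\log n)$ sample length" clause for free.

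I expect the main obstacle to be the NO-side combinatorial lemma: showing that \emph{any} consistent DFA (allowed to be cyclic, which is strictly more powerful and can be much smaller) with $s$ states forces $\chi(G)$ — or at least $\chi(G^{(k)})$ — to be at most roughly $s$ (times a $\mathrm{poly}\log$ factor). The natural argument is to run the DFA on all the vertex-encoding strings $\sigma_v$ and look at which state it lands in; strings landing in the same final state cannot contain both $\sigma_u$ and $\sigma_v$ for an edge $uv$ "in a way the negative samples detect," which requires the edge-encoding gadget to be robust against the DFA using intermediate states and loops to cheat. Getting this gadget right — so that it certifies non-adjacency from the state assignment alone, uses only $O(\log N)$-length strings, and does not blow up the sample size beyond $N^{O(k)}$ — is the technical heart; everything else is the (now routine, thanks to multiplicativity of $\alpha$ and $\chi$ under $\cdot$) amplification bookkeeping. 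A secondary point to handle is ensuring the YES-instance ADFA genuinely has size $n^{\epsilon}$ rather than $n^{\epsilon}\cdot\mathrm{poly}\log n$; this is absorbed by slightly decreasing $\epsilon$ and enlarging $k$.
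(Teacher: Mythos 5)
Your high-level architecture matches the paper's: start from the Feige--Kilian coloring gap, encode vertices of a lexicographic power $G^k$ by short binary strings, let positive samples encode vertices and negative samples encode conflicts, argue that the states of a consistent DFA induce color classes of $G^k$ (lower bound), and build a small ADFA from an optimal coloring (upper bound). But there is a concrete quantitative gap that your plan does not resolve. You feed $G^k$ into a \emph{generic} reduction $R$ whose negative samples enumerate the edges of the input graph. The lexicographic power $G^k$ is dense, with $|E(G^k)| = \Theta(N^{2k})$ for $N=|V(G)|$, so the sample size is $n \approx N^{2k}$ rather than $N^{k(1+o(1))}$. Then the \ni\ lower bound $\chi(G^k) \gtrsim N^{(1-\epsilon')k}$ translates only to $n^{(1-\epsilon')/2}$, i.e.\ $n^{1/2-\epsilon}$, not $n^{1-\epsilon}$. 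Your step ``$\chi(G^{(k)})\ge N^{(1-\epsilon')k}$ against $n=N^{O(k)}$ yields $n^{1-\epsilon}$'' silently assumes the exponent constant inside $N^{O(k)}$ is $1+o(1)$, which this construction does not give. Indeed, this quadratic blowup is exactly why the paper's Section~\ref{sec:min-NFA} reduction only achieves $n^{1/2-\epsilon}$, and the paper explicitly introduces a separate reduction to break past it. You flag ``does not blow up the sample size beyond $N^{O(k)}$'' as the technical heart, but you give no device for getting the exponent down near $k$ instead of $2k$, and that device \emph{is} the theorem.

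The missing idea is that the reduction must be defined on $k$-fold products, not on arbitrary graphs, so that it can exploit the product structure to avoid listing all edges of $G^k$. Concretely (Section~\ref{sec:min-DFA}), the positive samples are $\langle u_1\rangle\cdots\langle u_k\rangle\,1\,\langle u_1\rangle\cdots\langle u_i\rangle$ for each $\vec{u}\in V(G^k)$ and $1\le i\le k$, and the negative samples are $\langle u_1\rangle\cdots\langle u_k\rangle\,1\,\langle u_1\rangle\cdots\langle u_{i-1}\rangle\langle v\rangle$ ranging only over $v$ with $u_i v\in E(G)$. After the separator, only a \emph{prefix} of a conflicting tuple appears; this suffices because a lexicographic edge $\vec{u}\vec{w}$ is witnessed by the first coordinate $i$ where $u_i w_i\in E(G)$ and the two tuples agree before $i$, so the DFA state reached on $\langle\vec{u}\rangle$ can be compared against the state on $\langle\vec{w}\rangle$ using a prefix of $\vec{u}$ alone. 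This brings $|\nset|$ down to $O(kN^{k+1})$, near-linear in $|V(G^k)|$, and that is precisely what pushes the hardness from $n^{1/2-\epsilon}$ to the claimed $n^{1-\epsilon}$. (Incidentally, your worry about cyclic DFAs ``cheating'' is not where the difficulty lies: the lower-bound argument is agnostic to cycles since it only examines the state reached on each tuple encoding; determinism is what makes the color classes a partition.)
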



The main motivation of this problem is its connection to the notion of {\em properly PAC-learning} DFAs. It is one of the most basic problems in the area of proper PAC-learning~\cite{DelaHiguery10-book,Pitt89survey,PittW93}. 
Roughly speaking, the problem is to learn an unknown DFA $M$ from given random samples, where a learner is asked to output (based on such random samples) a DFA $M'$ that closely approximates $M$ (see, e.g., \cite{Feldman08} for details). 
The main question is whether DFA is properly PAC-learnable. 

This question was the main motivation behind \cite{PittW93}; however, the $n^{1/14-\epsilon}$ hardness in \cite{PittW93} was not strong enough to prove this. 
Kearns and Valiant \cite{KearnsV94} showed that a proper PAC-learning of DFAs is not possible if we assume a cryptographic assumption stronger than $P\neq NP$. 
In fact, their result implies that even {\em improperly} PAC-learning DFAs (i.e., the output does not have to be a DFA) is impossible. 
Very recently, Daniely~et~al.~\cite{DanielyLS14} obtained a similar result by assuming a (fairly strong) average-case complexity assumption generalizing Feige's assumption~\cite{Feige02}. 
%


The question whether the cryptographic assumption could be replaced by the $RP\neq NP$  assumption (which would be the weakest assumption possible) was asked 25 years ago in \cite{Pitt89survey,PittW93}. In particular, the following is the first open problem in \cite{Pitt89survey}: {\em (i) Can it be shown that DFAs are not properly PAC-learnable based only on the assumption that $RP \neq NP$? (ii) Stronger still, can the improper learnability result of \cite{KearnsV94} be strengthened by replacing the cryptographic assumptions with only the assumption that $RP \neq NP$?}



Applebaum, Barak and Xiao \cite{ApplebaumBX08} showed that proving lower bounds for improper learning using many standard ways of reductions from NP-hard problems 
%
%
will not work unless the polynomial hierarchy collapses, suggesting that an answer to the second question is likely to be negative. For the first question, some hardnesses of proper PAC-learning assuming $RP\neq NP$ were already known at the time (e.g. \cite{PittV88}) and there are many more recent results (see, e.g., \cite{Feldman08} and references therein). 
Despite this, the basic problem of learning DFAs (originally asked in the above question) has remained open. 
Theorem~\ref{thm:hardness-mincon-dfa} together with a result of Board and Pitt~\cite{BoardP92} immediately resolve this problem.


\begin{corollary}\label{cor:pac-learning-dfa}
Unless $\mathrm{NP}=\mathrm{RP}$, the class of DFAs is not properly PAC-learnable.
\end{corollary}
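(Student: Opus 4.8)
The plan is to derive Corollary~\ref{cor:pac-learning-dfa} purely as a black-box consequence of Theorem~\ref{thm:hardness-mincon-dfa} and the classical ``Occam''-style equivalence between proper PAC-learnability and the existence of an efficient approximation algorithm for the minimum consistent hypothesis problem, due to Board and Pitt~\cite{BoardP92} (a DFA-specific sharpening of the Blumer--Ehrenfeucht--Haussler--Warmuth paradigm). Concretely, the Board--Pitt result states that if the class of DFAs were properly PAC-learnable (even by a randomized algorithm with two-sided error), then there would be an $\mathrm{RP}$ algorithm that, given a consistent sample $(\pset,\nset)$ for which some DFA of size $s$ exists, outputs a consistent DFA of size $\poly(s,n)$ — i.e., a randomized polynomial-size ``approximator'' for \textsc{MinCon}($DFA$,$DFA$) with a polynomial (indeed, here we only need \emph{subpolynomial in the gap}, since Theorem~\ref{thm:hardness-mincon-dfa} gives an $n^{1-\epsilon}$ gap) approximation guarantee. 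So the first step is simply to quote this reduction in the contrapositive.

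The second step is to plug in the gap from Theorem~\ref{thm:hardness-mincon-dfa}. Suppose for contradiction that DFAs are properly PAC-learnable, so by Step~1 we have an $\mathrm{RP}$ machine $A$ that, on any sample of size $n$ admitting a consistent DFA of size $s$, with probability at least $1/2$ outputs a consistent DFA of size at most $p(s,n)$ for some fixed polynomial $p$. Now take an instance $(\pset,\nset)$ produced by Theorem~\ref{thm:hardness-mincon-dfa} with the constant $\epsilon$ chosen small enough that $n^{1-\epsilon} > p(n^{\epsilon},n)$ for all large $n$ (possible because $p$ is a fixed polynomial, so $p(n^{\epsilon},n) = n^{O(\epsilon)+O(1)\cdot\text{(degree)}}$ grows strictly slower than $n^{1-\epsilon}$ once $\epsilon$ is a sufficiently small constant depending only on $\deg p$). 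On a \textsc{Yes-Instance} there is an ADFA, hence a DFA, of size $n^{\epsilon}$, so $A$ outputs a consistent DFA of size $\le p(n^\epsilon,n) < n^{1-\epsilon}$ with probability $\ge 1/2$; on a \textsc{No-Instance} every consistent DFA has size $\ge n^{1-\epsilon}$, so $A$ can never output such a small consistent DFA. Thus running $A$ and checking whether its output is a consistent DFA of size $< n^{1-\epsilon}$ is a one-sided-error randomized polynomial-time algorithm distinguishing the two cases, giving $\mathrm{RP}$-membership of an $\mathrm{NP}$-hard promise problem, hence $\mathrm{NP}=\mathrm{RP}$.

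The third, essentially bookkeeping, step is to make sure the quantifiers line up: Theorem~\ref{thm:hardness-mincon-dfa} is stated for ``any constant $\epsilon>0$'', so we are free to fix the particular small $\epsilon$ dictated by $\deg p$ \emph{after} learning what $p$ is from the hypothetical learner; and the Board--Pitt reduction must be invoked in a way that turns a \emph{proper} learner (one whose hypotheses are DFAs) into an approximation algorithm whose output size is compared against the \emph{DFA} optimum — which is exactly the form of the \textsc{No-Instance} guarantee in Theorem~\ref{thm:hardness-mincon-dfa} (the \textsc{Yes-Instance} side is even stronger, giving a small ADFA, which a fortiori is a small DFA, so no loss there). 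One should also note that the sample length $O(\log n)$ in Theorem~\ref{thm:hardness-mincon-dfa} is harmless: the Board--Pitt construction only needs the sample to be polynomial-size, which it is.

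The main obstacle — or rather, the only real content beyond Theorem~\ref{thm:hardness-mincon-dfa} — is verifying that the Board--Pitt equivalence is being applied with the right ``direction'' and the right robustness: namely that a polynomial \emph{blow-up} in hypothesis size (which is all a generic proper PAC learner guarantees via an Occam-type argument) is still defeated by an $n^{1-\epsilon}$ hardness gap. This is precisely why the improvement from the $n^{1/14-\epsilon}$ gap of~\cite{PittW93} to the near-linear gap of Theorem~\ref{thm:hardness-mincon-dfa} is what unlocks the corollary: with a gap of only $n^{1/14-\epsilon}$ one cannot rule out a learner whose Occam compression is a polynomial of degree larger than $14$, whereas an $n^{1-\epsilon}$ gap dominates \emph{every} fixed polynomial. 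Everything else is a routine contrapositive/padding argument, so the proof in the paper should be short.
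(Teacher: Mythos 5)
Your approach is the same as the paper's: invoke Board--Pitt to convert a hypothetical proper PAC-learner into a randomized Occam algorithm, plug this into the $n^{1-\epsilon}$ gap from Theorem~\ref{thm:hardness-mincon-dfa}, and conclude $\mathrm{NP}\subseteq\mathrm{RP}$ by contradiction. The paper does exactly this: it defines an Occam algorithm for DFA, cites Board--Pitt (Theorem~\ref{thm:DFA-PAC-larning-implies-Occam}) for the implication ``PAC-learnable $\Rightarrow$ randomized Occam algorithm exists,'' and then rules out the latter in Theorem~\ref{thm:no-occam-dfa} by setting $\epsilon = (1-\alpha)/(2k+1)$.

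However, your step~2 calculation contains a genuine error. You assert that the Occam algorithm outputs a consistent DFA of size at most $p(s,n)$ for a fixed polynomial $p$, where $s$ is the target hypothesis size and $n$ is the sample size, and claim $p(n^\epsilon,n) < n^{1-\epsilon}$ for sufficiently small $\epsilon$. This is false as stated: a monomial $p(s,n)=n^b$ with $b\geq 1$ (e.g.\ $p(s,n)=n$ or $p(s,n)=s\cdot n$) already exceeds $n^{1-\epsilon}$ for every $\epsilon>0$, so no choice of $\epsilon$ rescues the argument. The indispensable ingredient in the Occam framework is that the output size is bounded by $s_{\mathrm{hyp}}^{k}\, m^{k}\, N^{\alpha}$ with $\alpha < 1$ strictly --- i.e., the dependence on the \emph{number of samples} $N$ is sublinear (this is precisely what makes Occam compression nontrivial and what the BEHW/Board--Pitt equivalence delivers). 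The paper's proof makes this explicit by choosing $\epsilon = (1-\alpha)/(2k+1)$, so that $\alpha + 2k\epsilon \leq 1 - \epsilon$; your ``polynomial blow-up in hypothesis size, defeated by any fixed polynomial'' framing elides exactly the $\alpha < 1$ constraint on which the whole contradiction rests. To repair the argument, replace ``$\poly(s,n)$'' with the actual Occam guarantee $s^{k} m^{k} N^{\alpha}$, $\alpha<1$, note that $m = O(\log N)$ so $m^k = N^{o(1)}$, and then set $\epsilon$ as the paper does.
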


We also note an amusing connection between this type of result and Chomsky's ``Poverty of the Stimulus Argument'', as noted by Aaronson \cite{Aaronson_lecture08}: ``Let's say I give you a list of $n$-bit strings, and I tell you that there's some nondeterministic finite automaton $M$, with much fewer than $n$ states, such that each string was produced by following a path in $M$. Given that information, can you reconstruct $M$ (probably and approximately)? It's been proven that if you can, then you can also break RSA!''
Our Corollary~\ref{cor:pac-learning-dfa} implies that for the case of deterministic finite automaton, being able to reconstruct $M$ will imply not only that one can break RSA but also solve, for instance, traveling salesman problem (TSP) probabilistically.

\subsubsection{Edge-Disjoint Paths on DAGs}\label{sec:intro edp}

In the edge-disjoint paths problem (\EDP) problem, we are given a graph $G = (V,E)$ (which could be directed or undirected) and $k$ source-sink pairs $s_1t_1,s_2t_2, \ldots,s_kt_k$ (a pair can occur multiple times). The objective is to connect as many pairs as possible via edge-disjoint paths. 
%
%
%
Throughout, we let $n$ and $m$ be the number of vertices and edges in $G$, respectively. 
Approximating \EDP has been extensively studied.
It is one of the major challenges in the field of approximation algorithms. 
The problem has received significant attention from many groups of researchers, attacking the problem from many angles and considering a few variants and special cases (see, e.g.,~\cite{RobertsonS95b,Chuzhoy12a,ChuzhoyL12,ChekuriKS09,ChekuriKS05,Kleinberg05,KleinbergT98,KawarabayashiK10} and references therein).  
%

In directed graphs, \EDP can be approximated within a factor of $O(\min(m^{1/2}, n^{2/3}))$ \cite{Kleinberg96,ChekuriK07,VaradarajanV04}. The $O(m^{1/2})$ factor is {\em tight} on sparse graphs since directed \EDP is NP-hard to approximate within a factor of $n^{1/2-\epsilon}$, for any $\epsilon>0$ \cite{GuruswamiKRSY03}. 
In contrast to the directed case, undirected \EDP is much less understood: The approximation factor for this case is $O(n^{1/2})$ \cite{ChekuriKS06} with a matching integrality gap of $\Omega(n^{1/2})$ for its natural LP relaxation, suggesting an $n^{1/2-\epsilon}$ hardness. 
Despite these facts, we only know a $\log^{1/2-\epsilon} {n}$ hardness of approximation assuming $\mathrm{NP} \not \subseteq  \mathrm{ZPTIME}(n^{\polylog n})$.
Even in special cases such as planar graphs (or, even simpler, brick-wall graphs, a very structured subclass of planar graphs), it is still open whether undirected EDP admits an $o(n^{1/2})$ approximation algorithm.  
This obscure state of the art made undirected \EDP one of the  most important, intriguing open problems in graph routing. (Table~\ref{table:edp} summarizes the current status of \EDP.)



One problem that may help in understanding undirected \EDP is perhaps \EDP on {\em directed acyclic graphs} (DAGs). 
This case is interesting because (i) its complexity seems to lie somewhere between the directed and undirected cases, (ii) it shares some similar statuses and structures with undirected \EDP, and (iii) it has close connections to directed cycle packing~\cite{KrivelevichNSYY07} (i.e. hard instances for EDP on DAGs are used as a gadget in constructing the hard instance for directed cycle packing). 
%
%
In particular, on the upper bound side, the technique in \cite{ChekuriKS06} gives an $O(n^{1/2}\poly\log n)$ upper bound not only to undirected \EDP but also to \EDP on DAGs. 
%
%
Moreover, the integrality gap of $\Omega(n^{1/2})$ applies to both cases, suggesting a hardness of $n^{1/2-\epsilon}$ for them. 
However, previous hardness techniques for the case of general directed graphs \cite{GuruswamiKRSY03} completely fail to give a lower bound on both DAGs and undirected graphs\footnote{The result in \cite{GuruswamiKRSY03} crucially relies on the fact that \EDP with $2$ terminal pairs is hard on directed graphs. This is not true if the graph is a DAG or undirected.}.
%
%
On the other hand, subsequent techniques that were invented in \cite{AndrewsZ06,AndrewsCGKTZ10} to deal with undirected \EDP can be strengthened to prove the currently best hardness for DAGs~\cite{ChuzhoyGKT07}\footnote{Their result is in fact proved in a more general setting of \EDP with congestion $c$ for any $c\geq 1$}, which is $n^{1/26-\epsilon}$.  
%
%
%
These results suggest that the complexity of DAGs lies between undirected and directed graphs.  
In this paper, we show that our techniques give a hardness of $n^{1/2-\epsilon}$ for this case, thus completely settling its approximability status. 
Our result is formally stated in the following theorem.\parinya{Changed this paragraph to further avoid aggressiveness. Made points about cycle packing instead.} 

\begin{table}
\centering
\begin{tabular}{c|c|c|c}
\hline
Cases & Upper Bounds & Integrality Gap & Prev. Hardness\\ 
\hline
Undirected & $O(n^{1/2})$ \cite{ChekuriKS06} & $\Omega(n^{1/2})$ & $\log^{1/2-\epsilon} {n}$ \cite{AndrewsCGKTZ10} \\
DAGs & $\tilde O(n^{1/2})$  \cite{ChekuriKS06} & $\Omega(n^{1/2})$ & $n^{1/26-\epsilon}$ \cite{ChuzhoyGKT07}\\
Directed & $O(\min(m^{1/2}, n^{2/3}))$ \cite{Kleinberg96,ChekuriK07,VaradarajanV04} & $\Omega(n^{1/2})$ & $n^{1/2-\epsilon}$~\cite{GuruswamiKRSY03} \\
\hline
\end{tabular}
\caption{The current status of \EDP.}\label{table:edp}
\end{table}

\begin{theorem}\label{thm:hardness-EDP}
Given an instance of \EDP on DAGs, consisting of a graph $G=(V,E)$ on $n$ vertices and a source-sink pairs $(s_1,t_1),\ldots,(s_k,t_k)$,
for any $\epsilon>0$, 
it is NP-hard to distinguish between the following two cases:
\begin{itemize}
\item {\sc Yes-Instance:} There is a collection of edge disjoint paths in $G$ that connects $1/n^{\epsilon}$ fraction of the source-sink pairs. 
\item {\sc No-Instance:} Any collection of edge disjoint paths in $G$ connects at most $1/n^{1/2-\epsilon}$ fraction of the source-sink pairs.
\end{itemize}
In particular, it is NP-hard to approximate \EDP on DAGs to within a factor of $n^{1/2-\epsilon}$.
\end{theorem}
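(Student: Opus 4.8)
\medskip
\noindent\emph{Proof plan.} I would instantiate the pre-reduction graph product meta-theorem of Section~\ref{sec:meta}. For an \EDP instance $J$ write $\edp(J)$ for the maximum number of source-sink pairs that can be simultaneously joined by edge-disjoint paths. The target is a reduction $R$ sending a graph to an \EDP instance on a DAG --- one pair $(s_u,t_u)$ per vertex $u$ --- such that, for the lexicographic product $\cdot$ of~\eqref{eq:lex product definition},
\[
\edp\big(R[G\cdot H]\big)=\alpha(G)\,\alpha(H)\qquad\text{and}\qquad \big|V(R[G\cdot H])\big|\le\big|V(R[G])\big|\cdot\big|V(R[H])\big| .
\]
This is the ``$f(R[G*H])$'' bound that the meta-theorem turns into a hardness statement. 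The reduction is of exactly the kind that approach is built for: on its own it realizes only conflict patterns of a restricted, ``grid-routable'' kind whose independence number is polynomial-time computable, so a single application proves nothing; it is the lexicographic power that lifts a general hard independent-set instance into that restricted form without more than a polynomial blow-up.

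The construction would have two ingredients. First, a fixed \emph{crossing gadget}: for a small graph $G_0$ on $n_0$ vertices, a DAG $D_0=R[G_0]$ of size $n_0^{2+o(1)}$ with $n_0$ pairs laid out in a grid so that (i) every pair has a ``canonical'' monotone route, (ii) two pairs $u,v$ admit simultaneous disjoint routes iff $uv\notin E(G_0)$ --- i.e.\ the \emph{conflict graph} of $D_0$ is $G_0$ --- and, crucially, (iii) this is \emph{robust}: the index set of \emph{any} family of pairwise edge-disjoint routes in $D_0$ is independent in $G_0$. Second, a \emph{refinement} operation producing $R[G\cdot H]$ from $R[G]$ by replacing the sub-DAG carrying each pair $(s_u,t_u)$ with a private copy $R[H]^{(u)}$, whose $|V(H)|$ pairs become $\{(u,a):a\in V(H)\}$. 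Refinement preserves acyclicity, and pairs $(u,a),(v,b)$ are forced to share an edge exactly when $uv\in E(G)$ (inherited from the conflict of $u,v$ in $R[G]$) or $u=v$ and $ab\in E(H)$ (conflict inside $R[H]^{(u)}$); hence the conflict graph of $R[G\cdot H]$ is $G\cdot H$ and the displayed identity holds. Iterating $k$ times yields a DAG $R[G_0^{\,k}]$ on $m:=n_0^{\,k}$ pairs with $\edp(R[G_0^{\,k}])=\alpha(G_0)^{\,k}=\alpha(G_0^{\,k})$ and $N=m^{2+o(1)}$ vertices.

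To seed the hardness I would start from the \NP-hardness of approximating independent set (obtained by boosting the constant hardness of Max-2SAT by graph products in the style of Berman--Schnitger~\cite{BermanS92}), applied to instances of the product form $G_0^{\,k}$: it is \NP-hard to distinguish $\alpha(G_0^{\,k})\ge m^{1-\delta}$ from $\alpha(G_0^{\,k})\le m^{\delta}$. Translating through $\edp(R[G_0^{\,k}])=\alpha(G_0^{\,k})$ and $N=m^{2+o(1)}$, the \EDP instance $R[G_0^{\,k}]$ has a Yes-case routing a $\ge N^{-\delta/2-o(1)}$ fraction and a No-case routing a $\le N^{-1/2+\delta/2+o(1)}$ fraction of its $m$ pairs, which is precisely the claimed $n^{1/2-\epsilon}$ gap once $\delta$ is taken small and $k$ large enough. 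The same hard instances, fed into the cycle-packing gadget of Krivelevich~et~al.~\cite{KrivelevichNSYY07}, give the $\Omega(n^{1/2-\epsilon})$-hardness of packing vertex-disjoint $k$-cycles for $k\ge\sqrt{n}$, matching the $O(\min(k,\sqrt n))$ upper bound and complementing~\cite{GuruswamiL14}.

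Completeness is easy: an independent set $S$ in $G_0^{\,k}$ routes its $|S|$ pairs along their canonical, mutually non-crossing routes. The main obstacle is soundness --- showing that \emph{every} family of $t$ pairwise edge-disjoint routes in $R[G_0^{\,k}]$ certifies $\alpha(G_0^{\,k})\ge t$, i.e.\ that the conflict graph is genuinely all of $G_0^{\,k}$ and no path can ``cheat'' by detouring through a sub-copy it does not belong to. I would prove this by first establishing robustness (iii) of the base gadget (a short argument in the topological order of the grid: two routes whose indices are adjacent in $G_0$ necessarily cross and hence share an edge, however they are routed), and then inducting over the $k$ levels of refinement: the routed pairs, grouped by top-level index in $V(G_0)$, must occupy an independent set of top-level copies, and recursively within each occupied copy. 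Designing $D_0$ so that there are no unintended routes at all, and bounding the edges a path can consume while passing through a foreign sub-copy, is where the real work lies.
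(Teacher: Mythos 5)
Your plan has a genuine gap at its core: you posit a base gadget $D_0 = R[G_0]$ with the \emph{robustness} property (iii) that \emph{every} family of pairwise edge-disjoint routes in $D_0$ indexes an independent set in $G_0$, and from this you derive the exact identity $\edp(R[G\cdot H]) = \alpha(G)\,\alpha(H)$. Neither holds for the grid/switching-box gadget $\wall[G]$ that the paper (and, reading between the lines, you) are using. The paper explicitly exhibits a graph $G$ on $n$ vertices with $\alpha(G)=2$ but $\edp(\wall[G]) = \Omega(n)$ (Appendix~\ref{sec: bad}): paths are free to leave their ``own'' column and row, weave through the non-switching crossings belonging to other indices, and reach a foreign sink, so two routes whose indices are adjacent in $G$ can avoid ever sharing an edge. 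Your ``short argument in the topological order of the grid'' that adjacent indices must cross is therefore false as stated, and the whole point of the pre-reduction graph-product technique is precisely to cope with this failure rather than to design it away.

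What the paper actually proves is the much weaker inequality
\[
\tildeEDP(\wall[G\cdot H]) \;\le\; 3\,|V(G)|^2 \;+\; \alpha(G)\cdot\tildeEDP(\wall[H]),
\]
where $\tildeEDP$ counts ``orderly feasible'' path systems (a relaxation of true edge-disjoint routings), and the additive $O(|V(G)|^2)$ term accounts for exactly the cheating paths you assumed away. Establishing this requires the geometric bookkeeping in Section~\ref{sec:proof edp ineq}: splitting a solution into semi-canonical paths (which contribute the $\alpha(G)\cdot\tildeEDP(\wall[H])$ term, by looking inside the diagonal box $B(i,i)$) and the remaining non-semi-canonical paths, and then bounding the latter by $3|V(G)|^2$ using a sweep over the switching boxes and a ``number of reversals'' potential argument on the induced configurations. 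Only after taking the $\ell$-fold lexicographic power does the additive error become negligible relative to $\alpha(G)^\ell$, which is what the meta-theorem (Theorem~\ref{thm: meta alpha}) formalizes. You do gesture at ``bounding the edges a path can consume while passing through a foreign sub-copy'' being ``where the real work lies,'' but the structure of your proof presumes that work has the answer ``zero,'' and that presumption is false; the inductive soundness you sketch (route families certify independent sets level by level) does not go through without the additive-error version of the projection property and the orderly-feasible relaxation.
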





\subsubsection{Other Results}\label{sec:other results}

\paragraph{Minimum Consistent NFA.}
Our techniques also allow us to prove a hardness result for the {\em minimum consistent NFA} problem as stated formally in the following theorem.

\begin{theorem}\label{thm:hardness-mincon-nfa}
Given a pair of positive and negative samples $(\pset, \nset)$ of size $n$ 
where each sample has length $O(\log{n})$, 
for any constant $\epsilon>0$, 
it is NP-hard to distinguish between the following two cases of MinCon(ADFA,NFA):
\begin{itemize}
\item {\sc Yes-Instance:} There is an ADFA of size $n^{\epsilon}$ consistent with $(\pset, \nset)$.
\item {\sc No-Instance:} Any NFA that is consistent with $(\pset, \nset)$ has size at least $n^{1/2-\epsilon}$. 
\end{itemize}
In particular, it is NP-hard to approximate the minimum consistent NFA problem to within a factor of $n^{1/2-\epsilon}$.
\end{theorem}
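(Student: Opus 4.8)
The plan is to follow exactly the same pre-reduction graph product strategy that underlies Theorem~\ref{thm:hardness-mincon-dfa}, but to replace the ``separating DFA'' analysis by a ``separating NFA'' analysis; the weaker $n^{1/2-\epsilon}$ bound (rather than $n^{1-\epsilon}$) will come from the fact that NFAs are exponentially more succinct, so a hard no-instance can only force an NFA of size roughly the \emph{square root} of the number of states forced on a DFA. Concretely, I would start from a graph problem that is NP-hard to approximate within $n^{1-\epsilon}$ — the natural choice is \clique{} / independent set, or equivalently minimum graph coloring — take its lexicographic power $I = G \cdot G \cdot \cdots$ (the $k$-fold lexicographic product) to amplify the gap, and then apply the \emph{same} reduction $R$ that turns a graph into a labelled-sample instance $(\pset,\nset)$ that Theorem~\ref{thm:hardness-mincon-dfa} uses. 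The yes-case is unchanged: when $I$ has a small witness (a small chromatic number, say), one builds from it an \emph{acyclic} DFA of size $n^\epsilon$ consistent with $(\pset,\nset)$, exactly as in the ADFA yes-instance of Theorem~\ref{thm:hardness-mincon-dfa}. So the entire content is the no-instance analysis for NFAs.

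For the no-instance, the key step is a combinatorial lemma of the shape: \emph{if $N$ is an NFA with $s$ states that is consistent with $R[I]$, then $f(I) \le g(s)$ for the appropriate graph parameter $f$ (independence number, or a covering/coloring-type quantity) and some function $g$}. For DFAs the corresponding bound is roughly $f(I) \le s$ (each state ``captures'' one part of an independent set / one colour class), which yields the $n^{1-\epsilon}$ hardness. For NFAs the right statement should be $f(I) \lesssim s^2$ (or more precisely, an independent set / clique of size $t$ in $I$ forces $\Omega(\sqrt{t})$ states in any consistent NFA): this is the standard fooling-set / biclique-cover phenomenon — an NFA with $s$ states can ``shatter'' at most $O(s^2)$ pairwise-distinguishable behaviours along a computation path of fixed length, because a pair of strings is separated only if \emph{some} state is reachable on one and not the other, and there are only $\binom{s}{2}$ such state-pairs. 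I would phrase this via a fooling-set argument on the $O(\log n)$-length sample strings: the sample strings of the hard instance encode ``coordinates'' of $I$, and any NFA consistent with all of them must, on each coordinate, have a distinct pair-of-states signature, forcing $s \ge \sqrt{f(I)}/n^{o(1)}$.

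Putting the pieces together: choosing the lexicographic power $k$ large enough, $f(I) \ge N^{1-\epsilon'}$ in the yes-case and $f(I) \le N^{\epsilon'}$ in the no-case where $N$ is the instance size; the ADFA yes-witness has size $N^{\epsilon}$, while the NFA no-bound gives $s \ge \sqrt{N^{1-\epsilon'}} = N^{1/2 - \epsilon'/2}$, and after renaming $n$ for the sample size $N$ and absorbing $n^{o(1)}$ factors into $\epsilon$, one obtains exactly the claimed gap: ADFA of size $n^\epsilon$ in the yes-instance versus every consistent NFA of size $\ge n^{1/2-\epsilon}$ in the no-instance, and each sample has length $O(\log n)$ since it is a coordinate description in the $k$-fold product of a constant-size base graph. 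The main obstacle I anticipate is the no-instance NFA lower bound: unlike DFAs, NFAs do not admit a clean ``Myhill–Nerode''-style state-counting, so one must carefully set up a fooling set of string pairs (from the structure of $R[G\cdot G\cdots]$) of size $\Omega(f(I))$ and argue that an $s$-state NFA can resolve at most $O(s^2)$ of them — getting the exponent on $s$ exactly $2$ (not worse) is where the argument has to be tight, and it is precisely what caps the hardness at $n^{1/2-\epsilon}$ rather than $n^{1-\epsilon}$.
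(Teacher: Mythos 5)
Your proposal misdiagnoses \emph{why} the exponent is $1/2$, and this leads you to plan the wrong key lemma. In the paper, the NFA lower bound is \emph{linear}, not square-root: Lemma~\ref{lem:independent-states} and Corollary~\ref{cor: soundness product} show that \emph{any} NFA consistent with $R[G]$ has at least $\chi(G)$ states — the argument is exactly the fooling-set-style covering-by-independent-sets argument you hint at, and it gives $\opt_{NFA}(R[G]) \geq \chi(G)$, not $\geq \sqrt{\chi(G)}$. (Standard fooling-set bounds for NFAs are linear in the fooling-set size, not its square root; the ``$O(s^2)$ shatterable pairs'' phenomenon does not give up a square root here.) The $1/2$ in the theorem instead comes from the \emph{size of the reduction}: $R[G]$ has one negative sample per edge of $G$, so $|R[G]| = \Theta(|V(G)|^2)$, and the meta-theorem for minimization problems (Theorem~\ref{thm: meta chi}) with $d=2$ converts a $|V(G)|^{1-\epsilon}$ gap in $\chi$ into an $N^{1/2-\epsilon}$ gap in the sample size $N$.

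This also means your plan to ``apply the \emph{same} reduction $R$ that Theorem~\ref{thm:hardness-mincon-dfa} uses'' won't work: that theorem uses the \emph{compact}, near-linear-size reduction $R_{1,k}$ of Section~\ref{sec:min-DFA}, whose no-instance analysis (Section~\ref{sec:lower-bound-tight-DFA}) builds a coloring from the color classes $C_q = \{\vec u : \delta^*(q_0,\enc{\vec u}) = q\}$ and crucially uses determinism (each $\vec u$ lands in a single state). That argument simply does not go through for NFAs, which is exactly why the paper keeps the coarser quadratic reduction for the NFA theorem and only upgrades to $R_{1,k}$ for DFAs. If you want to pursue your route — the compact reduction plus a genuinely new $\Omega(\sqrt{\chi(H)})$ NFA lower bound — you would need to actually construct and analyze a fooling set for $R_{1,k}[G^k]$, which is precisely the hard step you defer; and as noted, getting a $\sqrt{\cdot}$ loss from the fooling-set framework is not automatic, since fooling sets bound NFA size from below linearly. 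The paper's proof is substantially simpler than what you outline: a linear NFA lower bound for the quadratic reduction, plus the $\chi$-projection ADFA upper bound (Lemma~\ref{lem:dfa-product}, Lemma~\ref{lem: base}), plugged directly into the meta-theorem.
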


This improves upon the $n^{1/14 - \epsilon}$ hardness of Pitt and Warmuth~\cite{PittW93}. 
We note that this hardness result is not strong enough to imply a PAC-learning lower bound for NFAs. Such hardness was already known based on some cryptographic or average-case complexity assumptions \cite{KearnsV94,DanielyLS14}. We think it is an interesting open problem to remove these assumptions as we did for the case of learning DFAs.


\paragraph{$k$-Cycle Packing.} 
Our reduction for \EDP can be slightly modified to obtain hardness results for $k$-Cycle Packing, when $k$ is large. 
In the {\em $k$-cycle packing} problem, given an input graph $G$, one wants to pack as many disjoint cycles as possible into the graph while we are only interested in cycles of length at most $k$.  An $O(\min(k, n^{1/2}))$-approximation algorithm for this problem can be easily obtained by modifying the algorithm of Krivelevich~et~al.~\cite{KrivelevichNSYY07}). Very recently, Guruswami and Lee \cite{GuruswamiL14} obtained a hardness of $\Omega(k)$, assuming the Unique Game Conjecture, when $k$ is a constant. This matches the upper bound of Krivelevich~et~al. for small $k$. In this paper, we compliment the result of Guruswami and Lee by showing a hardness of $n^{1/2-\epsilon}$ for some $k\geq n^{1/2}$, matching the upper bound of Krivelevich~et~al. for the case of large $k$.  

\begin{theorem} 
Given a directed graph $G$, for any $\epsilon >0$ and some $k\geq |V(G)|^{1/2}$, it is NP-hard to distinguish between the following cases:
\begin{itemize} 
\item There are at least $|V(G)|^{1/2-\epsilon}$ disjoint cycles of length $k$ in $G$.  

\item There are at most $|V(G)|^{\epsilon}$ disjoint cycles of length at most $2k-1$ in $G$. 
\end{itemize}
In particular, for some $k\geq n^{1/2}$, the $k$-cycle packing problem on $n$-vertex graphs is hard to approximate to within a factor of $n^{1/2-\epsilon}$.  
\end{theorem}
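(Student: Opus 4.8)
The plan is to transform the hard instance for \EDP on DAGs from Theorem~\ref{thm:hardness-EDP} into a directed cycle-packing instance, following the classical \EDP-to-cycle-packing reduction of Krivelevich~et~al.~\cite{KrivelevichNSYY07} but keeping careful track of cycle lengths. The key point is that if the underlying DAG is \emph{layered}, so that every source-to-sink path has one and the same length $L$, then ``closing up'' each source--sink pair by a return path of a fixed length $\ell$ converts a routing of $q$ pairs into $q$ disjoint cycles, each of length exactly $k:=L+\ell$; conversely every cycle that is \emph{short} (length at most $2k-1$) must arise in exactly this way, because a cycle that uses $j$ return paths necessarily has length exactly $jk$.

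Concretely, I would first observe that the instance underlying Theorem~\ref{thm:hardness-EDP} may be assumed to be a layered DAG: $V$ is partitioned into layers $V_0,\dots,V_L$, every arc goes from $V_i$ to $V_{i+1}$, every source $s_i\in V_0$ and every sink $t_i\in V_L$, and the number $p$ of pairs together with the depth $L$ are both $\Theta(N^{1/2})$, where $N=|V|$. This is the shape of the natural $\Omega(\sqrt n)$ integrality-gap instance and is what the construction behind Theorem~\ref{thm:hardness-EDP} yields (subdividing arcs, if necessary, equalizes path lengths without disturbing the $\Theta(\sqrt N)$ balance among $p$, $L$, and $N$). I would also use that the gap of Theorem~\ref{thm:hardness-EDP} is robust: in the \yi the routing paths can be taken vertex-disjoint, while in the \ni even the number of \emph{edge}-disjoint paths is bounded — both facts are immediate from the pre-reduction-product construction, a \yi solution coming from an independent set (hence vertex-disjoint) and the \ni bound being proved via a product inequality that already rules out edge-disjoint routings. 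Now build $G'$ by applying the standard vertex-splitting $v\mapsto (v^{\mathrm{in}}\to v^{\mathrm{out}})$, which makes ``disjoint paths'' literally mean ``vertex-disjoint paths'', and then, for each pair $(s_i,t_i)$, attaching a fresh internally disjoint return path from $t_i$ to $s_i$ of length $\ell$; set $k:=L+\ell$, choosing $\ell=\Theta(\sqrt N)$ so that $k\ge|V(G')|^{1/2}$ (possible since $|V(G')|=\Theta(N)$).

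For correctness: the split DAG is acyclic, so every cycle of $G'$ uses at least one return path, and between two consecutive return-path endpoints a cycle traverses a source-to-sink path of the layered DAG, whose length is exactly $L$; hence a cycle using $j$ return paths has length exactly $j(L+\ell)=jk$. Therefore every cycle of length at most $2k-1$ uses exactly one return path, i.e.\ it is one $s_i$--$t_i$ path of length $L$ closed by the length-$\ell$ return path, and so has length exactly $k$; moreover a family of vertex-disjoint such cycles is the same thing as a family of vertex-disjoint $s_i$--$t_i$ paths for pairwise distinct pairs (and by the vertex-splitting these correspond to node-disjoint, in particular edge-disjoint, routings in the original DAG). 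Feeding in the two cases of Theorem~\ref{thm:hardness-EDP}: in the \yi we obtain at least $p/N^{\epsilon}=\Theta(N^{1/2-\epsilon})$ vertex-disjoint cycles of length exactly $k$, while in the \ni there are at most $p/N^{1/2-\epsilon}=\Theta(N^{\epsilon})$ vertex-disjoint cycles of length at most $2k-1$. Since $|V(G')|=\Theta(N)$ and $k\ge|V(G')|^{1/2}$, a harmless rescaling of $\epsilon$ gives exactly the claimed gap between $|V(G')|^{1/2-\epsilon}$ and $|V(G')|^{\epsilon}$; intersecting this with the $O(\min(k,\sqrt n))$ algorithm obtained by modifying Krivelevich~et~al.~\cite{KrivelevichNSYY07} and with the $\tilde\Omega(k)$ lower bound of Guruswami and Lee~\cite{GuruswamiL14} for constant $k$ yields the matching $O(\min(k,\sqrt n))$ picture claimed after the theorem.

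The main obstacle I expect is the length bookkeeping rather than any deep combinatorics. One must ensure (i) that the \EDP instance can genuinely be taken layered with a \emph{single} common source-to-sink length $L$ — otherwise cycles through different pairs would have different lengths and the ``length exactly $k$'' guarantee in the \yi would fail — and that enforcing this does not spoil the proportions among $p$, $L$, and $N$ needed for $k\ge\sqrt n$ and for the two cycle counts to straddle $|V(G')|^{1/2-\epsilon}$ and $|V(G')|^{\epsilon}$; and (ii) that no unintended short cycle slips in, either by chaining several return paths (excluded above, since that costs length $\ge 2k$) or through the interiors of the return paths (excluded by taking them to be pairwise disjoint chordless paths). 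The edge- versus vertex-disjoint accounting is the only remaining subtlety, and it is dispatched by the vertex-splitting together with the robustness of the gap in Theorem~\ref{thm:hardness-EDP} noted above.
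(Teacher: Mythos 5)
Your construction has the right shape — attach return paths from $t_i$ to $s_i$, show that every short cycle uses exactly one return path, and thereby put short cycles in bijection with routed source-sink pairs — and this is indeed what the paper does (it modifies $\wall[G]$ so that non-edge crossings also contribute vertices, attaches a single back edge $t_i\to s_i$, and sets $k=2n+2$). But there is a genuine gap in your justification of the length accounting, and it concerns the premise you lean on hardest.

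You assume that the DAG behind Theorem~\ref{thm:hardness-EDP} is, or can be made by arc subdivision, \emph{layered} so that every $s_i\to t_j$ path has a single common length $L$. This is false for the paper's gadget, and arc subdivision cannot repair it. In the grid gadget (after the paper's own modification so that each crossing, edge or non-edge, contributes the same number of internal vertices to each line), the length of an $s_j\to t_{j'}$ path is roughly $2(n-j+j'+1)$: it depends on $j'-j$, so going $s_n\to t_1$ is much shorter than $s_1\to t_n$. Subdividing arcs lengthens every path through a subdivided arc by the same amount; it can make the graph \emph{graded} (all $u$-to-$v$ paths the same length for each fixed $u,v$, which the modified gadget already is), but it cannot force $s_i\to t_j$ paths with different net ``shift'' $j-i$ to share a common length. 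What one would actually need is to \emph{pad} each source and each sink with an auxiliary tail/head of index-dependent length — e.g.\ a tail of length $\Theta(i)$ feeding $s_i$ and a head of length $\Theta(n-j)$ after $t_j$ — which is a different operation than subdivision and is not mentioned in your proposal. Without it, a cycle that goes $s_1\to t_n$, returns, then $s_n\to t_1$, returns, has length $L_{1\to n}+L_{n\to 1}+2\ell$, and you have given no reason this exceeds $2k-1$ if the two $L$'s are unequal.

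The paper sidesteps the need for a uniform $L$ entirely. Because the path length $L(j\to j')$ is affine in $j'-j$ and the shifts around a cycle telescope to zero, the total length of a cycle using $\ell$ back edges is \emph{exactly} $\sum_m L(j_m\to j_{m+1})+\ell = 2\ell(n+1)$, independently of the intermediate indices $j_m$. This is the key lemma: it gives the ``multiple of $k$'' property directly, with no leveling needed. To make your write-up correct, either prove this telescoping identity (which is what Section~5.1 does) or explicitly carry out the padding step and verify it does not upset the $\Theta(\sqrt N)$ bookkeeping — the phrase ``subdividing arcs equalizes path lengths'' as written is not a proof and, taken literally, is false.

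A smaller remark: your vertex-splitting step is harmless but unnecessary here. The paper's $k$-cycle packing problem is defined as \emph{edge}-disjoint cycle packing; the yes-instance cycles coming from canonical paths of an independent set are already vertex-disjoint (hence edge-disjoint), and the no-instance bound only needs the edge-disjoint paths bound from $\tildeEDP$. So the ``robustness'' of the gap under the edge/vertex distinction, while true, is not something the argument needs to invoke.
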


\paragraph{Alternative Hardness Proof for Minimum Consistent CNF, DNF, and Intersections of Halfspaces.}
Our techniques for proving the DFA hardness result can be used to give an alternative proof for the hardness of the minimum consistent DNF, CNF, and intersections of thresholded halfspaces problems.  
In the minimum consistent CNF problem, we are given a collection of samples of size $n$, and our goal is to output a small CNF formula that is consistent with all such samples. Alekhnovich~et~al.~\cite{AlekhnovichBFKP08} previously showed tight hardnesses for these problems, which imply that the classes of CNFs, DNFs, and the intersections of halfspaces are not properly PAC-learnable. Our techniques give an alternative proof (which might be simpler) for these results. 
More specifically, we give an alternative proof for the following theorem and corollary (stated in terms of CNF, but the same holds for DNF and intersection of halfspaces\footnote{It is noted in~\cite{AlekhnovichBFKP08} that one only needs to prove the hardness of CNF, since this problem is a special case of the intersection of thresholded halfspaces problem, and the proof for DNF would work similarly.}). 

\begin{theorem}
Let $\epsilon >0$ be any constant.
Given a pair of positive an negative samples $(\pset, \nset)$ of size $n$ where each sample has length at most $n^{\epsilon}$, it is NP-hard to distinguish between the following two cases: 
\begin{itemize} 
\item \yi: There is a CNF formula of size $n^{\epsilon}$ consistent with $(\pset, \nset)$.  
\item \ni: Any CNF consistent with $(\pset, \nset)$ must have size at least $n^{1-\epsilon}$. 
\end{itemize} 
In particular, it is NP-hard to approximate the minimum consistent CNF problem to within a factor of $n^{1-\epsilon}$.  
\end{theorem}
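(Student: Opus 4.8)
The goal is a tight $n^{1-\epsilon}$ hardness for minimum consistent CNF via the pre-reduction graph product concept of the paper, mirroring the DFA proof. The plan is to start from a reduction $R$ that turns a graph $I$ (an instance of, say, minimum graph coloring / maximum independent set) into a pair of sample sets $(\pset,\nset)$ such that a small CNF consistent with $(\pset,\nset)$ corresponds to a good solution of the underlying graph problem on $I$, and — crucially — such that this correspondence is \emph{robust under the lexicographic product}: we will apply $R$ not to $I$ but to $I^{k} = I\cdot I\cdots I$ ($k$ factors), and argue about $f(R[I^{k}])$ where $f$ is the minimum consistent-CNF size. The reduction $R$ alone need not prove any hardness; all the amplification happens by choosing $k$ large and invoking the meta-theorem of Section~\ref{sec:meta}.

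**Key steps, in order.** First I would fix the graph-theoretic base problem and its gap: minimum coloring is NP-hard to approximate within $n^{1-\delta}$ (or, equivalently, I can use an independent-set / chromatic gap on explicit graphs), so there exist graphs $I$ on $N$ vertices where \yi\ graphs have chromatic number (or an analogous parameter) at most $N^{\delta}$ and \ni\ graphs have it at least $N^{1-\delta}$. Second, I would design $R$: encode each vertex of the graph by a short bit-string (length $O(\log|V|)$, or length $|V|^{\epsilon}$ after taking the power, which is why the theorem statement allows sample length $n^{\epsilon}$ rather than $O(\log n)$), create one positive sample per vertex and one negative sample per edge (or per non-edge), engineered so that a CNF clause plays the role of a color class: a clause is satisfied by exactly the positive samples whose vertices lie in an independent set, and the whole CNF being consistent forces a cover of the vertices by such independent sets, i.e. a coloring. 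Thus $f(R[I]) = \chi(I)$ up to lower-order terms. Third — the heart of the argument — I would prove the product inequality $f(R[G\cdot H]) = f(R[G])\cdot f(R[H])$ (or the one-sided bounds $\le$ and $\ge$ that the meta-theorem needs): the $\le$ direction composes consistent CNFs for the factors by concatenating variable blocks; the $\ge$ direction uses that $\chi(G\cdot H)$ behaves multiplicatively-enough in the relevant regime, pulling back a consistent CNF on $R[G\cdot H]$ to colorings of the factors via a projection/counting argument. Fourth, I would instantiate with $I^{k}$ for $k = \Theta(1/\epsilon)$: the gap becomes $(N^{1-\delta})^{k}$ vs $(N^{\delta})^{k}$ while the sample size is $n = |R[I^{k}]| = \poly(N^{k})$, and choosing parameters yields \yi\ size $n^{\epsilon}$ vs \ni\ size $n^{1-\epsilon}$. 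Finally I would note the corollary: combined with the framework of Board and Pitt (or the standard Occam/PAC-learning connection), this $n^{1-\epsilon}$ hardness rules out proper PAC-learning of CNF under $\mathrm{NP}\neq\mathrm{RP}$, and that the same construction, by the footnote, transfers verbatim to DNF (dualize positive/negative samples) and to intersections of thresholded halfspaces (a clause is a special halfspace).

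**Main obstacle.** I expect the delicate point to be the lower-bound ($\ge$) direction of the product inequality: showing that \emph{any} small CNF consistent with $R[I^{k}]$ — not just a "product-structured" one — forces the underlying graph parameter of every factor to be large. A consistent CNF on the product instance is an arbitrary Boolean object, and one must extract from it colorings (or independent sets) of the base graph, which requires the encoding in $R$ to be rigid enough that clauses cannot "cheat" by mixing coordinates across factors, together with a pigeonhole/amortization argument over the $k$ coordinates. Getting the encoding lengths, the positive/negative sample design, and this pullback argument to fit together cleanly — so that it really reduces to the textbook-style product inequality the paper advertises — is where the real work lies; the amplification step itself is then routine.
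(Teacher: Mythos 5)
Your high-level plan is the right one (pre-reduction product, reduce from coloring, clauses correspond to color classes, amplify by taking $I^{k}$), but two details do not match the paper and one of them is a genuine gap that would prevent you from reaching the claimed $n^{1-\epsilon}$.

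First, a minor but telling detail: you propose \emph{one positive sample per vertex and one negative sample per edge}. The paper does the opposite --- one \emph{negative} sample $neg(\vec u)=\enc{u_1}\cdots\enc{u_k}$ per vertex and one \emph{positive} sample per ``conflict'' (edge incidence at a coordinate). This orientation is what makes the clause/color-class correspondence work: a CNF rejects a string iff some clause is false, so each negative (vertex) sample forces one clause to fail, and the map ``vertex $\mapsto$ index of the clause that rejects it'' partitions $V(H)$ into sets that the positive (edge) samples force to be independent. With your orientation, rejection happens on edge samples and does not directly define a vertex coloring; you would end up proving something closer to a DNF statement, not the CNF bound via $\chi$.

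Second, and this is the real gap: you plan to prove a generic product inequality $f(R[G\cdot H])\le f(R[G])f(R[H])$ for an arbitrary reduction $R$ and then invoke the meta-theorem. That route only yields hardness $n^{1/d-\epsilon}$ where $d$ is the polynomial blow-up exponent of $R$, and any reduction of the kind you describe (a sample per vertex and a sample per edge, each encoding a full vertex label) has $d\ge 2$, so you would obtain at best $n^{1/2-\epsilon}$ --- not $n^{1-\epsilon}$. The paper avoids this exactly as it does for the tight DFA bound: it builds a reduction $R_{3,k}$ defined \emph{only on $k$-fold products} $H=G^{k}$, using the unary encoding $\enc{u}=0^{u-1}10^{n-u}$ per coordinate and inserting positive/negative samples coordinate-by-coordinate, so that $|R_{3,k}[G^k]|$ is near-linear in $|V(G^k)|$ (the overhead is $\poly(n,k)$, negligible for large $k$). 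The relevant inequality is then proved directly on $G^k$, namely $\chi(G)^{k}\le\cnf(R_{3,k}[G^{k}])\le\chi(G)^{k}\,|V(G)|^{O(1)}$, rather than the two-sided $G\cdot H$ bound you sketch. This is also exactly where the ``sample length $n^\epsilon$'' in the statement comes from: each sample has length $nk$, and one chooses $k$ large enough that $nk\le(\text{total size})^{\epsilon}$; with a binary $O(\log n)$ encoding the clause-to-color-class correspondence in the lower bound breaks down. Your ``main obstacle'' paragraph correctly identifies that the $\ge$ direction is delicate, but the fix is not a pullback from $G\cdot H$ to the factors --- it is to work with $R_{3,k}[G^k]$ directly and observe that a rejecting clause for $neg(\vec u)$ must avoid mentioning the positions $u_i$ and $v$ for any conflicting coordinate, which forces each color class to be an independent set of $G^k$.
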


\begin{corollary} 
Unless $NP=RP$, the class of CNF is not properly PAC-learnable. 
\end{corollary}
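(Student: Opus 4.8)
The plan is to obtain this corollary from the $n^{1-\epsilon}$-hardness theorem for the minimum consistent CNF problem stated just above, in exactly the way Corollary~\ref{cor:pac-learning-dfa} follows from Theorem~\ref{thm:hardness-mincon-dfa}: via the standard connection --- due to Board and Pitt~\cite{BoardP92}, with roots in Pitt and Valiant~\cite{PittV88} and in the Occam's-Razor paradigm --- between proper PAC-learnability of a concept class and the approximability of its minimum consistent hypothesis problem. I would argue the contrapositive: from a randomized polynomial-time proper PAC-learner for CNF I would build a randomized polynomial-time algorithm that distinguishes the \yi\ and \ni\ cases of the theorem; since that promise problem is NP-hard, this places it in $\mathrm{RP}$ and forces $\mathrm{NP}=\mathrm{RP}$, contradicting the hypothesis.

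First I would pass from learning to an Occam algorithm. I would check (as is done in~\cite{AlekhnovichBFKP08}) that the class of CNF formulas satisfies the closure-under-exception-lists property required to invoke the Board--Pitt equivalence: flipping a positive point to negative costs one extra clause, while flipping a negative point to positive can be accommodated with only polynomial blow-up of the representation. Granting this, a randomized polynomial-time proper PAC-learner for CNF yields a randomized polynomial-time algorithm $A$ with the following guarantee: on any sample $S$ of total size $n$ that is consistent with some CNF of size $s$, with probability at least $3/4$ the output $A(S)$ is a CNF consistent with $S$ of size at most $q(s)\cdot n^{\alpha}$, where $q$ is a fixed polynomial and $\alpha<1$ a fixed constant, both determined by the learner's time and sample complexity.

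Then I would feed in the gap. Apply the hardness theorem with its constant set to a small $\eps_0>0$ chosen so that $\alpha+(\deg q)\,\eps_0<1-\eps_0$, which is possible since $\alpha<1$ and $\deg q$ are fixed. In the \yi\ case there is a CNF of size $n^{\eps_0}$ consistent with $(\pset,\nset)$, so with probability at least $3/4$ the algorithm $A$ returns a consistent CNF of size at most $q(n^{\eps_0})\cdot n^{\alpha}<n^{1-\eps_0}$; in the \ni\ case every consistent CNF has size at least $n^{1-\eps_0}$, so $A$ can never return a consistent CNF of size below $n^{1-\eps_0}$. The predicate ``$A(S)$ is a CNF consistent with $(\pset,\nset)$ of size $<n^{1-\eps_0}$'' is testable in deterministic polynomial time, so running $A$ and testing it is a one-sided-error randomized algorithm that accepts every \yi\ with probability at least $3/4$ and every \ni\ with probability $0$. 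Amplifying the success probability gives an $\mathrm{RP}$ algorithm for the NP-hard promise problem, so $\mathrm{NP}\subseteq\mathrm{RP}$, and since $\mathrm{RP}\subseteq\mathrm{NP}$ always, $\mathrm{NP}=\mathrm{RP}$. The contrapositive is the corollary. The same argument, applied to the DNF and intersection-of-halfspaces versions of the theorem, yields the corresponding non-learnability statements.

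I expect the only genuine work to lie in the first step: confirming the exception-list closure property for CNF in the precise form required by~\cite{BoardP92}, and --- more delicately --- tracking the constant $\alpha<1$ and the polynomial $q$ through that reduction so that the Occam bound $q(s)\,n^{\alpha}$ genuinely falls below the $n^{1-\epsilon}$ No-case threshold for a suitably small choice of the theorem's $\epsilon$. Once that is in place, the remaining ingredients --- probability amplification, polynomial-time checkability of the accepting predicate, and $\mathrm{RP}\subseteq\mathrm{NP}$ --- are routine.
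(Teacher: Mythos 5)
Your proposal is correct and follows exactly the route the paper uses for the analogous DFA result (Corollary~\ref{cor:pac-learning-dfa} via Theorems~\ref{thm:DFA-PAC-larning-implies-Occam} and~\ref{thm:no-occam-dfa}): invoke Board--Pitt to convert a proper PAC-learner into a randomized Occam algorithm, then show the Occam algorithm would separate the \yi\ and \ni\ of the $n^{1-\epsilon}$-hardness theorem, forcing $\mathrm{NP}=\mathrm{RP}$. The paper states this CNF corollary without proof, evidently intending this identical argument; you correctly identify the one nontrivial prerequisite the DFA case took for granted (exception-list closure of CNF, as verified in~\cite{AlekhnovichBFKP08}) and carry the parameter bookkeeping through correctly, so there is nothing missing.
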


%
%
%


\section{Overview}\label{sec:overview}

%


\subsection{Example of Reduction $R$: Vertex-Disjoint Paths}

To illustrate the pre-reduction graph product concept, consider the {\em vertex-disjoint path} (\VDP) problem. 
The objective of \VDP is the same as that of \EDP except that we want paths to be vertex-disjoint instead of edge-disjoint. The approximability statuses of \EDP and \VDP on DAGs and undirected graphs are the same, and we choose to present \VDP due to its simpler gadget construction. 
Our hardness of \VDP can be easily turned into a hardness of \EDP.


Our goal is to show that this problem has an approximation hardness of $n^{1/2-\epsilon}$, where $n$ is the number of vertices. 
We will use the following reduction\footnote{We thank Julia Chuzhoy who suggested this reduction to us (private communication).} $R$ which transforms a graph $G$ (supposedly an input instance of the maximum independent set problem) into an instance $R[G]$ of the vertex-disjoint paths problem with $\Theta(|V(G)|^2)$ vertices. 
We start with an instance $R[G]$ as in Figure~\ref{fig:intro one} where there are $k$ source-sink pairs (Figure~\ref{fig:intro one} shows an example where $k=6$) and edges are oriented from left to right and from top to bottom. Let us name vertices in $G$ by $1$, $2$, $\ldots$, $k$. 
For any pair of vertices $i$ and $j$, where $i<j$, such that edge $ij$ does {\em not} present in $G$, we remove a vertex $v_{ij}$ from $R[G]$, as shown in Figure~\ref{fig:intro two} (this means that two edges that point to $v_{ij}$ will continue on their directions without intersecting each other). See Section~\ref{sec:edp} for the full description of $R$ in the context of \EDP.  

To see an intuition of this reduction, define a {\em canonical path} be a path that starts at some source $s_i$, goes all the way right, and then goes all the way down to $t_i$ (e.g., a thick (green) path in Figure~\ref{fig:intro two}). 
It can be easily seen that any set of vertex-disjoint paths in $R[G]$ that consists only of canonical paths can be converted to a solution for the maximum independent set problem. 
Conversely any independent set $S$ in $G$ can be converted to a set of $|S|$ vertex-disjoint paths. 
For example, canonical paths between the pairs $(s_1, t_1)$ and $(s_2, t_2)$ in $R[G]$ in Figure~\ref{fig:intro two} can be converted to an independent set $\{1, 2\}$ in $G$ and vice versa.  In other words, if we can {\em force} the \VDP solution to consist only of canonical paths, then we can potentially use the $|V|^{1-\epsilon}$ hardness of maximum independent set to prove a tight $|V|^{1-\epsilon}=|V(R[G])|^{1/2-\epsilon}$ hardness of \VDP. 
This intuition, however, cannot be easily turned into a hardness result since the \VDP solution can use non-canonical paths, and it is possible that $\VDP(R[G])$ is much larger than $\alpha(G)$; see Section~\ref{sec:bad example edp} for an example where $\alpha(G)=O(1)$ and $\VDP(R[G])=\Omega(|V(G)|)$.
Thus, the reduction $R$ by itself cannot be used even to prove that \VDP is NP-hard!

\begin{figure} 
\begin{center} 
\begin{subfigure}[b]{0.4\textwidth}
\centering
               \fbox{\includegraphics[page=1, scale=0.4, clip=true, trim=  11cm 7cm 4.5cm 2cm] {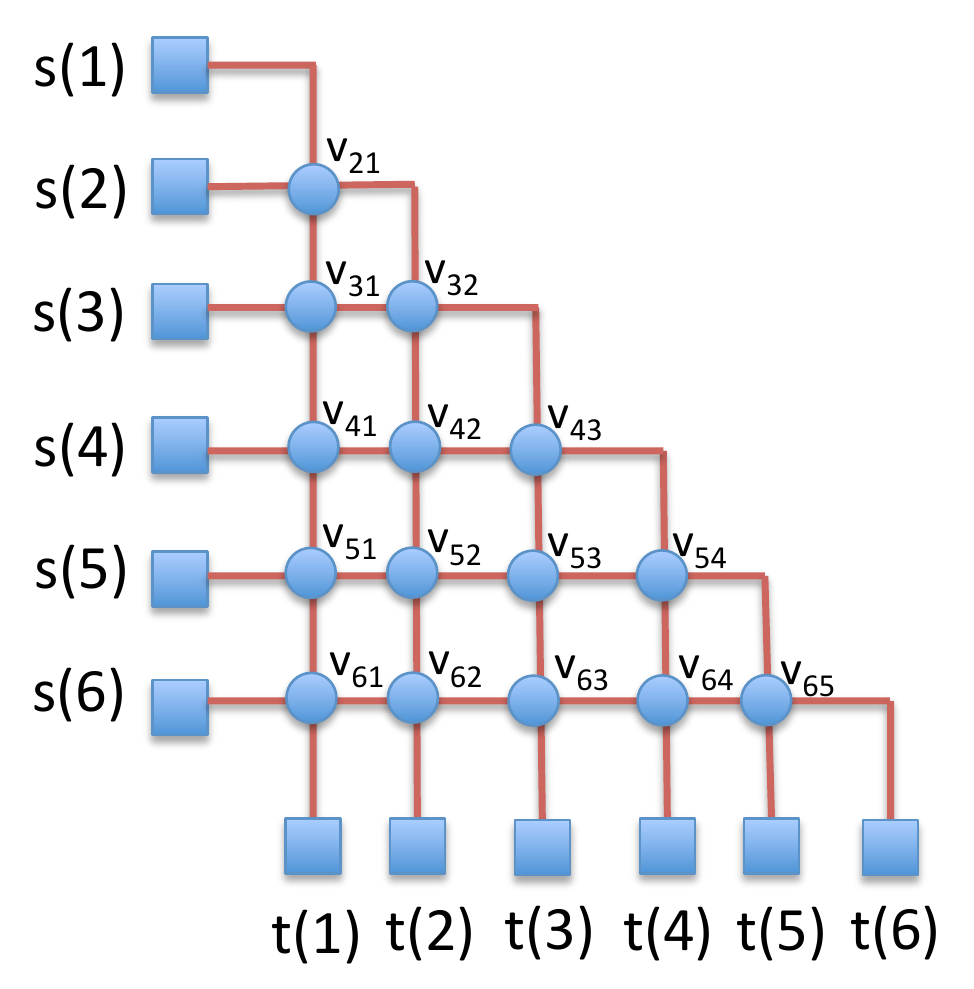}}
               \caption{}
                \label{fig:intro one}
\end{subfigure}%
%
%
\begin{subfigure}[b]{0.6\textwidth}
\centering
				\fbox{\includegraphics[page=3, scale=0.4, clip=true, trim=  1cm 7cm 4.5cm 2cm] {intgap.pdf}}
				\caption{}
                \label{fig:intro two}
\end{subfigure}%
\caption{The reduction $R$ for the vertex-disjoint paths problem. The thick (green) path in Figure~\ref{fig:intro two} shows an example of a canonical path.}
\label{fig:intro}
\end{center} 
\end{figure}

\subsection{The Use of Pre-Reduction Products}
The above situation is very common in attempts to prove hardnesses for various problems. A usual way to obtain hardness results is to modify $R$ into some reduction $R'$. This modification, however, often blows up the size of the reduction, thus affecting its tightness.  
%
For example, \VDP and \EDP on DAGs are only known to be $n^{1/26-\epsilon}$-hard, as opposed to being potentially $n^{1/2-\epsilon}$-hard, as suggested by the integrality gap.
Moreover, the reduction $R'$ is usually much more complicated than $R$. 
%
%
In this paper, we show that for many problems the above difficulties can be avoided by simply picking an appropriate graph product $*$ and understanding the structure of $R[G*G*\ldots]$. To this end, it is sometimes easier to study $f(R[G*H])$ for any graphs $G$ and $H$, although we eventually need only the case where $G=H$. This gives rise to the study of the behavior of $f(R[G*H])$ which is a non-standard form of graph product in comparison with the standard study of $f(G*H)$.
%
%
In fact, most results in this paper follow merely from bounding $f(R[G*H])$ in the form 
\begin{align}
g(G*H)\leq f(R[G*H]) &\leq g(G)f(H)+\poly(|V(G)|), \label{eq:intro gen form}
\end{align}
where $g$ is an objective function of a problem whose hardness is already known (in this paper, $g$ is either maximum independent set or minimum coloring),
and $f$ is an objective function of a problem that we intend to prove hardness. 
Our bounds for functions $f$ corresponding to problems that we want to solve, e.g. the minimum consistent DFA (function $\dfa$) and maximum edge-disjoint paths (function $\edp$)\parinya{We seem to be switching back and forth between VDP and EDP.}, are listed in the theorem below. (Recall that $G\cdot H$ denotes the lexicographic product as defined in Eq.~\ref{eq:lex product definition}.)
\begin{theorem}[Bounds of graph products; informal]\label{thm:intro bounds of product} There is a reduction $R_1$ (respectively $R_2$) that transforms a graph $G$ into an instance of the minimum consistent DFA problem of size $\tilde \Theta(|V(G)|^2)$ (respectively the maximum edge-disjoint paths problem of size $\Theta(|V(G)|^2)$) such that, for any graphs $G$ and $H$,
\begin{align}
\chi(G\cdot H)\leq \dfa(R_1[G \cdot H]) &\leq  \chi(G) \dfa(R_1[H]) + O(|V(G)|^2)\label{eq:intro dfa}\\
\alpha(G\cdot H) \leq \edp(R_2[G \cdot H]) &\leq \alpha(G)\edp(R_2 [H]) + O(|V(G)|^2)\label{eq:intro edp}
\end{align}
\danupon{TO DO: We should avoid using OPT}
\end{theorem}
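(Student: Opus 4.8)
The plan is to prove the two chains of inequalities in Theorem~\ref{thm:intro bounds of product} essentially independently, and the structure is the same for both: first pin down the reductions $R_1$ and $R_2$ explicitly (extending the sketch of $R$ from Section~\ref{sec:overview}), then prove the \emph{lower bound} (feasibility: a good independent set / coloring yields a good solution to the target problem on $R[G\cdot H]$) and the \emph{upper bound} (soundness: any solution to the target problem on $R[G\cdot H]$ can be decomposed so that its value is controlled by a product of the ``first-coordinate'' structure of the input and the target value on the single copy $R[H]$, plus a lower-order additive term coming from the $O(|V(G)|^2)$ ``non-canonical'' behavior). The key conceptual device is the \emph{canonical path} (or its analogue for DFAs, a canonical automaton): on $R[G\cdot H]$, a canonical solution routes through a ``block structure'' indexed by $V(G)$, and inside each active block behaves like a solution on $R[H]$. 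So the real content is a structural lemma saying: any solution on $R[G\cdot H]$ can, at the cost of an additive $O(|V(G)|^2)$ loss, be converted into a canonical one; and a canonical solution of value $v$ on $R[G\cdot H]$ witnesses an independent set (resp.\ coloring) of $G$ together with, block-by-block, solutions on $R[H]$, so that $v \le \alpha(G)\cdot \edp(R_2[H])$ (resp.\ $v \le \chi(G)\cdot \dfa(R_1[H])$) after aggregating.

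For the \VDP/\EDP bound \eqref{eq:intro edp}: the lower bound $\alpha(G\cdot H)\le \edp(R_2[G\cdot H])$ is the easy direction — take a maximum independent set $S$ in $G\cdot H$; since $\alpha(G\cdot H)=\alpha(G)\alpha(H)$ and independent sets in the lexicographic product decompose as a union of ``columns'' over an independent set of $G$, each element of $S$ names a source–sink pair whose canonical path we route, and the non-edge structure guarantees these canonical paths are vertex-disjoint exactly as in the Section~\ref{sec:overview} discussion. The upper bound is the crux: given edge-disjoint paths $\mathcal{P}$ in $R_2[G\cdot H]$, I would first argue (this is the main obstacle) that all but $O(|V(G)|^2)$ of the paths can be assumed canonical — intuitively, each non-canonical ``turn'' must happen at one of the $\Theta(|V(G\cdot H)|^2)$ grid vertices, but because only $O(|V(G)|^2)$ ``coarse'' cells exist at the $G$-granularity, a charging/rerouting argument shows that paths deviating from canonical form within the $H$-granularity can be rerouted or discarded with bounded loss, while cross-block deviations are few. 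Then the canonical paths that survive induce an independent set $I$ of $G$ (which blocks are used), and within each used block they form an edge-disjoint path solution of $R_2[H]$, of value at most $\edp(R_2[H])$; summing over the at most $\alpha(G)$ used blocks gives $|\mathcal{P}|\le \alpha(G)\edp(R_2[H]) + O(|V(G)|^2)$.

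For the DFA bound \eqref{eq:intro dfa} the skeleton is identical but with coloring in place of independent set and with ``states'' in place of ``paths.'' The reduction $R_1$ turns $G$ into a sample set $(\pset,\nset)$ (of size $\tilde\Theta(|V(G)|^2)$, with each string of length $O(\log|V(G)|)$) whose minimum consistent DFA essentially has to ``spend'' one state per color class of $G$; the lower bound $\chi(G\cdot H)\le \dfa(R_1[G\cdot H])$ follows because a DFA consistent with the samples induces a proper coloring of $G\cdot H$, and the upper bound follows by taking an optimal proper coloring of $G$ with $\chi(G)$ colors, lifting it to a product structure, and within each color class running a copy of the optimal automaton for $R_1[H]$ — the states glue into a consistent DFA for $R_1[G\cdot H]$ of size $\chi(G)\dfa(R_1[H]) + O(|V(G)|^2)$, the additive term again accounting for ``routing'' states that implement the block transitions. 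The hard part everywhere is the soundness/upper-bound direction: showing that an arbitrary (possibly wildly non-canonical) solution on the product instance cannot do better than the canonical product bound except by a benign additive $O(|V(G)|^2)$ — this is exactly where the ``pre-reduction product'' analysis replaces the more delicate modified reductions $R'$ used in prior work, and where one must carefully track how the lexicographic product's two-level edge structure forces any economical solution to respect the coarse ($G$-level) structure.
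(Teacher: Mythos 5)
Your skeleton matches the paper's: lower bounds from feasibility (an independent set / coloring yields a solution), upper bounds by decomposing an arbitrary solution into a ``canonical'' part controlled by $\alpha(G)\cdot\opt(R[H])$ or $\chi(G)\cdot\opt(R[H])$ plus an $O(|V(G)|^2)$ remainder; and your DFA sketch (one copy of the optimal automaton for $R_1[H]$ per color class of $G$, glued together with routing trees that read $\enc{u}$ on the way in and $\enc{u}^R$ on the way out) is essentially Lemma~\ref{lem:dfa-product} and Corollary~\ref{cor: soundness product}.

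The genuine gap is in the \EDP upper bound, which is the content of Lemma~\ref{lem: edp product}. You propose to show that non-canonical paths ``can be rerouted or discarded with bounded loss'' via a charging argument over the coarse cells. Rerouting is a dead end: Section~\ref{sec:bad example edp} exhibits a graph with $\alpha(G)=2$ but $\edp(\wall[G])=\Omega(n)$ realized entirely by non-canonical paths, so a solution on a single copy cannot be massaged into a canonical one with small loss --- the $O(|V(G)|^2)$ bound is an artifact of the product structure, not of any one instance. The paper instead uses two ideas your proposal does not supply. (i) It replaces $\edp$ by the hereditary relaxation $\tildeEDP$, the size of a largest \emph{orderly feasible} family: paths need not reach their matching sinks, only satisfy that sources in order hit sinks in order (equivalently, pairwise crossing). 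This is forced on you: your claim that the semi-canonical paths inside a block ``form an edge-disjoint path solution of $R_2[H]$'' is false, because a path confined to region $R_i$ enters and leaves the interior copy of $\wall[H]$ at mismatched ports; the only structure that survives restriction to a block is the crossing pattern, i.e.\ orderly feasibility, and only on that relaxed quantity does the inductive inequality close. (ii) The non-semi-canonical paths are bounded not by a charging scheme but by a sweep argument: one defines $z=\binom{|V(G)|}{2}$ good curves that peel off the coarse boxes $B(i,j)$ one at a time, records the configuration (order in which the $t$ non-semi-canonical paths meet each curve) and its number of reversals, and shows that each switching box changes the reversal count by at most $3$, while the total change from $(t,\dots,1)$ to $(1,\dots,t)$ is exactly $t-1$; hence $t\le 3z+1=O(|V(G)|^2)$. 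Your intuition that only the $O(|V(G)|^2)$ coarse cells matter is pointing at the right resource, but the mechanism --- a constant bound on the reversal change at each box --- is this counting argument, and without it the right-hand inequality of~\eqref{eq:intro edp} does not go through.
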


See Section~\ref{sec:min-NFA} (especially, Corollary~\ref{cor: soundness product} and Lemma~\ref{lem:dfa-product}) and Section~\ref{sec:edp} (especially, Lemma~\ref{lem: edp product}) for the details and proofs of Eq.~\ref{eq:intro dfa} and Eq.~\ref{eq:intro edp}, respectively.
%
It only requires a systematic, simple calculation to show that these inequalities imply hardnesses of approximation; we formulate this implication as a ``meta theorem'' (see Section~\ref{sec:meta}) which roughly states that for large enough $k$, 
\begin{align}
f(R[G^k])\approx g(G^k)\label{eq:intro meta theorem}
\end{align}
where $G^k$ is $G*G*G*\ldots$ ($k$ times). (For an intuition, observe that  when $k$ is large enough, the term $\poly(|V(G)|)$ in Eq.~\ref{eq:intro gen form} will be negligible and an inductive argument can be used to show that  $g(G^k) \leq f(R[G^k])\leq g(G)^{O(k)}$ (recall that, in our case, $g$ is multiplicative)).
%
This means that the hardness of $f$\footnote{For conciseness, we will use $g$ and $f$ to refer to problems and their objective functions interchangeably.}
is at least the same as the hardness of $g$
on graph product instances $G^k$.
For the case of \DFA and \EDP, $R_1(G)$ and $R_2(G)$ increase the size of input size to $|V(G)|^2$ while $\alpha$ and $\chi$ have the hardness of $|V(G)|^{1-\epsilon}$. 
Thus, we get a hardness of $n^{1/2-\epsilon}$ where $n$ is the input size of \DFA and \EDP. This immediately implies a tight hardness for \EDP and an improved hardness of \DFA. How this translates to a hardness of $f$ depends on how much instance blowup the reduction $R[G^k]$ causes.
For our problems of \DFA and \EDP, it is a well known result that the hardness of $\alpha$ and $\chi$ stays roughly the same under the lexicographic product, i.e., $\alpha$ and $\chi$ on $G^k$ have a hardness of $|V(G^k)|^{1-\epsilon}$. The meta theorem and Theorem~\ref{thm:intro bounds of product} say that this hardness also holds for \DFA and \EDP. Since $R_1[G^k]$ and $R_2[G^k]$ increase the size of input instances by a quadratic factor --- from $|V(G^k)|$ to $n=|V(G^k)|^2$ --- we get a hardness of $n^{1/2-\epsilon}$ where $n$ is the input size of \DFA and \EDP. This immediately implies a tight hardness for \EDP and an improved hardness for \DFA. 

\subsection{Toward A Tight Hardness of the Minimum Consistent DFA Problem}
To get the tight $n^{1-\epsilon}$ hardness for \DFA, we have to adjust $R_1$ in Theorem~\ref{thm:intro bounds of product} to avoid the quadratic blowup. 
We will exploit the fact that, to get a result similar to Eq.~\ref{eq:intro meta theorem}, we only need a reduction $R$ defined on the $k$-fold graph product $G^k$ instead of on an arbitrary graph $G$ as in the case of $R_1$.
We modify reduction $R_1$ to $R_{1,k}$ that works only on an input graph in the form $G^k$ and produces an instance $R_{1,k}[G^k]$ of size almost linear in $|V(G^k)|$ while inequalities as in Theorem~\ref{thm:intro bounds of product} still hold, and obtain the following.

\begin{lemma}\label{thm:intro better DFA}
For any $k$, there is a reduction $R_{1,k}$ that reduces a graph $G^k=G\cdot G\cdot \ldots$ into an instance of the minimum consistent DFA problem of size $O(k\cdot |V(G^k)|\cdot |V(G)|^{2})$ such that
\begin{align}
\chi(G)^k\leq \dfa(R_{1,k}[G^k]) &\leq  \chi(G)^{2k}|V(G)|^4 \label{eq:intro dfa new}
\end{align}
\end{lemma}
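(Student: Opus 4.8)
\textbf{Proof proposal for Lemma~\ref{thm:intro better DFA}.}

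The plan is to start from the reduction $R_1$ of Theorem~\ref{thm:intro bounds of product}, which produces a DFA-sample instance of size $\tilde\Theta(|V(G)|^2)$ from a graph $G$, and observe that the only reason we ever need $R_1$ to act on a fully general graph is that we want to feed it the product $G^k = G\cdot G\cdot\cdots\cdot G$. Since the meta-theorem only requires a family of reductions defined on the specific graphs $G^k$, I would first unwind the structure of $R_1[G^k]$ and identify where the quadratic blowup comes from. The blowup is caused by $R_1$ encoding each of the $\Theta(|V|^2)$ ``non-edges'' of its input graph as a separate sample string of length $O(\log|V|)$, so that the sample size is $\Theta(|V(G^k)|^2\log|V(G^k)|)$. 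The key idea is that the vertex set $V(G^k)$ has the product structure $V(G)^k$, and the adjacency relation of the lexicographic power factors coordinate-wise: $(u_1,\dots,u_k)$ and $(v_1,\dots,v_k)$ are adjacent in $G^k$ iff, at the first coordinate $i$ where they differ, $u_iv_i\in E(G)$. So instead of writing down one sample per non-edge of $G^k$, I would have $R_{1,k}$ write down a sample per coordinate $i\in[k]$ and per non-edge of $G$ restricted to that coordinate — roughly $k\cdot|V(G)|^2$ samples — where a single sample now ``tests'' a non-edge of the $i$-th copy of $G$ against all $|V(G^k)|$ vertices simultaneously, exploiting that the DFA being learned can read the $k$ coordinate-labels of a vertex of $G^k$ in sequence. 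This is what produces the claimed bound $O(k\cdot|V(G^k)|\cdot|V(G)|^2)$: the $|V(G^k)|$ factor is the number of ``positive'' strings (one per vertex of $G^k$), the $k$ factor is the number of coordinates, and the $|V(G)|^2$ factor is the per-coordinate non-edge budget, each sample having length $O(k\log|V(G)|)=O(\log|V(G^k)|)$.

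Next I would re-prove the two inequalities of \eqref{eq:intro dfa new} for this compressed reduction. For the lower bound $\chi(G)^k\le\dfa(R_{1,k}[G^k])$, I would use that $\chi$ is multiplicative under the lexicographic product, $\chi(G^k)=\chi(G)^k$ (a standard fact I may assume), together with the ``completeness'' direction of the construction: any DFA consistent with the samples induces a proper colouring of $G^k$ whose number of colours is at most the number of states, exactly as in the proof of the left inequality of \eqref{eq:intro dfa} in Theorem~\ref{thm:intro bounds of product}; the compression does not touch this direction because it only removes redundant negative samples, which can only make the DFA's life easier, not harder, so the state-count is still at least $\chi(G^k)$. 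For the upper bound $\dfa(R_{1,k}[G^k])\le\chi(G)^{2k}|V(G)|^4$, I would exhibit an explicit DFA of that size from an optimal $\chi(G)$-colouring $c$ of $G$: the DFA reads a vertex $(u_1,\dots,u_k)$ of $G^k$ coordinate by coordinate, maintaining the tuple $(c(u_1),\dots,c(u_i))$ of colours seen so far — this needs $\chi(G)^k$ states for the ``colour-tracking'' part — plus an $O(|V(G)|^2)$-sized gadget per coordinate to verify the non-edge tests, giving a crude product bound $\chi(G)^k\cdot|V(G)|^2\cdot k$, which is comfortably below $\chi(G)^{2k}|V(G)|^4$ (the slack in the exponents is deliberate, matching the looseness already present in \eqref{eq:intro dfa}). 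The point is that a single colour-tuple that is proper for $G^k$ simultaneously resolves every non-edge test in every coordinate, which is precisely why one colouring suffices and no blowup beyond $\chi(G)^k$ is incurred in the dominant term.

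The main obstacle I anticipate is the bookkeeping in the \emph{soundness}/lower-bound direction: I need to be sure that after compressing the negative samples, a DFA can no longer ``cheat'' by conflating two vertices of $G^k$ that differ only in a late coordinate. Concretely, the compressed sample set must still force any consistent DFA to separate every pair of adjacent vertices of $G^k$ — if some adjacency of $G^k$ is witnessed only at coordinate $i$, I must have included a negative sample that, read against the appropriate positive strings, forbids the DFA from merging those two vertices into one state. Verifying this amounts to checking that the coordinate-wise family of non-edge tests of $G$ collectively certifies the full non-edge relation of $G^k$ — which is exactly the coordinate-wise factorisation of lexicographic-power adjacency noted above — but one has to be careful that a DFA reading a prefix cannot ``commit early'' in a way that evades a later test; handling this may require padding each coordinate-block of the sample strings with a short separator alphabet so that the DFA's state after block $i$ is a well-defined function of $(u_1,\dots,u_i)$ alone. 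Once that separation property is pinned down, the two inequalities of \eqref{eq:intro dfa new} follow by the same counting as in Theorem~\ref{thm:intro bounds of product}, and the size bound $O(k\cdot|V(G^k)|\cdot|V(G)|^2)$ is immediate from the sample count and sample length.
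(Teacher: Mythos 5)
Your high-level strategy matches the paper's: exploit the coordinate-wise factorisation of adjacency in the lexicographic power $G^k$, encode each coordinate of a vertex $\vec u=(u_1,\dots,u_k)\in V(G^k)$ as a block of $\log|V(G)|$ bits, derive the lower bound by reading a $\chi(G^k)$-colouring off the state reached after consuming $\enc{\vec u}$, and derive the upper bound by tracking a $\chi(G)$-colouring of $G$ coordinate by coordinate. That is indeed what the paper's reduction $R_k[G^k]$ does (Section~\ref{sec:min-DFA}).

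However, your concrete sample design, as written, has a genuine gap and is internally inconsistent. You say you would write down ``a sample per coordinate $i\in[k]$ and per non-edge of $G$ restricted to that coordinate --- roughly $k\cdot|V(G)|^2$ samples,'' each of length $O(\log|V(G^k)|)$. That gives a sample set of total size only $\tilde O(|V(G^k)|+k|V(G)|^2)$ (positives plus negatives), which is much smaller than the claimed $O(k\cdot|V(G^k)|\cdot|V(G)|^2)$ --- your stated factorisation of that bound into ``number of positives $\times$ number of coordinates $\times$ per-coordinate budget'' multiplies three quantities that in your own description contribute additively, not multiplicatively. More importantly, $k|V(G)|^2$ negatives cannot certify soundness: a negative sample must pin down a specific vertex $\vec u$ of $G^k$ that the DFA is forbidden to conflate with its neighbour, and there are $\Theta(|V(G^k)|\cdot k)$ adjacent pairs that have to be separated, each witnessed at a particular coordinate. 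The paper's negative set is $neg(\vec u,v,i)=\enc{u_1}\cdots\enc{u_k}\,1\,\enc{u_1}\cdots\enc{u_{i-1}}\enc{v}$ for every $\vec u\in V(G^k)$, every $i\in[k]$, and every $G$-neighbour $v$ of $u_i$ --- that is $\Theta(k\cdot|V(G^k)|\cdot|E(G)|)$ samples, and this is precisely where the $|V(G^k)|$ factor in the size bound comes from, not from the positives (which are $k|V(G^k)|$ in number, one per $(\vec u,i)$, not $|V(G^k)|$ as you state). A sample without the full $\enc{\vec u}$ prefix gives the DFA no reason to separate the states $\delta^*(q_0,\enc{\vec u})$ and $\delta^*(q_0,\enc{\vec v})$ for a specific adjacent pair. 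Relatedly, your worry about needing a separator alphabet so the DFA ``cannot commit early'' is unnecessary once the full prefix $\enc{\vec u}$ precedes the separator ``1'': the soundness argument only needs that two vertices mapped to the same state after $\enc{\vec u}$ see the same suffix behaviour, and the single-bit separator already provides this. A minor terminological slip: the negative samples correspond to \emph{edges} of $G$ (and of $G^k$), not non-edges --- negatives encode pairs the DFA must distinguish, and distinguishing is forced exactly when the pair is adjacent.
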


The description of reduction $R_{1,k}$ and the proof of Theorem~\ref{thm:intro better DFA} can be found in Section~\ref{sec:min-DFA}.
Observe that the size $O(k\cdot |V(G^k)|\cdot |V(G)|^{2})$ of $R_{1,k}(G^k)$ is almost linear (almost $O(|V(G^k)|)$) as the extra  $O(k|V(G)|^{2})$ is negligible when $k$ is sufficiently large. 
Similarly, the term $|V(G)|^4$ in Eq.~\ref{eq:intro dfa new} is negligible and thus the value of $\dfa(R_{1,k}[G^k])$ is sandwiched by $\chi(G)^k$ and $\chi(G)^{2k}$. This means that if $\chi(G)$ is small (i.e., $\chi(G)\leq |V(G)|^{\epsilon}$), then $\dfa(R_{1,k}[G^k])$ will be small (i.e., $\dfa(R_{1,k}[G^k])\leq |V(G^k)|^{2\epsilon}$), and if $\chi(G)$ is large (i.e., $\chi(G)\geq |V(G)|^{1-\epsilon}$), then $\dfa(R_{1,k}[G^k])$ will be also large (i.e., $\dfa(R_{1,k}[G^k])\geq |V(G^k)|^{1-\epsilon}$). The hardness of $n^{1-\epsilon}$ for \DFA thus follows.

We note that in Theorem~\ref{thm:intro bounds of product}, we can replace \DFA by \NFA, a function corresponds to the minimum consistent NFA problem, thus getting a hardness of $n^{1/2-\epsilon}$ for this problem as well. 
This is, however, not yet tight. We would get a tight hardness if we can replace \DFA by \NFA in Theorem~\ref{thm:intro better DFA}, which is not the case. 
We also note that the proof for the tight hardness for the minimum consistent CNF problem follows from the same type of inequalities: We show that there exists a near-linear-size reduction $R_{3,k}$ from the minimum coloring problem to the minimum consistent CNF problem (with function $\cnf$) such that
\begin{align}
\chi(G)^k\leq \cnf(R_{3,k}[G^k])\leq \chi(G)^k |V(G)|^{O(1)}.\label{eq:intro cnf}
\end{align} 
%
The proofs of the bounds of graph products (Eq.~\ref{eq:intro dfa}, Eq.~\ref{eq:intro edp},Eq.~\ref{eq:intro dfa new} and Eq.~\ref{eq:intro cnf}) are fairly short and elementary; in fact, we believe that they can be given as textbook exercises. These proofs can be found in Section~\ref{sec:consistency-problem}, Section~\ref{sec:edp} and Section~\ref{sec:other}.


\subsection{Related Concept} 

Our pre-reduction graph product concept was inspired by the {\em graph product subadditivity} concept we previously introduced in \cite{ChalermsookLN-SODA13} (some of these ideas were later used in \cite{ChalermsookLN-FOCS13,ChalermsookLN-LATIN14}). 
There, we prove a hardness of approximation using the following framework. 
As before, let $f$ be an objective function of a problem that we intend to prove hardness and $g$ be an objective function of a problem whose hardness is already known.
%
%
We show that there are graph products $\oplus$, $*_e$, and $*$ such that  
\begin{itemize}
\item We can ``decompose'' $f(G*_e J)$: $g(G) \leq f(G*_e J) \leq g(G)+ f(G* J)$, and
\item $f((G\oplus H)* J)$ is ``subadditive'': $f((G\oplus H)* J) \leq f(G* J) + f(H* J)$.
\end{itemize}
We then use the above inequalities to show that if we let $G^k=G\oplus G\oplus \ldots $ ($k$ times), then
$$g(G^k)\leq f(G^k*_e J)\leq g(G^k)+ k f(G* J).$$
For large enough $k$, the term $kf(G* J)$ is negligible and thus $f(G^k*_eJ)\approx g(G^k)$. We use this fact to show that the approximation harness of $f$ is roughly the same as the hardness of $g$.
%
%
Observe that if we let $R[G]=G*_eJ$, the above inequalities can then be used to show that 
$$g(G\oplus H) \leq f(R[G\oplus H]) \leq g(G\oplus H)+f(R[G]) + f(R[H]).$$
In the problems considered in \cite{ChalermsookLN-SODA13}, one can easily bound $f(R[G])$ and $f(R[H])$ by $|V(G)|$ and $|V(H)|$, respectively. So, our meta theorem will imply that $f(G^k*J)\approx g(G^k)$, which leads to the approximation hardness of $f$. 
This means that the previous concept in \cite{ChalermsookLN-SODA13} can be viewed as a special case of our new concept where we restrict the reduction $R$ to be a graph product $R[G]=G*_eJ$. 
The way we use the reduction $R$ in this paper goes beyond this. For example, our reduction $R_2$ for \EDP as illustrated in Figure~\ref{fig:intro} cannot be viewed as a natural graph product. Moreover, our reduction $R_1$ reduces a graph $G$ to an instance of \DFA
which has {\em nothing} to do with graphs. (This is possible only when we abandon viewing reduction $R$ as a graph product.)
%
Our meta theorem also shows that bounds of graph products in a much more general form\danupon{Not sure if it's better to say ``loser form'' or ``loser bound''} can imply hardness results.\danupon{I just added the previous sentence.}
Finally, the way we exploit graph products using the reduction $R_{1,k}$ has never appeared in \cite{ChalermsookLN-SODA13}.

\danupon{TO DISCUSS: Should we say anything about being subadditive? The problem is that I don't know what to say.}


\subsection{Organization}

After giving necessary definitions in Section~\ref{sec:prelim}, we prove meta theorems in Section~\ref{sec:meta}. These theorems show that bounding $f(R[G*H])$ in a certain way will immediately imply a hardness result. They allow us to focus on proving appropriate bounds in later sections. In Section~\ref{sec:consistency-problem}, we prove such bounds for the consistency problems and their implications to the hardness of proper PAC-learning. In Section~\ref{sec:edp}, we prove such bounds of the edge-disjoint paths problem on DAGs. Bounds for other problems can be found in Section~\ref{sec:other}.

%
%


\section{Preliminaries}\label{sec:prelim}

\subsection{Terms}
Given two graph $G$ and $H$, the {\em lexicographic product} of $G$ and $H$, denoted by $G\cdot H$, is defined as  
\begin{align*}
V(G\cdot H) &= V(G)\times V(H)  
              =\{(u, v) : u\in V(G) \mbox{ \underline{and} } v\in V(H)\}.\\
E(G\cdot H) &= \{(u, a)(v, b) : \mbox{$uv \in E(G)$ \underline{or} {\bf (}$u=v$ \underline{and} $ab\in E(H)${\bf )}}\}.
\end{align*}
Since the lexicographic product is the only graph product concerned in this paper, later on, we will simply use the term {\em graph product} to mean the lexicographic product.
We define the $k$-fold graph product of $G$, denoted by $G^k$, as
\begin{align*}
G^k = G\cdot G^{k-1}\mbox{ for any integer $k>1$}
\quad\mbox{and}\quad G^1 = G
\end{align*}
The properties of the lexicographic product that makes it becomes an import tools in proving hardness of approximation is that it multiplicatively increases the independent and chromatic numbers of graphs, without creating an overly dense resulting graph (the OR product also satisfies multiplicativity of independent and chromatic numbers, but it does not serve our purpose). 

\begin{theorem} \label{thm: property of lexicographic product}
Let $G$ and $H$ be any graphs. The followings hold on $G\cdot H$.
\begin{itemize}

\item $\alpha(G\cdot H)=\alpha(G)\alpha(H)$.

\item $\frac{\chi(G)\chi(H)}{\log|V(G)|}
       \leq
       \chi(G\cdot H)
       \leq 
       \chi(G)\cdot\chi(H)
      $.
\end{itemize}
In particular, for any $k\geq 1$, $\alpha(G^k)=\alpha(G)^k$ and $\chi(G)^{k-o(1)}\leq \chi(G^k) \leq \chi(G)^k$.
\end{theorem}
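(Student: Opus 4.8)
The plan is to prove the three claimed (in)equalities about the lexicographic product $G \cdot H$ separately, then derive the $k$-fold consequences by induction.

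\textbf{The independence number.} First I would prove $\alpha(G \cdot H) = \alpha(G)\alpha(H)$ by two inequalities. For the lower bound, take a maximum independent set $A \subseteq V(G)$ and $B \subseteq V(H)$; I claim $A \times B$ is independent in $G \cdot H$. Indeed, for distinct $(u,a),(v,b) \in A\times B$: if $u \ne v$ then $uv \notin E(G)$ (since $A$ is independent), and since $u\ne v$ we also don't satisfy the ``$u=v$ and $ab \in E(H)$'' clause, so $(u,a)(v,b) \notin E(G\cdot H)$; if $u = v$ then $a \ne b$ and $ab \notin E(H)$ (since $B$ is independent), so again no edge. Hence $\alpha(G\cdot H) \ge \alpha(G)\alpha(H)$. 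For the upper bound, let $S$ be any independent set in $G\cdot H$. Let $\pi(S) = \{u : (u,a)\in S \text{ for some }a\}$ be its projection to $V(G)$. Then $\pi(S)$ is independent in $G$: if $u\ne v$ both lie in $\pi(S)$ with $(u,a),(v,b)\in S$, then $uv \in E(G)$ would force $(u,a)(v,b)\in E(G\cdot H)$, contradiction. Moreover, for each fixed $u\in\pi(S)$, the fiber $\{a : (u,a)\in S\}$ is independent in $H$ (by the $u=v$ clause), so it has size at most $\alpha(H)$. Therefore $|S| = \sum_{u\in\pi(S)} |\text{fiber}_u| \le |\pi(S)|\cdot\alpha(H) \le \alpha(G)\alpha(H)$.

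\textbf{The chromatic number.} The upper bound $\chi(G\cdot H) \le \chi(G)\chi(H)$ is easy: given a proper coloring $c_G$ of $G$ with $\chi(G)$ colors and $c_H$ of $H$ with $\chi(H)$ colors, color $(u,a)$ by the pair $(c_G(u), c_H(a))$. If $(u,a)(v,b)$ is an edge, then either $uv\in E(G)$, so $c_G(u)\ne c_G(v)$, or $u=v$ and $ab\in E(H)$, so $c_H(a)\ne c_H(b)$; either way the pair colors differ. For the lower bound $\chi(G\cdot H) \ge \chi(G)\chi(H)/\log|V(G)|$, I would use the standard relation $\chi(G) \le (1+\ln\alpha'(G))\cdot \lceil |V(G)|/\alpha(G)\rceil$-type bound; more precisely, the clean route is: a proper coloring of $G\cdot H$ with $q$ colors gives $q$ independent sets covering $V(G\cdot H)$, each of size $\le \alpha(G\cdot H) = \alpha(G)\alpha(H)$; but also one knows $\chi(H) \le |V(H)|\cdot\frac{\ln|V(H)|+1}{?}$... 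Actually the cleanest argument: use the known fact (fractional relaxation) that $\chi(G\cdot H) \ge \chi_f(G)\cdot\chi(H)$ where $\chi_f$ is the fractional chromatic number, together with $\chi_f(G) \ge \chi(G)/(1+\ln|V(G)|) \ge \chi(G)/(2\log|V(G)|)$ by the greedy set-cover bound relating $\chi$ and $\chi_f$. So the plan is: (i) show $\chi(G\cdot H)\ge \chi_f(G)\chi(H)$ by a fiber/averaging argument on an optimal coloring of $G\cdot H$, and (ii) invoke $\chi_f(G)\ge \chi(G)/O(\log|V(G)|)$.

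\textbf{The $k$-fold consequences and the main obstacle.} Given the binary statements, the $k$-fold bounds follow by induction on $k$ using the recursive definition $G^k = G\cdot G^{k-1}$: $\alpha(G^k) = \alpha(G)\alpha(G^{k-1}) = \alpha(G)^k$; and $\chi(G^k)\le \chi(G)\chi(G^{k-1})\le \chi(G)^k$, while $\chi(G^k)\ge \frac{\chi(G)\chi(G^{k-1})}{\log|V(G)|} \ge \cdots$, which after unrolling gives $\chi(G^k) \ge \chi(G)^k / (\log|V(G^{k-1})|\cdots\log|V(G)|)$; since each $|V(G^i)| \le |V(G)|^k$ and there are $k-1$ factors, the denominator is at most $(k\log|V(G)|)^{k}$, which is $|V(G)|^{o(k)}$ when $|V(G)|$ is large relative to $k$ (or one re-parametrizes), yielding $\chi(G^k)\ge \chi(G)^{k-o(1)}$. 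The main obstacle is the chromatic lower bound: getting the $\log|V(G)|$ loss exactly right requires care in choosing which inequality between $\chi$, $\chi_f$, and the set-cover bound to invoke, and in controlling the accumulation of logarithmic factors through the $k$-fold induction so that it stays $|V(G)|^{o(k)}$ rather than blowing up — this is where I would be most careful about constants and the precise quantification of ``$o(1)$''.
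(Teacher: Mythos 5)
The paper states this theorem without proof, treating it as a known collection of facts about the lexicographic product; so there is no ``paper proof'' to compare against. Your proposal is the standard route and is essentially correct. The independence-number argument (product of independent sets for the lower bound, projection plus fibers for the upper bound) and the chromatic upper bound (pairing colorings) are exactly right. For the chromatic lower bound, your two-step plan also works: the fiber argument you gesture at does deliver $\chi(G\cdot H)\ge \chi_f(G)\,\chi(H)$ — given a proper $q$-coloring of $G\cdot H$, the color set $S_u$ used on the fiber $\{u\}\times V(H)$ has $|S_u|\ge \chi(H)$ (the fiber is a copy of $H$), and $S_u\cap S_v=\emptyset$ whenever $uv\in E(G)$, which is precisely a $\chi(H)$-fold coloring of $G$ with $q$ colors, hence $q\ge \chi_{\chi(H)}(G)\ge \chi(H)\chi_f(G)$ — and then $\chi_f(G)\ge \chi(G)/(1+\ln\alpha(G))\ge \chi(G)/O(\log|V(G)|)$ is the standard greedy-rounding bound. (The paper's denominator $\log|V(G)|$ should be read up to a constant.)

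One slip in the $k$-fold unrolling: if you consistently decompose as $G^k = G\cdot G^{k-1}$ and apply the binary bound, the loss factor is $\log|V(G)|$ at \emph{every} step (it only depends on the first factor), so you get the cleaner $\chi(G^k)\ge \chi(G)^k/(\log|V(G)|)^{k-1}$, not a product of $\log|V(G^i)|$'s of growing size. Your version (using the alternating decomposition and bounding by $(k\log|V(G)|)^{k}$) is only worse, so the conclusion still follows, but it is not what the stated induction yields. Finally, you are right to flag the $o(1)$ as the delicate point: $\chi(G)^k/(\log|V(G)|)^{k-1}\ge \chi(G)^{k-o(1)}$ only makes sense when $\chi(G)$ is polynomially large in $|V(G)|$ and $k$ is a constant, so $(\log|V(G)|)^{k-1}\le \chi(G)^{o(1)}$. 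This is implicitly assumed (and is satisfied in all of the paper's applications, where $\chi(G)\ge |V(G)|^{1-\epsilon}$ in the relevant case), but it is worth stating explicitly since the theorem as written is otherwise vacuous when $\chi(G)=O(1)$.
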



A {\em deterministic finite automaton} (DFA) is defined as a 5-tuple $(Q,\Sigma,\delta, q_0, F)$ where 
$Q$ is the set of states, 
$\Sigma$ is the set of alphabets,
$\delta: Q \times \Sigma \rightarrow Q$ is a transition function, 
$q_0$ is initial state, and 
$F\subseteq Q$ is the set of accepting states.
One can naturally extend the transition function $\delta$ into $\delta^*: Q \times \Sigma^* \rightarrow Q$ by inductively defining $\delta^*(q,x_1,\ldots, x_{\ell})$ as $\delta^*(\delta(q, x_1),x_2,\ldots, x_{\ell})$ and $\delta^*(q, null)=q$.  
We say that $M$ {\em accepts} $x$ if and only if $\delta^*(q_0, x) \in F$.  
The size of DFA $M$  is measured by the number of states of $M$, i.e., $|Q|$.
We say that a DFA is {\em acyclic} if there is no state $q \in Q$ and string $x$ such that $\delta^*(q,x) = q$.   
For NFA, the transition is defined by $\delta: Q \times \Sigma \rightarrow 2^Q$ instead, i.e., each transition possibly maps to several states.
An NFA $M$ accepts a string $x \in \Sigma^*$ if and only if the transition $\delta^*(q_0, x)$ contains an accepting state, i.e. $\delta^*(q_0, x) \cap F \neq \emptyset$. 

\subsection{Problems}
In this section, we list all problems considered in this paper. 
\paragraph{Minimum Consistency:}
In the {\sc Minimum Consistency} problem, denoted by {\sc MinCon($\hset$, $\fset$)}, we are given collections $\pset$ and $\nset$ of positive and negative sample strings in $\set{0,1}^*$, for which we are guaranteed that there is a hypothesis $h \in \hset$ that is consistent with all samples in $\pset \cup \nset$, i.e., $h(x) = 1$ for all $x \in \pset$ and $h(x)=0$ for all $x \in \nset$.   
Our goal is to output a function $f \in \fset$ that is consistent with all these samples, while minimizing $|f|$.  
In other words, $\hset$ and $\fset$ are the classes of the real hypothesis that we want to learn and those that our algorithm outputs respectively.  
This notion of learning allows our algorithm to output the hypothesis that is outside of the hypothesis class we want to learn. 

Now we need a slightly modified notion of approximation factor. 
For any instance $(\pset, \nset)$, we denote by $\opt_{\hset}(\pset, \nset)$ the size of the smallest hypothesis $h \in \hset$ consistent with $(\pset, \nset)$. 
Let $\aset$ be any algorithm for {\sc MinCon}($\hset$, $\fset$), i.e., $\aset$ always outputs the hypothesis in $\fset$. 
The approximation gauranteed provided by $\aset$ is: 
\[\sup_{\pset, \nset} \frac{|\aset(\pset, \nset)|}{\opt_{\hset}(\pset, \nset)}\]
 
With this terminology, the problem of learning DFA can be abbreviated as {\sc MinCon($DFA$, $DFA$)}.


\paragraph{Edge Disjoint Paths:}

In the {\em edge-disjoint paths} (EDP) problem, given a graph $G=(V,E)$
and a set of source-sink pairs $\{(s_1,t_1),\ldots,(s_k,t_k)\}$, our goal is to find a collection of paths 
$\pset=\{P_{i_1},P_{i_2},\ldots,P_{i_\ell}:i_j\in[k], \mbox{$P_{i_j}$ connects $s_{i_j}$ to $t_{i_j}$}\}$ 
that are edge disjoint while maximizing $|\pset|$. 
That is, we want to connects as many source-sink pairs as possible using a collection of edge-disjoint paths.

Our focus is on the special case of EDP where $G$ is a {\em directed acycle graph} (DAG).

\paragraph{Bounded-Length Edge-Disjoint Cycles:}
Given a graph $G= (V,E)$, the {\em cycle packing number} of $G$, denoted by $\nu(G)$, is the maximum integer $\ell$ such that there exist cycles $C_1,\ldots, C_{\ell}$ which are pairwise edge-disjoint in $G$. 
The {\em edge-disjoint cycle} problem (EDC) asks to compute the value of $\nu(G)$.
If we are additionally given an integer $k$, the $k$-cycle packing number of $G$, denoted by $\nu_k(G)$, is the maximum integer $\ell$ for which there exist pairwise edge-disjoint cycles $C_1,\ldots, C_{\ell}$ where each cycle $C_j$ contains at most $k$ vertices.  
In the {\em $k$-edge-disjoint cycle} problem ($k$-EDC), we are asked to compute $\nu_k(G)$ given an input $(G,k)$.  


\paragraph{Maximum Independent Set:} 
Given a graph $G=(V,E)$, a subset of vertices $S\subseteq V$ is {\em independent} in $G$ 
if and only if $G$ has no edge joining any two vertices in $S$. 
The {\em independence number} of $G$, denoted by $\alpha(G)$, is the size of a largest independent set in $G$. 
In the {\em maximum independent set} problem, we are asked to compute an independent set $S$ in $G$ with maximum size.

The following is the hardness results of the maximum independent set problem by H{\aa}stad\footnote{
The hardness results of the maximum independent set problem~\cite{Hastad96} and 
the graph coloring problem~\cite{FeigeK98} hold under the assumption $\mathrm{NP}\neq \mathrm{ZPP}$. 
The results were later derandomized by Zuckerman in \cite{Zuckerman07} 
and thus hold under the assumption $\mathrm{P}\neq\mathrm{NP}$.
\label{fn:derandomzied-indset-coloring}}, 
which will be used to obtain the hardness of EDP on DAGs.

\begin{theorem}[\cite{Hastad96}+\cite{Zuckerman07}]
\label{thm: ind set} 
Let $\epsilon >0$ be any constant.  
Given graph $G=(V,E)$, it is NP-hard to distinguish between the following two cases: 
\begin{itemize} 
\item (\yi:) $\alpha(G) \leq |V(G)|^{\epsilon}$ 

\item (\ni:) $\alpha(G) \geq |V(G)|^{1-\epsilon}$ 

\end{itemize}  
\end{theorem}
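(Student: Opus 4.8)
This statement is Håstad's clique-inapproximability theorem \cite{Hastad96} (stated here for independent set, which is maximum clique in the complement) combined with Zuckerman's derandomization \cite{Zuckerman07}; in the present paper it is invoked as a black box, so the ``proof'' really amounts to citing those works. Still, here is the route I would take to reconstruct it. First I would start from the PCP theorem in its two-prover one-round (Label Cover) form, so that $3$-SAT reduces to distinguishing satisfiable Label Cover instances from those in which only a $\gamma<1$ fraction of constraints is simultaneously satisfiable, and then apply Raz's parallel repetition theorem to the $k$-fold repeated game, driving the soundness down to $\gamma^{\Omega(k)}$ at the cost of an $n^{O(k)}$ blow-up; the constant $k$ is fixed at the end in terms of $\epsilon$.

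The key step is to turn this into a PCP verifier for $\NP$ with perfect completeness, logarithmic randomness, and \emph{amortized free-bit complexity tending to $0$} --- i.e. using soundness $s$ while reading only $\delta\log(1/s)$ free bits, for arbitrarily small $\delta$. I would obtain this the standard way: encode each prover's answer by its long code and verify it with a dictatorship test whose soundness is controlled by discrete Fourier analysis. Next I would reduce the number of random strings of the verifier to roughly $\log(1/s)$ via a disperser, and then apply the FGLSS construction: vertices are the locally accepting views of the verifier, and two views are adjacent iff they are consistent. The resulting graph $G$ then has about $(1/s)^{1+\delta}$ vertices, admits a clique of size $\approx 1/s$ when the Label Cover instance is satisfiable, and has maximum clique $O(1)$ otherwise; complementing $G$ and performing a routine normalization of exponents yields exactly the two cases in the statement, $\alpha(G)\le|V(G)|^{\epsilon}$ versus $\alpha(G)\ge|V(G)|^{1-\epsilon}$, once $\delta$ is chosen small enough relative to $\epsilon$.

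Finally, to get genuine NP-hardness rather than hardness merely under $\NP\neq\ZPP$, I would replace the probabilistic disperser used in the randomness-reduction step by Zuckerman's explicit linear-degree extractor/disperser, which makes the entire reduction deterministic.

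The hard part will be the step producing the PCP with amortized free-bit complexity $o(1)$: it rests on the tight soundness analysis of parallel repetition together with the Fourier-analytic soundness analysis of the long-code/dictatorship test, and it is these two ingredients --- not the FGLSS reduction, the passage between clique and independent set, the exponent bookkeeping, or the disperser-based derandomization --- that carry all of the real difficulty.
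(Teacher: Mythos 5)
You are right that the paper treats this as a black box and does not prove it; Theorem~\ref{thm: ind set} is stated with the citation \cite{Hastad96}+\cite{Zuckerman07} and is used as an input to the meta-theorem framework, exactly as you describe. Your sketch of the underlying Håstad--Zuckerman pipeline (parallel repetition, long-code/dictatorship test driving amortized free-bit complexity to $o(1)$, disperser-based randomness reduction, FGLSS graph, and Zuckerman's explicit disperser to upgrade $\mathrm{NP}\neq\mathrm{ZPP}$ hardness to genuine NP-hardness) is an accurate account of those works, so there is nothing to compare against in the paper itself.
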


\paragraph{Chromatic Number:} 
Given a graph $G=(V,E)$, a {\em proper coloring} $\sigma: V(G) \rightarrow [c]$
is a function that assigns colors to vertices of $G$ so that any two adjacent vertices receive different colors assigned by $\sigma$
(i.e., $uv\in E\implies \sigma(u)\neq\sigma(v)$). 
The {\em chromatic number} of $G$, denoted by $\chi(G)$, is the minimum integer $c$ such that 
a proper coloring $\sigma: V(G) \rightarrow [c]$ exists, i.e., $G$ can be properly colored by $c$ colors. 
In the {\em graph coloring} problem, we are asked to compute a proper coloring $\sigma: V(G) \rightarrow [c]$ while minimizing $c$.
We will be using the following hardness of approximation result by Feige and Kilian~\cite{FeigeK98}\footref{fn:derandomzied-indset-coloring}.

\begin{theorem}[\cite{FeigeK98}+\cite{Zuckerman07}]
\label{thm: coloring} 
Let $\epsilon >0$ be any constant.  
Given graph $G=(V,E)$, it is NP-hard to distinguish between the following two cases: 
\begin{itemize} 
\item (\yi:) $\chi(G) \leq |V(G)|^{\epsilon}$ 

\item (\ni:) $\chi(G) \geq |V(G)|^{1-\epsilon}$ 

\end{itemize}  
\end{theorem}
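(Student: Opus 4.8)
\emph{Proof proposal.}\quad The plan is to reduce from a probabilistically checkable proof system for an \NP-complete language, exactly as in \cite{Hastad96} (which underlies Theorem~\ref{thm: ind set}), while squeezing one extra property out of the completeness side. The soundness direction---establishing $\chi(G)\ge |V(G)|^{1-\epsilon}$ on no-instances---is essentially free: every proper colouring splits $V(G)$ into independent sets, so $\chi(G)\ge |V(G)|/\alpha(G)$ always holds, and therefore any construction that already guarantees $\alpha(G)\le |V(G)|^{\epsilon}$ in the soundness case (as the construction behind Theorem~\ref{thm: ind set} does) automatically gives $\chi(G)\ge |V(G)|^{1-\epsilon}$ there. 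So the real work is the completeness direction: from a satisfiable instance one must output a graph of genuinely \emph{small} chromatic number, which asks for more than just a large independent set.

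Concretely, I would fix a PCP verifier $V$ for an \NP-complete language with $r=O(\log|x|)$ random bits, $O(1)$ queries, perfect completeness, and soundness error $\delta=|x|^{-\Omega(1)}$ (via amplification/composition of the system behind Theorem~\ref{thm: ind set}), and I would additionally arrange a \emph{covering} form of completeness: when $x$ is a yes-instance there is a family of only $t=|x|^{o(1)}$ proofs $\pi_1,\dots,\pi_t$ so that every random string is accepted by some $\pi_j$ and, more strongly, every accepting local view that $V$ can ever produce agrees with at least one $\pi_j$. Then build the FGLSS-style graph $G_x$ whose vertices are the pairs $(\rho,a)$ with $a$ an accepting view of $V$ on coins $\rho$, joining two vertices whenever their views disagree on a commonly queried proof bit (and joining any two vertices with the same $\rho$), so that $|V(G_x)|=\poly(|x|)$. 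On no-instances, a mutually consistent set of accepting views is a partial proof and touches one random string per cloud, so an independent set of size $\beta\cdot 2^{r}$ certifies acceptance probability $\ge\beta$; soundness then forces $\alpha(G_x)\le\delta\cdot 2^{r}\le |V(G_x)|^{\epsilon}$, hence $\chi(G_x)\ge |V(G_x)|^{1-\epsilon}$. On yes-instances, each $\pi_j$ induces the independent set $I_j$ of views consistent with it; the covering property makes $I_1\cup\dots\cup I_t=V(G_x)$, and colouring each vertex by the smallest index $j$ with which it is consistent is a proper colouring, so $\chi(G_x)\le t\le |V(G_x)|^{\epsilon}$. Tuning $\delta$ and $t$ (and, if the base PCP is not yet strong enough, widening the gap by a randomized graph product of the kind used throughout this paper) gives the full $|V(G)|^{1-\epsilon}$ gap; finally, replacing the probabilistic steps of the amplification by the explicit dispersers/extractors of Zuckerman~\cite{Zuckerman07} makes the reduction deterministic polynomial time, which is what upgrades the hypothesis from $\NP\ne\ZPP$ to $\P\ne\NP$.

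I expect the main obstacle to be obtaining covering completeness and small soundness \emph{at the same time}: a generic verifier needs a cover whose size is roughly the number of accepting views per random string, which is polynomial and therefore useless here, and naive amplification only makes the cover larger. Designing a proof system in which a sub-polynomial family of proofs simultaneously handles every random string and realises every reachable accepting view---and checking that this property survives the composition and repetition needed to drive $\delta$ down---is the real technical content of \cite{FeigeK98}; everything else above is bookkeeping around the FGLSS graph and the elementary inequality $\chi(G)\ge |V(G)|/\alpha(G)$.
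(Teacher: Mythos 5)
The paper does not prove this statement; it is quoted verbatim as a black-box hardness result, attributed to Feige--Kilian~\cite{FeigeK98} for the $\mathrm{NP}\ne\mathrm{ZPP}$ version and to Zuckerman~\cite{Zuckerman07} for the derandomization to $\mathrm{P}\ne\mathrm{NP}$. So there is no ``paper proof'' to compare against; the relevant comparison is with the cited literature.

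Your outline does track the actual architecture of Feige--Kilian: the soundness side really is the cheap direction via $\chi(G)\ge |V(G)|/\alpha(G)$, the completeness side really does hinge on a ``covering'' property of the proof system (a sub-polynomial family of proofs such that every accepting local view is realized by at least one of them, which then yields a small proper colouring of the FGLSS graph by ``first consistent proof''), and Zuckerman's extractors really are what replace the randomized amplification to push the hypothesis down to $\mathrm{P}\ne\mathrm{NP}$. That said, as written this is a plan, not a proof. You identify the crux yourself in the last paragraph---constructing a PCP that simultaneously has sub-polynomial covering number and inverse-polynomial soundness, and verifying that covering survives composition and repetition---and you leave it open. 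That step is not a bookkeeping detail one can wave at; it is the entire content of~\cite{FeigeK98} (and in particular it is \emph{not} obtained by starting from H{\aa}stad's independent-set PCP, whose covering number per random string is polynomial, as you note). Two smaller imprecisions: the graph should be built so that $\alpha$ counts random strings rather than views, which you get right informally but which needs the ``one vertex per accepted $\rho$'' bookkeeping spelled out; and the parameter tuning that turns soundness $\delta$ and cover size $t$ into a clean $|V|^{\epsilon}$ versus $|V|^{1-\epsilon}$ gap, possibly via a graph-product boost, needs to be carried out so that the two exponents match up with the same $\epsilon$. In short: correct architecture, honest about the gap, but the gap is exactly where the theorem lives, so this should be regarded as a citation with a roadmap rather than an independent proof.
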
 

\section{Meta Theorems}\label{sec:meta}

In this section, we prove general theorems that will be used in proving most hardness results in this paper.
These theorems give {\em abstractions} of the (graph product) properties one needs to prove in order to obtain hardness of approximation results.  
Our techniques can be used to derive hardnesses for both minimization and maximization problems. 
For the former, the reduction is from minimum coloring, while the latter is obtained via a reduction from maximum independent set. 

Let us start with maximization problems. 
Suppose we have an optimization problem $\Pi$ such that any instance $I \in \Pi$ is associated with an optimal function $\opt_{\Pi}(I)$. 
We consider a transformation $R$ that maps any graph $G$ into an instance $R[G]$ of the problem $\Pi$. 
We say that a transformation $R$ satisfies a {\em low $\alpha$-projection property} with respect to a {\em maximization problem} $\Pi$ if and only if the following two conditions hold: 

\begin{itemize} 
\item (I) For any graph $G=(V,E)$, $\opt_{\Pi}(R[G]) \geq \alpha(G)$.   

\item (II) There are universal constants $c_1, c_2 > 0$ (independent of the choices of graphs) such that, for any two graphs $G$ and $H$, 
\[\opt_{\Pi}(R[G \cdot H]) \leq |V(G)|^{c_1} + \alpha(G)^{c_2} \opt_{\Pi}(R[H]). \]

\item (III) There is a universal constant $c_0 > 0$ such that 
\[\opt_{\Pi}(R[G]) \leq c_0 |R[G]|. \]

\end{itemize} 

Intuitively, the transformation $R$ with the low $\alpha$-projection property tells us that there are relationships between the optimal solution of the problem $\Pi$ on $R[G]$ and the independence number of $G$.
Instead of looking for a sophisticated construction of $R$, we focus on a ``simple'' transformation $R$ that establishes a connection on one side, i.e., $\opt_{\Pi}(R[G]) \geq \alpha(G)$, and the ``growth'' of $\opt_{\Pi}$ is ``slow'' with respect to graph products.   
Property (III) of the low $\alpha$-projection property says that the optimal is at most linear in the size of the instance, which is the case for almost every natural combinatorial optimization problem.

Next, we turn our focus to a minimization problem. 
In this case, we relate the optimal solution to the chromatic number of an input graph. 
Specifically, one can define the {\em low $\chi$-projection property} with respect to a {\em minimization problem} $\Pi$ as follows.

\begin{itemize} 
\item (I) For any graph $G=(V,E)$, $\opt_{\Pi}(R[G]) \geq \chi(G)$.   

\item (II) There are universal constants $c_1, c_2 > 0$ (independent of the choices of graphs) such that, for any two graphs $G$ and $H$, we have 
\[\opt_{\Pi}(R[G \cdot H]) \leq  |V(G)|^{c_1} + \chi(G)^{c_2} \opt_{\Pi}(R[H]).  \]

\item (III) There is a universal constant $c_0>0$ such that
\[\opt_{\Pi}(R[G]) \leq c_0 |R[G]|.\]

\end{itemize}

We observe that the existence of such reductions is sufficient for establishing hardness of approximation results, and the hardness factors achievable from the theorems depend on the size of the reduction.

\begin{theorem} [Meta-Theorem for Maximization Problems]
\label{thm: meta alpha}  
Let $\Pi$ be a maximization problem for which there is a reduction $R$ for $\Pi$ that satisfies low $\alpha$-projection property with $|R[G]| = O(|V(G)|^{d})$.  
Then for any $\epsilon >0$, given an instance $I$ of $\Pi$, it is NP-hard to distinguish between the following two cases: 

\begin{itemize} 
\item (\yi:) $\opt_{\Pi}(I) \geq |I|^{1/d-\epsilon}$ 

\item (\ni:) $\opt_{\Pi}(I) \leq |I|^{\epsilon}$
\end{itemize}  
\end{theorem}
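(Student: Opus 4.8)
The plan is to combine the low $\alpha$-projection property with the known $n^{1-\epsilon}$ hardness of maximum independent set (Theorem~\ref{thm: ind set}) and the multiplicativity of $\alpha$ under the lexicographic product (Theorem~\ref{thm: property of lexicographic product}). Start from a graph $G_0$ on $N$ vertices for which it is NP-hard to distinguish $\alpha(G_0)\le N^{\delta}$ from $\alpha(G_0)\ge N^{1-\delta}$, where $\delta$ is a small constant to be chosen at the end in terms of $\epsilon$ and the reduction's constants $c_1, c_2, d$. Apply the reduction not to $G_0$ but to its $k$-fold lexicographic power $G_0^k$ for a suitable constant $k=k(\epsilon,c_1,c_2)$, producing the instance $I = R[G_0^k]$ of $\Pi$. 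Since $|V(G_0^k)| = N^k$ and $|R[G]| = O(|V(G)|^d)$, we get $|I| = O(N^{kd})$; call this bound $M$.

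The completeness direction is immediate from property (I): $\opt_\Pi(R[G_0^k]) \ge \alpha(G_0^k) = \alpha(G_0)^k \ge N^{(1-\delta)k}$ in the \yi\ case, which is $\ge M^{1/d - \epsilon}$ for $\delta$ small and $M$ large. The soundness direction is where the work is. In the \ni\ case $\alpha(G_0)\le N^\delta$, so $\alpha(G_0^j)=\alpha(G_0)^j\le N^{\delta j}$ for every $j$. I would unfold property (II) by induction on $k$, writing $G_0^k = G_0 \cdot G_0^{k-1}$ and applying the inequality
\[
\opt_\Pi(R[G_0^k]) \le |V(G_0)|^{c_1} + \alpha(G_0)^{c_2}\,\opt_\Pi(R[G_0^{k-1}])
\le N^{c_1} + N^{c_2\delta}\,\opt_\Pi(R[G_0^{k-1}]).
\]
Iterating this recurrence $k$ times, with base case $\opt_\Pi(R[G_0]) \le c_0 |R[G_0]| = O(N^{c_1 + d})$ from property (III), yields a geometric-type sum dominated by $\poly(N)\cdot N^{c_2 \delta k}$ — concretely something like $\opt_\Pi(R[G_0^k]) \le k\cdot N^{c_1 + d}\cdot N^{c_2\delta k}$. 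The key point is that the additive $|V(G)|^{c_1}$ terms accumulate only a factor of $k$ (a constant) and a fixed polynomial power of $N$, while the multiplicative blow-up is $N^{c_2\delta k}$, which stays below any prescribed polynomial fraction of $M = N^{kd}$ once $\delta$ is chosen small enough relative to $c_2/d$ and then $k$ chosen large enough to swamp the fixed polynomial slack. Hence $\opt_\Pi(R[G_0^k]) \le M^\epsilon$.

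The main obstacle is bookkeeping the two opposing effects in the soundness bound: we need $k$ large so that the fixed $\poly(N)$ overhead (from property (III) and the additive terms) becomes a negligible $M^{o(1)}$ factor, but $k$ large also amplifies $N^{c_2\delta k}$, so $\delta$ must first be pinned down as a function of $\epsilon, c_1, c_2, d$ before $k$ is chosen — the order of quantifier selection matters. I would set, say, $\delta$ so that $c_2\delta/d < \epsilon/2$, then pick $k$ so that $(c_1+d+\log_N k)/(kd) < \epsilon/2$; both are achievable since $\delta$ can be taken arbitrarily small and $k$ arbitrarily large while remaining constants (the NP-hardness of the independent set gap instance holds for every fixed $\delta>0$). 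Finally, the reduction $G_0 \mapsto R[G_0^k]$ runs in polynomial time because $k$ is a constant, so $G_0^k$ has polynomially many vertices and $R$ is polynomial; this certifies that the resulting gap problem for $\Pi$ is NP-hard, completing the proof. The minimization version (Theorem for minimization problems, not stated here but analogous) would follow identically using Theorem~\ref{thm: coloring} and the near-multiplicativity $\chi(G_0)^{k-o(1)} \le \chi(G_0^k)\le \chi(G_0)^k$ in place of the exact identity for $\alpha$.
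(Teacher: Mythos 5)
Your proposal is correct and follows essentially the same route as the paper: the paper also unrolls property (II) by induction to get a bound of the form $\opt_\Pi(R[G^\ell])\le \ell\, c_0 |V(G)|^{c_1+d+1}\alpha(G)^{2c_2\ell}$, then picks $\ell = \lceil 1/\epsilon\rceil$ and instantiates with the H\aa{}stad--Zuckerman gap instance so the fixed polynomial slack becomes $N^{O(\epsilon)}$. The only cosmetic difference is that you keep the independent-set gap parameter $\delta$ distinct from the target $\epsilon$ and choose them in sequence, whereas the paper identifies the two and absorbs the discrepancy into an $O(\epsilon)$; your explicit quantifier ordering makes the same bookkeeping slightly more transparent.
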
  

\begin{theorem} [Meta-Theorem for Minimization Problems]
\label{thm: meta chi}  
Let $\Pi$ be a minimization problem for which there is a reduction $R$ for $\Pi$ that satisfies low $\chi$-projection property with $|R[G]| = O(|V(G)|^{d})$, for some constant $d\geq 0$.
Then for any $\epsilon >0$, given an instance $I$ of $\Pi$, it is NP-hard to distinguish between the following two cases: 

\begin{itemize} 
\item (\yi:) $\opt_{\Pi}(I) \leq |I|^{\epsilon}$ 

\item (\ni:) $\opt_{\Pi}(I) \geq |I|^{1/d - \epsilon}$
\end{itemize}  
\end{theorem}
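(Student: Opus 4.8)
The statement is the meta-theorem for minimization problems (Theorem~\ref{thm: meta chi}), and the plan is to mirror exactly the argument for the maximization case (Theorem~\ref{thm: meta alpha}), replacing $\alpha$ by $\chi$ and using the multiplicativity of $\chi$ under the lexicographic product from Theorem~\ref{thm: property of lexicographic product}. Start from the hardness of approximating the chromatic number (Theorem~\ref{thm: coloring}): given a graph $G$ on $N$ vertices, it is NP-hard to distinguish $\chi(G)\le N^{\epsilon'}$ from $\chi(G)\ge N^{1-\epsilon'}$ for any constant $\epsilon'>0$. The reduction producing an instance of $\Pi$ is: take the $k$-fold lexicographic power $G^k$ (for a suitable constant $k=k(\epsilon,d)$), then apply $R$ to obtain $I=R[G^k]$. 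Since $|R[H]|=O(|V(H)|^d)$ and $|V(G^k)|=N^k$, we get $|I|=O(N^{kd})$.

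**Main steps.** First, unfold property (II) of the low $\chi$-projection property inductively along $G^k=G\cdot G^{k-1}$: writing $B=|V(G^k)|^{c_1}$ as a crude uniform upper bound on the $|V(G^j)|^{c_1}$ terms and using $\chi(G^j)\le\chi(G)^j\le\chi(G)^k$, one obtains
\[
\opt_\Pi(R[G^k]) \;\le\; \chi(G)^{c_2 k}\,\opt_\Pi(R[G]) \;+\; k\,\chi(G)^{c_2 k}\,|V(G^k)|^{c_1},
\]
and then combining with property (III), $\opt_\Pi(R[G])\le c_0|R[G]|=O(N^d)$, gives a bound of the shape $\opt_\Pi(R[G^k])\le \chi(G)^{c_2 k}\cdot N^{O(1)}$, where the $N^{O(1)}$ factor has an exponent depending only on $c_0,c_1,c_2,d$ and not on $k$. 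Second, in the \yi\ case $\chi(G)\le N^{\epsilon'}$, so this upper bound is at most $N^{c_2 k\epsilon' + O(1)}$; since $|I|=\Theta(N^{kd})$, choosing $k$ large enough (so that the additive $O(1)$ exponent is absorbed) and $\epsilon'$ small enough makes this at most $|I|^{\epsilon}$. Third, in the \ni\ case $\chi(G)\ge N^{1-\epsilon'}$; by property (I) and Theorem~\ref{thm: property of lexicographic product}, $\opt_\Pi(R[G^k])\ge\chi(G^k)\ge\chi(G)^{k-o(1)}\ge N^{(1-\epsilon')(k-o(1))}$, which is at least $|I|^{1/d-\epsilon}$ once $\epsilon'$ is small and $k$ large (again using $|I|=\Theta(N^{kd})$, so $|I|^{1/d}=\Theta(N^{k})$ up to lower-order factors). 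Finally, since the transformation $G\mapsto R[G^k]$ is polynomial-time for constant $k$, the NP-hardness of the coloring gap transfers to the stated gap for $\Pi$.

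**Where the care is needed.** The routine bookkeeping — unwinding (II), bounding the geometric-type sum of the $|V(G^j)|^{c_1}$ terms, and tracking the $o(1)$ in $\chi(G^k)\ge\chi(G)^{k-o(1)}$ — is all elementary; the only genuine point of attention is the \emph{order of quantifiers}: given the target $\epsilon$ and the reduction blow-up exponent $d$, one must first fix $k$ large enough that the instance size $|I|\approx N^{kd}$ dwarfs both the fixed additive polynomial overhead ($N^{O(1)}$, exponent independent of $k$) and the $o(1)$ term, and \emph{then} fix the coloring-hardness parameter $\epsilon'$ small enough in terms of $\epsilon$, $d$, $k$, and $c_2$. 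One must also check that the $\log|V(G)|$ loss in the lower bound $\chi(G\cdot H)\ge \chi(G)\chi(H)/\log|V(G)|$ from Theorem~\ref{thm: property of lexicographic product}, accumulated over $k$ levels, contributes only a $\mathrm{polylog}(N)$-type factor — i.e., a $N^{o(1)}$ factor — which is harmless against the polynomial gap. Modulo this, the proof is essentially a transcription of the proof of Theorem~\ref{thm: meta alpha} with $(\alpha,\text{maximization})$ replaced by $(\chi,\text{minimization})$ and the inequalities reversed appropriately.
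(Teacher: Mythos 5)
Your overall strategy is exactly the paper's: inductively unfold property (II) along $G^k = G\cdot G^{k-1}$ to get a bound of the shape $\chi(G)^{O(k)}\cdot N^{O(1)}$ with the polynomial overhead's exponent independent of $k$, then feed in the Feige--Kilian hardness of coloring, choosing $k$ first (large, depending on $\epsilon,d$) and then the coloring gap parameter $\epsilon'$ (small, depending on $\epsilon,d,k,c_2$). The paper packages the unfolding as an induction (Lemma~\ref{lem:induction-meta-min}), while you unroll the recurrence directly, but the content is the same.

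However, there is a concrete slip in your unrolled bound that contradicts the claim immediately following it. You write the additive terms appearing during the unfolding as $|V(G^j)|^{c_1}$ and uniformly bound them by $|V(G^k)|^{c_1}$. But property (II), applied to $G^{j+1}=G\cdot G^j$, contributes $|V(G)|^{c_1}$ at every step (the first factor is always the base graph $G$, not $G^j$), so the correct uniform bound is $|V(G)|^{c_1}=N^{c_1}$, not $|V(G^k)|^{c_1}=N^{kc_1}$. With your version the ``overhead'' is $k\,\chi(G)^{c_2 k}\,N^{kc_1}$, whose $N$-exponent \emph{does} grow with $k$, so the subsequent claim that the $N^{O(1)}$ factor has an exponent ``depending only on $c_0,c_1,c_2,d$ and not on $k$'' is false; in the \yi\ case you would then need $c_1 \le d\epsilon$ (after dividing by $k$), which fails for small $\epsilon$. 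Once you replace $|V(G^j)|^{c_1}$ by $|V(G)|^{c_1}$ in the recurrence, the geometric sum is at most $k\,|V(G)|^{c_1}\chi(G)^{c_2 k}$, the overhead becomes $O(kN^{\max(c_1,d)})$ with $k$-independent $N$-exponent, and your argument matches the paper's Lemma~\ref{lem:induction-meta-min} (which gives the cleaner $\ell c_0 |V(G)|^{c_1+d+1}\chi(G)^{2c_2\ell}$). With that one correction the proof is complete and coincides with the paper's.
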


\subsection{Proof of Theorem~\ref{thm: meta alpha} (Meta Theorem for Maximization Problems)}
\label{sec:proof-meta-thm-max}

Consider a reduction $R$ that transforms a graph $G$ into an instance of $\Pi$ that satisfies the low $\alpha$-projection property. 
We analyze how the optimal value changes over $\ell$-fold lexicographic products. 

\begin{lemma} \label{lem:induction-meta-max}
For any positive integer $\ell$,  
$\opt_{\Pi}(R[G^{\ell}]) \leq \ell c_0 |V(G)|^{c_1+d+1} \alpha(G)^{2c_2 \ell}$  
\end{lemma}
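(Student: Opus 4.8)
The plan is to induct on $\ell$. The base case $\ell = 1$ follows immediately from Property~(III) of the low $\alpha$-projection property: $\opt_\Pi(R[G]) \le c_0 |R[G]| = O(c_0 |V(G)|^d)$, which is comfortably below $c_0 |V(G)|^{c_1 + d + 1} \alpha(G)^{2 c_2}$ since $\alpha(G) \ge 1$ and $c_1 \ge 0$ (indeed we can afford a factor of $|V(G)|$ to spare). For the inductive step, I would write $G^\ell = G \cdot G^{\ell - 1}$ and apply Property~(II) with the first factor being $G$ and the second being $H = G^{\ell-1}$:
\[
\opt_\Pi(R[G^\ell]) \le |V(G)|^{c_1} + \alpha(G)^{c_2}\, \opt_\Pi(R[G^{\ell-1}]).
\]
Then substitute the inductive hypothesis $\opt_\Pi(R[G^{\ell-1}]) \le (\ell-1) c_0 |V(G)|^{c_1+d+1} \alpha(G)^{2 c_2 (\ell-1)}$ into the second term.

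The second term becomes $(\ell - 1) c_0 |V(G)|^{c_1 + d + 1} \alpha(G)^{c_2 + 2 c_2(\ell - 1)} = (\ell-1) c_0 |V(G)|^{c_1+d+1}\alpha(G)^{2c_2\ell - c_2}$, which is at most $(\ell - 1) c_0 |V(G)|^{c_1+d+1}\alpha(G)^{2 c_2 \ell}$. The first term $|V(G)|^{c_1}$ must be absorbed into one more copy of the "unit" $c_0 |V(G)|^{c_1+d+1}\alpha(G)^{2c_2\ell}$; this is fine because $c_0 \ge 1$, $\alpha(G) \ge 1$, and $|V(G)|^{c_1} \le |V(G)|^{c_1+d+1}$ (here the slack exponent $d+1$ — really just needing a nonnegative surplus — does the bookkeeping). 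Summing the two contributions gives $\le \ell\, c_0 |V(G)|^{c_1+d+1}\alpha(G)^{2c_2\ell}$, completing the induction.

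There is no real obstacle here; the lemma is a routine bookkeeping induction, and the only thing to be careful about is that the generous exponents in the target bound ($c_1 + d + 1$ rather than $c_1$, and the doubled $2 c_2 \ell$ rather than $c_2 \ell$) are precisely what gives enough room to absorb the additive $|V(G)|^{c_1}$ term at each of the $\ell$ steps and to swallow the mismatch between $\alpha(G)^{c_2 \ell}$ from naive iteration and the stated exponent. I would make sure to state explicitly the mild normalizations being used (that $c_0, \alpha(G), |V(G)| \ge 1$ and $c_1, d \ge 0$), since the proof leans on them; these hold in all intended applications. The analogous statement for the minimization/$\chi$ case is proved identically with $\alpha$ replaced by $\chi$.
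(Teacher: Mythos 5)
Your proof is correct and follows essentially the same inductive argument as the paper: write $G^\ell = G \cdot G^{\ell-1}$, apply Property~(II), substitute the inductive hypothesis, and absorb the additive $|V(G)|^{c_1}$ term using the slack in the exponents. The only (welcome) difference is that you spell out explicitly the implicit normalizations $c_0, \alpha(G), |V(G)| \ge 1$ and $c_1, d \ge 0$ that the paper's bookkeeping quietly relies on.
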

\begin{proof} 
This is proved by induction on a positive integer $\ell$. 
The base case $\ell=1$ holds because $\opt_{\Pi}(R[G]) \leq c_0 |R[G]| \leq c_0 |V(G)|^{d}$. 
Assume that the induction hypothesis holds for any $\ell>1$, and consider $\opt_{\Pi}(R[G^{\ell+1}])$. 
By writing $G^{\ell+1} = G \cdot G^{\ell}$ and applying the low $\alpha$-projection property, we have 
\[
\opt_{\Pi}(R[G^{\ell+1}]) \leq |V(G)|^{c_1} + \alpha(G)^{c_2} \opt_{\Pi}(R[G^{\ell}])
\] 

Then, by applying induction hypothesis, we have
\begin{eqnarray*} 
\opt_{\Pi}(R[G^{\ell+1}]) & \leq & |V(G)|^{c_1} + \alpha(G)^{c_2} \left( \ell c_0 |V(G)|^{c_1 + d+1} \alpha(G)^{2 c_2 \ell} \right)\\ 
& \leq & |V(G)|^{c_1} +\alpha(G)^{c_2+2 c_2 \ell} \ell c_0 |V(G)|^{c_1 + d+ 1} \\
 & \leq& (\ell+1) c_0 |V(G)|^{c_1 +  d +1} \alpha(G)^{2 c_2 (\ell+1)}  
\end{eqnarray*} 
\end{proof}  

We note that the exponent of the term $\alpha(G)$ depends on $\ell$ (the number of times the product is applied), while that of $|V(G)|$ does not. 
Intuitively speaking, this is why the contribution of the term $|V(G)|^{c_1}$ vanishes after taking graph products.  

\paragraph{Hardness of Approximation.}

Now we prove the hardness of approximation result claimed in Theorem~\ref{thm: meta alpha}.
Start from graph $G$ as given by Theorem~\ref{thm: ind set}.
Then construct an instance $R[G^{\ell}]$ with $\ell = \ceil{1/\epsilon}$.
This results in the instance $R[G^{\ell}]$ of the problem $\Pi$ of size $N  = |R[G^{\ell}]| = O(|V(G)|^{\ell d})$. 

In the \yi, we have
\[
\opt_{\Pi}(R[G^{\ell}]) \geq \alpha(G^{\ell}) = \alpha(G)^{\ell} \geq  |V(G)|^{(1-\epsilon) \ell} = N^{1/d-O(\epsilon)}.
\]

In the \ni, we have
\[
\opt_{\Pi}(R[G^{\ell}]) \leq O(|V(G)|^{d+c_1 +1} \alpha(G)^{2 c_2 
\ell}).
\]
 
Since $\alpha(G) \leq |V(G)|^{\epsilon}$ in this case, we have 
\[
\alpha(G)^{2 c_2 \ell} \leq |V(G)|^{ 2 c_2} = |V(G)|^{O(1)} = N^{O(\epsilon)}.
\]
This implies that $\opt_{\Pi}(R[G^{\ell}]) \leq |V(G)|^{O(1)} N^{O(\epsilon)} = N^{O(\epsilon)}$, and the gap between \yi and \ni is $N^{1/d-O(\epsilon)}$.
This completes the proof.

\subsection{Proof of Theorem~\ref{thm: meta chi} (Meta Theorem for Minimization Problems)}
\label{sec:proof-meta-thm-min}

Similarly to the case of maximization problems, we can prove the following lemma by induction on integers $\ell$.
We shall skip the proof as it is the same as that of Lemma~\ref{lem:induction-meta-max} except that $\alpha(G)$ is replaced by $\chi(G)$.

\begin{lemma} \label{lem:induction-meta-min}
For any positive integer $\ell$, $\opt_{\Pi}(R[G^{\ell}]) \leq \ell c_0 |V(G)|^{c_1 + d+1} \chi(G)^{2 c_2 \ell}$. 
\end{lemma}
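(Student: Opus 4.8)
The plan is to mirror the proof of Lemma~\ref{lem:induction-meta-max} essentially verbatim, carrying out an induction on $\ell$ in which every appearance of the independence number $\alpha$ is replaced by the chromatic number $\chi$. The only inputs needed are properties (I)--(III) of the low $\chi$-projection property, and in fact only (II) and (III) are used in the inductive computation.

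For the base case $\ell = 1$, property (III) gives $\opt_{\Pi}(R[G]) \le c_0 |R[G]| = O(c_0 |V(G)|^{d})$, and since $\chi(G) \ge 1$ and $|V(G)|^{c_1+1} \ge 1$, this is at most $c_0 |V(G)|^{c_1+d+1}\chi(G)^{2c_2}$ (absorbing the hidden constant from $|R[G]| = O(|V(G)|^{d})$ into $c_0$). For the inductive step, assume the bound for $\ell$ and write $G^{\ell+1} = G \cdot G^{\ell}$. Property (II), applied to the pair of graphs $G$ and $G^{\ell}$, yields
\[
\opt_{\Pi}(R[G^{\ell+1}]) \le |V(G)|^{c_1} + \chi(G)^{c_2}\,\opt_{\Pi}(R[G^{\ell}]).
\]
Substituting the induction hypothesis $\opt_{\Pi}(R[G^{\ell}]) \le \ell c_0 |V(G)|^{c_1+d+1}\chi(G)^{2c_2\ell}$, and using $c_2 \le 2c_2$ so that $\chi(G)^{c_2 + 2c_2\ell} \le \chi(G)^{2c_2(\ell+1)}$, gives
\[
\opt_{\Pi}(R[G^{\ell+1}]) \le |V(G)|^{c_1} + \ell c_0 |V(G)|^{c_1+d+1}\chi(G)^{2c_2(\ell+1)}.
\]
The only point requiring a little care is the book-keeping at the end: the stray additive term $|V(G)|^{c_1}$ must be absorbed into one extra copy of the product $c_0 |V(G)|^{c_1+d+1}\chi(G)^{2c_2(\ell+1)}$. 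This is legitimate because $c_0 \ge 1$, $|V(G)|^{d+1} \ge 1$, and $\chi(G)^{2c_2(\ell+1)} \ge 1$, so $|V(G)|^{c_1} \le c_0 |V(G)|^{c_1+d+1}\chi(G)^{2c_2(\ell+1)}$, and the bound becomes $(\ell+1) c_0 |V(G)|^{c_1+d+1}\chi(G)^{2c_2(\ell+1)}$, completing the induction.

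I expect no genuine obstacle here: the statement is a routine analogue of Lemma~\ref{lem:induction-meta-max}, whose proof is already written out in the excerpt. The one thing to double-check is that the universal constants $c_0, c_1, c_2$ furnished by the low $\chi$-projection property may be taken to be at least $1$ (which can always be arranged by enlarging them), exactly as was tacitly assumed in the maximization case; granting this, the computation above is mechanical. The subsequent derivation of the $|I|^{1/d-\epsilon}$ versus $|I|^{\epsilon}$ gap claimed in Theorem~\ref{thm: meta chi} then proceeds exactly as in Section~\ref{sec:proof-meta-thm-max}: start from the graph $G$ of Theorem~\ref{thm: coloring}, form $R[G^{\ell}]$ with $\ell = \ceil{1/\epsilon}$, use property (I) and multiplicativity of $\chi$ under the lexicographic product (Theorem~\ref{thm: property of lexicographic product}) for the \yi\ bound, and use Lemma~\ref{lem:induction-meta-min} together with $\chi(G) \le |V(G)|^{\epsilon}$ for the \ni\ bound, so that the term $\chi(G)^{2c_2\ell}$ becomes $N^{O(\epsilon)}$ and the gap is $N^{1/d - O(\epsilon)}$.
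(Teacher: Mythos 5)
Your proof is correct and matches the paper's intent exactly: the paper explicitly skips this proof, noting it is identical to that of Lemma~\ref{lem:induction-meta-max} with $\alpha$ replaced by $\chi$, and your write-up is precisely that substitution carried out in full (with the same base-case-via-(III), inductive-step-via-(II), and absorption of the additive $|V(G)|^{c_1}$ term). Your remark that the universal constants should be taken $\ge 1$ is a fair observation about a tacit assumption, but raises no genuine issue.
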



\paragraph{Hardness of Approximation}

Take the instance $R[G^{\ell}]$ with $\ell = \ceil{1/\epsilon}$. 

In the \yi, we have the following bound, which is slightly different from the case of the maximization problem.
\[
\opt_{\Pi}(R[G^{\ell}]) \geq \chi(G^{\ell}) \geq (\chi_f(G))^{\ell} \geq |V(G)|^{\ell(1-2\epsilon)} = N^{1/d- O(\epsilon)}.
\]
 
In the \ni, Lemma~\ref{lem:induction-meta-min} gives $\opt_{\Pi}(R[G^{\ell}]) \leq N^{O(\epsilon)}$. 
Thus, we have the desired gap, completing the proof.

\subsection{Overview of Applications} 

Most of the reductions in this paper are direct applications of the above two meta theorems. 
That is, we design the following reductions.

\begin{itemize} 
\item A reduction $R_{\EDP}$ for \EDP such that $|R_{edp}[G]| = O(|V(G)|^2)$ and satisfies $\alpha$-projection property. 
This implies a tight $n^{1/2-\epsilon}$ hardness of approximating \EDP on DAGs. 

\item A reduction $R_{fa}$ for MinCon such that $|R_{fa}[G]| = O( |V(G)|^2)$ and satisfies $\chi$-projection property. 
This gives $n^{1/2-\epsilon}$ hardness of approximating MinCon(NFA,ADFA). 

\end{itemize} 

Notice that the reduction $R_{fa}$ above is not tight. 
To obtain a tight result, we need $|R_{fa}[G]| = O(|V(G)|)$, and it seems difficult to obtain such a reduction.
We instead exploit the further structure of graph products and prove bounds of the form 
\[ \chi(G^k) \leq \opt(R'_{fa}[G^k]) \leq \chi(G)^{O(k)} |V(G)|^{O(1)}  \] 
Now our reduction size is smaller, i.e., $|R'_{fa}[G^k]| = |V(G)|^{(1+o(1))k}$ as opposed to $|R_{fa}[G^k]| = |V(G)|^{2k}$.
 Moreover, the reduction $R'_{fa}$ exploits the fact that the input graph is written as a $k$-fold product of graphs.
This more restricted form of graph products allows us to prove tight hardness (and PAC impossibility result) of DFA and DNF/CNF Minimization.

\section{Hardnesses of Finite Automata Problems: Minimum Consistency and Proper PAC Learning}
\label{sec:consistency-problem}

We show in this section the hardness of the consistency problems for finite automata, as well as the implications on impossibility results for PAC learning.
We start our discussion by proving the hardness for 
{\sc MinCon}(ADFA,NFA), which includes the minimum consistent NFA problem ({\sc MinCon}(NFA,NFA)) as a special case.
Then we proceed to prove the tight hardness of approximating 
{\sc MinCon}(ADFA,DFA), which implies the tight hardness of approximating 
the minimum consistent DFA problem and also implies the impossibility result
for proper PAC-learning DFA. 

\subsection{Hardness of {\sc MinCon}(ADFA, NFA) via Graph Products}
\label{sec:min-NFA}

In this section, we show an $N^{1/2-\epsilon}$ hardness for {\sc MinCon}(ADFA,NFA).
Formally, we prove the following theorem. 

\begin{theorem} 
\label{thm: nfa hardness}
Let $\epsilon >0$ be any positive constant.  
Given two sets of positive and negative sample strings $\pset, \nset$ over alphabet $\Sigma= \set{0,1}$ with a total length of $N$ bits, it is NP-hard to distinguish the following two cases:
\begin{itemize} 
\item There is an acyclic deterministic finite automata of size $N^{\epsilon}$ that is consistent with all strings in $\pset \cup \nset$.
\item Any non-deterministic finite automata consistent with $\pset \cup \nset$ must have at least $N^{1/2 -\epsilon}$ states.    
\end{itemize}  
\end{theorem}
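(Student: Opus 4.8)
By the Meta-Theorem for minimization problems (Theorem~\ref{thm: meta chi}), it suffices to construct a reduction $R_{fa}$ that maps an arbitrary graph $G$ to an instance $R_{fa}[G]$ of {\sc MinCon}(ADFA, NFA) of size $O(|V(G)|^2)$ bits, and to verify that $R_{fa}$ satisfies the low $\chi$-projection property (I)--(III). Applying the meta theorem with $d=2$ then yields exactly the claimed $N^{1/2-\epsilon}$ gap, with the Yes-Instance realized by an ADFA of size $N^{\epsilon}$ and the No-Instance forcing every NFA to have $\ge N^{1/2-\epsilon}$ states. So the entire argument reduces to designing $R_{fa}$ and proving three inequalities about it.

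\textbf{The reduction.} Given $G=(V,E)$ with $V=\{1,\dots,m\}$, I would encode each vertex $i$ by a short binary string $\langle i\rangle$ of length $O(\log m)$. The sample sets are built so that a consistent automaton is ``morally'' forced to sort vertices into color classes: roughly, for each edge $ij\in E$ I create a negative sample that encodes the pair $(i,j)$ (so that $i$ and $j$ cannot be ``merged''), and for each vertex $i$ a positive sample encoding $(i,i)$; a consistent automaton of size $t$ then corresponds to a partition of $V$ into $t$ parts none of which contains both endpoints of an edge, i.e.\ a proper $t$-coloring. The total number of samples is $O(m^2)$ and each has length $O(\log m)$, giving $|R_{fa}[G]| = O(m^2\log m) = O(|V(G)|^{2+o(1)})$; the $o(1)$ in the exponent is harmless for the meta theorem. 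Property (I), $\opt(R_{fa}[G])\ge \chi(G)$, comes from the forward direction: any NFA consistent with the samples induces a proper coloring with at most (a constant times) as many colors as states, using the standard correspondence between states reachable on the vertex-encodings and color classes. Property (III) is immediate since a trivial automaton of size $O(|R_{fa}[G]|)$ accepts exactly $\pset$.

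\textbf{The key inequality (II).} The heart of the proof is the ``slow growth'' bound
\[
\opt(R_{fa}[G\cdot H]) \le |V(G)|^{c_1} + \chi(G)^{c_2}\,\opt(R_{fa}[H]).
\]
To prove it I would exhibit an explicit automaton for $R_{fa}[G\cdot H]$: take an optimal proper coloring $\sigma_G$ of $G$ with $\chi(G)$ colors and an (near-)optimal NFA $M_H$ for $R_{fa}[H]$, and build a new automaton that first reads the ``$G$-part'' of an input encoding a pair $((u,a),(v,b))$, branching into one of $\chi(G)$ threads according to $\sigma_G(u)$, then simulates $M_H$ on the ``$H$-part'' to separate the second coordinate --- with $O(|V(G)|)$ extra states at the front to parse and dispatch the $G$-coordinate. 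Correctness uses the definition of the lexicographic product: $(u,a)(v,b)\in E(G\cdot H)$ iff $uv\in E(G)$, handled by the coloring threads, or $u=v$ and $ab\in E(H)$, handled inside the copy of $M_H$. The state count is $O(|V(G)|) + \chi(G)\cdot|M_H|$, which gives (II) with $c_1=2$ (say) and $c_2=1$ --- and the same construction with a \emph{deterministic acyclic} automaton on the completeness side gives the ADFA upper bound of $N^{\epsilon}$ for the Yes-Instance.

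\textbf{Main obstacle.} The delicate part is \emph{not} the product bound (II) but making the sample gadget in $R_{fa}[G]$ robust enough that property (I) holds for \emph{nondeterministic} automata --- an NFA can use $\varepsilon$-like nondeterminism and re-use states across ``color classes'', so the naive argument ``states $\leftrightarrow$ colors'' must be replaced by a careful combinatorial argument (e.g.\ padding vertex encodings with unique prefixes, or adding auxiliary samples that pin down the state reached on each $\langle i\rangle$) to guarantee that a size-$t$ NFA still yields a proper $O(t)$-coloring. Getting this right while keeping the instance size $O(m^2)$ --- rather than, say, $O(m^3)$, which would only give $n^{1/3-\epsilon}$ --- is where the real care lies; everything else is the mechanical verification that the three properties hold so that Theorem~\ref{thm: meta chi} applies.
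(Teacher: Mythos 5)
Your overall plan — design a reduction $R$ from graphs to {\sc MinCon} instances, prove the three low-$\chi$-projection properties, and invoke Theorem~\ref{thm: meta chi} with $d=2$ — is exactly the paper's route, and your sketch of property (II) (branch on a coloring of $G$, then simulate a machine for $R[H]$) matches Lemma~\ref{lem:dfa-product}. The sample design you describe (positives encoding pairs $(i,i)$, negatives encoding $(i,j)$ for $ij\in E(G)$) is also morally the paper's $\pset=\{\langle u\rangle 1\langle u\rangle^R\}$, $\nset=\{\langle u\rangle 1\langle v\rangle^R: uv\in E(G)\}$.

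However, there is a genuine gap at precisely the point you flag as the ``main obstacle'': you do not actually prove property (I) against NFAs, and the workarounds you suggest (unique prefixes, auxiliary samples ``pinning down'' the state reached on $\langle i\rangle$) are not the right fix and risk inflating the instance size beyond $O(m^2)$. Crucially, you do \emph{not} need to pin down states at all, and you should worry about this: no set of $O(m^2)$ short samples can force an NFA to land in a \emph{unique} state on each $\langle i\rangle$. The paper sidesteps this entirely with a cleaner observation (Lemma~\ref{lem:independent-states}): for any NFA $M=(Q,\Sigma,\delta,q_0,F)$ consistent with $(\pset,\nset)$ and any vertex $u$, the reachable set satisfies
\[
\delta^*(q_0,\langle u\rangle)\ \not\subseteq\ \bigcup_{v:\,uv\in E(G)}\delta^*(q_0,\langle v\rangle).
\]
The proof is one line: the positive sample $\langle u\rangle 1\langle u\rangle^R$ is accepted, so some $q\in\delta^*(q_0,\langle u\rangle)$ has $\delta^*(q,1\langle u\rangle^R)\cap F\ne\emptyset$; if $q$ were also in $\delta^*(q_0,\langle v\rangle)$ for a neighbor $v$, then the negative sample $\langle v\rangle 1\langle u\rangle^R$ would be accepted. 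Assigning each $u$ to (any) state $q$ witnessing this non-containment defines color classes $C_q=\{u: q\in\delta^*(q_0,\langle u\rangle)\setminus\bigcup_{v\sim u}\delta^*(q_0,\langle v\rangle)\}$, which are independent sets and cover $V(G)$, giving a proper $|Q|$-coloring and hence $|Q|\ge\chi(G)$ exactly (not merely $O(|Q|)$ colors). This is the missing ingredient: the soundness argument works against arbitrary nondeterminism without any ``state pinning,'' and it is robust precisely because the sample strings have the \emph{prefix--separator--reversed-suffix} structure that lets you swap the prefix of a positive sample for a neighbor's prefix and land on a forbidden negative sample. Once you have this lemma, properties (I)--(III) and the meta theorem go through as you outline.
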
  

This is done by designing a reduction $R[G]$ with $\chi$-projection property and $|R[G]| = O(|V(G)|^2)$.
Our proof in fact shows that the projection properties hold for both optimal DFA and NFA functions. 

\subsubsection{The Reduction $R$}


We will be working with binary strings, i.e., the alphabet set $\Sigma = \set{0,1}$.
Given a graph $G=(V,E)$, we construct two sets $\pset,\nset$ of positive and negative samples,
which encode vertices and edges of the graph.
We assume w.l.o.g. that $|V(G)| = 2^k$ for some integer $k$.
Therefore, each vertex $u \in V(G)$ can be associated with a $k$-bit string $\enc{u} \in \set{0,1}^k$. 

Now our reduction $R[G]$ is defined as follows. 
The positive samples are given by  
$$ \pset = \set{\enc{u}1\enc{u}^R: u \in V(G)}$$ 
and the negative samples are 
$$ \nset = \set{\enc{u}1 \enc{v}^R : uv \in E(G)}$$ 
We denote this instance of the consistency problem by an ordered pair $(\pset, \nset)$. 
Now we proceed to prove property (I), that any NFA consistent with $(\pset, \nset)$ must have at least $\chi(G)$ states.

\begin{lemma} \label{lem:independent-states}
Let $M=(Q, \Sigma, \delta, q_0, F)$ be an NFA that is consistent with $(\pset, \nset)$. 
Then for any vertex $u\in V(G)$, 
\[
\delta^*(q_0, \enc{u}) \not \subseteq \bigcup_{v: uv \in E(G)} \delta^*(q_0, \enc{v}).
\]
\end{lemma}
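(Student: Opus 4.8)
The plan is to argue by contradiction. Suppose the containment fails for some vertex $u$, i.e.
\[
\delta^*(q_0, \enc{u}) \subseteq \bigcup_{v:\, uv \in E(G)} \delta^*(q_0, \enc{v}).
\]
Now feed the suffix $1\enc{u}^R$ to the machine. Since $M$ accepts the positive sample $\enc{u}1\enc{u}^R$, there must be a state $p \in \delta^*(q_0,\enc{u})$ from which reading $1\enc{u}^R$ leads into $F$, i.e. $\delta^*(p, 1\enc{u}^R) \cap F \neq \emptyset$. By the assumed containment, this state $p$ also lies in $\delta^*(q_0, \enc{v})$ for some neighbor $v$ of $u$. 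But then, starting from $q_0$, reading $\enc{v}$ can reach $p$, and from $p$ reading $1\enc{u}^R$ reaches an accepting state; concatenating, $M$ accepts $\enc{v}1\enc{u}^R$. Since $uv \in E(G)$, the string $\enc{v}1\enc{u}^R$ is a negative sample in $\nset$ — wait, I need to check orientation: $\nset = \{\enc{u}1\enc{v}^R : uv \in E(G)\}$, and since $E(G)$ is undirected, $uv \in E(G)$ implies the string $\enc{v}1\enc{u}^R$ is also in $\nset$. So $M$ accepts a negative sample, contradicting consistency.

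First I would fix notation and state the contradiction hypothesis cleanly. Then the key step is the "reroute through a neighbor" argument above: use acceptance of the positive sample $\enc{u}1\enc{u}^R$ to extract a witnessing accepting run, isolate the intermediate state $p$ reached after the prefix $\enc{u}$, invoke the (false) containment to relocate $p$ on a run over $\enc{v}$ for a neighbor $v$, and splice the two runs to get acceptance of $\enc{v}1\enc{u}^R \in \nset$. The only mild subtlety is bookkeeping with $\delta^*$ for NFAs — $\delta^*(q_0, xy) = \bigcup_{q \in \delta^*(q_0,x)} \delta^*(q, y)$ — which is exactly the lemma's phrasing reversed; this is where care is needed but it is entirely routine.

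I do not expect a real obstacle here; the lemma is essentially the observation that distinct states must "separate" a vertex from its neighborhood, which is the seed of the coloring lower bound (property (I)): a proper coloring arises by assigning to each state the set of vertices $u$ with a chosen accepting-run state in that state's "reachable-from-$\enc{u}$" set, and this lemma guarantees adjacent vertices land in different color classes. I would present only the contradiction argument for the lemma itself and defer the coloring consequence to the subsequent text.
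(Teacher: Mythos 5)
Your argument is correct and is essentially identical to the paper's proof: extract a state $q\in\delta^*(q_0,\enc{u})$ witnessing acceptance of the positive sample $\enc{u}1\enc{u}^R$, relocate it via the assumed containment into $\delta^*(q_0,\enc{v})$ for some neighbor $v$, and splice runs to force acceptance of the negative sample $\enc{v}1\enc{u}^R$. The orientation check you flagged is fine — since $G$ is undirected, $uv\in E(G)$ gives $\enc{v}1\enc{u}^R\in\nset$, exactly as in the paper.
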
  

\begin{proof}
Assume for contradiction that $\delta^*(q_0, \enc{u}) \subseteq \bigcup_{v: uv \in E(G)} \delta^*(q_0, \enc{v})$. 
Since $\enc{u}1\enc{u}^R$ is a positive sample, there is a state $q \in \delta^*(q_0, \enc{u})$
that leads to an accepting state (i.e., $\delta^*(q, 1 \enc{u}^R) \cap F \neq \emptyset$).
By the assumption, the state $q$ also belongs to another set $\delta^*(q_0, \enc{v})$ for some $v: uv \in E(G)$.

Now consider the string $\enc{v}1\enc{u}^R$, which is a negative sample because $vu\in E(G)$.
Since $q\in \delta^*(q_0, \enc{v})$ and $\delta^*(q, 1 \enc{u}^R) \cap F \neq \emptyset$,
the string $\enc{v}1\enc{u}^R$ must be accepted by $M$, a contradiction.
\end{proof} 

Lemma~\ref{lem:independent-states} implies in particular that, for each vertex $u \in V(G)$, the set 
$\delta^*(q_0, \enc{u}) \setminus \paren{\bigcup_{v: uv \in E(G)} \delta^*(q_0,\enc{v})}$ is not empty. 
Now denote by $\opt_{DFA}(R[G])$ and $\opt_{NFA}(R[G])$ the number of states in the minimum DFA and NFA that are consistent with the samples $R[G] = (\pset, \nset)_G$, respectively.

\begin{corollary}
\label{cor: soundness product}  
Any NFA $M$ that is consistent with $(\pset, \nset)_G$ must have at least $\chi(G)$ states.
Therefore, $\opt_{DFA}(R[G]) \geq \opt_{NFA}(R[G]) \geq \chi(G)$ for all $G$.  
\end{corollary}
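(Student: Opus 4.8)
The plan is to derive the corollary as an almost immediate consequence of Lemma~\ref{lem:independent-states}. The key observation is that the sets $\delta^*(q_0,\enc{u})$, one for each vertex $u\in V(G)$, behave like ``color classes'' in reverse: Lemma~\ref{lem:independent-states} says that for every vertex $u$, the set $\delta^*(q_0,\enc{u})$ contains at least one state $q_u$ that does \emph{not} lie in $\delta^*(q_0,\enc{v})$ for any neighbor $v$ of $u$. I would fix such a witness state $q_u$ for each $u$ (nonemptiness of the difference set is exactly the remark following the lemma). The map $u\mapsto q_u$ need not be injective, so I cannot directly bound $|Q|$ by $|V(G)|$; instead I would argue that $q_u$ induces a proper coloring of $G$.

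First I would set up the coloring: color vertex $u$ with the state $q_u\in Q$. To check this is proper, suppose $uv\in E(G)$ and, for contradiction, $q_u=q_v=:q$. Then $q\in\delta^*(q_0,\enc{u})$ and also $q\in\delta^*(q_0,\enc{v})$; but since $uv\in E(G)$, the choice of $q_u$ required $q_u\notin\delta^*(q_0,\enc{v})$ (as $v$ is a neighbor of $u$), a contradiction. Hence adjacent vertices get distinct colors, so this is a proper coloring of $G$ using at most $|Q|$ colors, giving $|Q|\ge\chi(G)$. Since $M$ was an arbitrary consistent NFA, every consistent NFA has at least $\chi(G)$ states, i.e.\ $\opt_{NFA}(R[G])\ge\chi(G)$. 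Finally, every DFA is in particular an NFA (with singleton transition sets), and any DFA consistent with $(\pset,\nset)_G$ is thus a consistent NFA with the same number of states, so $\opt_{DFA}(R[G])\ge\opt_{NFA}(R[G])\ge\chi(G)$.

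I do not expect a serious obstacle here; the only subtle point worth stating carefully is why the witness states $q_u$ can be chosen at all, which is precisely the nonemptiness noted right after Lemma~\ref{lem:independent-states}, and why the possible non-injectivity of $u\mapsto q_u$ is harmless---it is harmless exactly because we reinterpret the map as a coloring rather than an embedding. One should also be mildly careful that $\delta^*(q_0,\enc{u})$ is the set of states reachable from $q_0$ on reading the $k$-bit prefix $\enc{u}$, which matches the extended-transition notation from the preliminaries, so there is no ambiguity about what ``$\delta^*(q_0,\enc{u})$'' means for an NFA. With those points in place the corollary follows in a few lines.
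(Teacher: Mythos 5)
Your proposal is correct and follows essentially the same argument as the paper: the paper defines, for each state $q$, the set $C_q=\{u:\ q\in\delta^*(q_0,\enc{u})\setminus\bigcup_{v:uv\in E}\delta^*(q_0,\enc{v})\}$ and observes these form proper color classes covering $V(G)$, while you instead fix one witness state $q_u$ per vertex and read the map $u\mapsto q_u$ as a proper coloring — the same construction phrased dually, and your check that it is proper is the same contradiction the paper uses to show each $C_q$ is independent.
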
 

\begin{proof} 
For each state $q \in Q$, define a set $C_q = \set{u \in V(G): q \in \delta^*(q_0, \enc{u}) \setminus (\bigcup_{v: uv\in E(G)} \delta^*(q_0, \enc{v})}$. 
It is easy to see that $C_q$ is an independent set and thus form a proper color class of $G$.
Lemma~\ref{lem:independent-states} implies that each vertex $u \in V(G)$ belongs to at least one class.
So, $\set{C_q}_{q \in Q}$ gives a proper $|Q|$-coloring of $G$, 
implying that $|Q| \geq \chi(G)$. 
\end{proof}

  
\subsubsection{$\chi$-Projection Property}


We will consider a specific class of DFA $M= (Q, \Sigma, \delta, q_0, F)$, which we call {\em canonical DFA}.
Specifically, we say that a DFA is {\em canonical} if it has the following properties.  

\begin{itemize} 
\item The state diagram has exactly $\ell$ layers for some $\ell$, and each path from $q_0$ to any sink has length exactly $\ell$.

\item All accepting states are in the last layer.   

\end{itemize} 

Denote shortly by $\opt(R[G])$ the number of states in the minimum canonical DFA consistent with $R[G]$. 
So we have that $\opt(R[G]) \geq \opt_{DFA}(R[G]) \geq \opt_{NFA}(R[G])$. 
The following lemma gives the $\chi$-projection property for $\opt(\cdot)$ 

\begin{lemma} \label{lem:dfa-product}  
$\opt(R[G \cdot H]) \leq  \chi(G) (\opt(R[H]) + O(|V(G)|))$ 
\end{lemma}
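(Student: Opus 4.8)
I want to show $\opt(R[G \cdot H]) \leq \chi(G)(\opt(R[H]) + O(|V(G)|))$, where $R$ produces the sample sets $\pset = \{\enc{u}1\enc{u}^R : u \in V\}$ and $\nset = \{\enc{u}1\enc{v}^R : uv \in E\}$, and $\opt$ counts states in the smallest canonical DFA. The idea is to build a consistent canonical DFA for $R[G\cdot H]$ out of (i) an optimal proper coloring of $G$ with $\chi(G)$ colors and (ii) an optimal canonical DFA $M_H$ for $R[H]$. The key observation is that a vertex of $G\cdot H$ is a pair $(u,v)$ with $u \in V(G)$, $v \in V(H)$, so its encoding $\enc{(u,v)}$ naturally splits as $\enc{u}\enc{v}$ (fixing a consistent bit-length convention), and the reversal splits as $\enc{v}^R \enc{u}^R$. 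The structure of the edge set of the lexicographic product means: $(u,a)(v,b) \in E(G\cdot H)$ iff $uv \in E(G)$, or ($u = v$ and $ab \in E(H)$). So a positive sample is $\enc{u}\enc{v}1\enc{v}^R\enc{u}^R$, and a negative sample is either $\enc{u}\enc{a}1\enc{b}^R\enc{v}^R$ with $uv\in E(G)$ (the "$G$-disagreeing" case, where the $u$-part and $v$-part differ and are adjacent in $G$), or $\enc{u}\enc{a}1\enc{b}^R\enc{u}^R$ with $ab \in E(H)$ (the "$H$-disagreeing" case).

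**Construction.** First I would read the first block of $k = \log|V(G)|$ bits to identify $u \in V(G)$; this is a binary tree of $O(|V(G)|)$ states ending in $|V(G)|$ states, one per $u$. From the state for $u$, I run a private copy of $M_H$ (the optimal canonical DFA for $R[H]$) on the next $k' = \log|V(H)|$ bits plus the middle $1$ plus the first $k'$ bits after it — i.e. on the substring $\enc{v}\enc{a} 1 \enc{b}^R\enc{a}^R$ wait, I need to be careful with the reversal order. Let me re-examine: the sample is $\enc{(u,v)}\,1\,\enc{(u',v')}^R$ where $\enc{(u,v)} = \enc{u}\enc{v}$ and $\enc{(u',v')}^R = \enc{v'}^R\enc{u'}^R$. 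So the string is $\enc{u}\,\enc{v}\,1\,\enc{v'}^R\,\enc{u'}^R$. After reading $\enc{u}$, the remaining suffix to process is $\enc{v}\,1\,\enc{v'}^R\,\enc{u'}^R$, which is exactly a sample-string of $R[H]$ (namely $\enc{v}1\enc{v'}^R$) followed by the trailing tag $\enc{u'}^R$. So from state $u$, I run a copy of $M_H$; after it reads $\enc{v}1\enc{v'}^R$ it lands in some state $q$. Now I need $k$ more layers reading $\enc{u'}^R$, and I accept iff (a) $M_H$'s run would have accepted $\enc{v}1\enc{v'}^R$ AND $v = v'$ (i.e. $M_H$ accepted), and (b) $u' $ has the same color as $u$, equivalently — I need a cleaner accept condition. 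The right condition: accept iff ($M_H$ accepts, which forces $v=v'$) and ($\chi$-color of $u$ equals $\chi$-color of $u'$). For the trailing $\enc{u'}^R$-reading gadget I use a tree of $O(|V(G)|)$ states that computes the color class of $u'$, and I make the final decision depend on the pair (color class carried over from the $u$-prefix, color class of $u'$). Carrying the color of $u$ forward costs only a factor $\chi(G)$ in states (duplicate the $M_H$ copy and the trailing gadget once per color class of $u$ — but actually it suffices to have $\chi(G)$ parallel copies indexed by the color of $u$), giving total state count $O(|V(G)|) + \chi(G)\cdot(\opt(R[H]) + O(|V(G)|))$, which is the claimed bound (the leading $O(|V(G)|)$ is absorbed).

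**Correctness and the main obstacle.** I then verify consistency. A positive sample $\enc{u}\enc{v}1\enc{v}^R\enc{u}^R$: $M_H$ sees $\enc{v}1\enc{v}^R \in \pset_H$, accepts; colors of $u$ and $u$ agree; so the DFA accepts — good. A negative sample with $uv \in E(G)$, of the form $\enc{u}\enc{a}1\enc{b}^R\enc{v}^R$: here the final color-comparison is between the color of $u$ and the color of $v$; since $uv \in E(G)$ and we used a proper coloring, these colors differ, so the DFA rejects — good, regardless of what $M_H$ did. A negative sample with $ab \in E(H)$, of the form $\enc{u}\enc{a}1\enc{b}^R\enc{u}^R$: colors agree (both $u$), so acceptance rests on $M_H$; but $\enc{a}1\enc{b}^R$ with $ab\in E(H)$ is a negative sample of $R[H]$, so $M_H$ (being consistent with $R[H]$) rejects it — good. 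So the DFA is consistent; it is clearly canonical by construction (fixed layer count $k + (\text{depth of }M_H) + k$, all accepting states last).

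The main obstacle — and the thing I'd spend the most care on — is the bookkeeping around string lengths and the reversal: ensuring the encoding of $V(G\cdot H)$ is set up so that $\enc{(u,v)}$ literally equals the concatenation $\enc{u}\enc{v}$ and its reversal literally equals $\enc{v}^R\enc{u}^R$, so that the suffix after $\enc{u}$ is exactly a valid $R[H]$ sample-string with a clean $\enc{u'}^R$ tag appended, and that the canonical-DFA layer structure lines up (every root-to-sink path has the same length). A secondary subtlety: $M_H$ is only guaranteed consistent, not "honest" on strings that aren't samples, but I only ever feed it genuine sample-strings of $R[H]$ (the middle portion $\enc{v}1\enc{v'}^R$), so consistency is all I need. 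I would also double-check that the $\chi(G)$-fold duplication is genuinely necessary only on the parts after the first $\enc{u}$-reading tree and doesn't multiply the initial $O(|V(G)|)$ tree, keeping the additive $O(|V(G)|)$ outside the $\chi(G)$ factor exactly as the statement permits (it allows $\chi(G)\cdot O(|V(G)|)$ inside anyway, so this is not tight-sensitive here).
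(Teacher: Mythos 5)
Your construction is essentially the same as the paper's: a $T_k$-tree to read $\enc{u}$, then $\chi(G)$ copies of $M_H$ indexed by the color class of $u$, then a $T_k$-tree per color class whose accepting leaves are exactly the vertices of that color class, with rejecting runs padded to the last layer to keep the DFA canonical. The only blemish is the parenthetical ``(i.e.\ $M_H$ accepted)'' equating ``$M_H$ accepts the middle block'' with ``$v=v'$''---a consistent $M_H$ may accept $\enc{v}1\enc{v'}^R$ for $v\neq v'$ when $vv'\notin E(H)$---but your verification never actually uses that equivalence, so the argument stands.
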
  

To prove this lemma, we show how to construct, given a canonical DFA for $R[H]$, a ``compact'' canonical DFA for $R[G \cdot H]$.
We note that one key idea here is to avoid exploiting the DFA for $R[G]$ but instead tries to use the color classes of $G$ in its optimal coloring to ``compress'' the DFA for $R[G \cdot H]$.  

\begin{proof}
Let $M_H= (Q_H, \set{0,1}, \delta_H, q_H, F_H)$ be the minimum DFA for the instance $R[H]$ whose number of states is $s=\opt(R[H])$ and has $\ell_H = 2 h +1$ layers for $h = \ceil{\log |V(H)|}$. 
Let $C_1,\ldots, C_B$ be the color classes of $G$ defined by the optimal coloring, so $B = \chi(G)$. Let $f: V(G) \rightarrow [B]$ be the corresponding coloring function.  
We will also be using several copies of a directed complete binary tree with $2^k$ leaves, 
where each leaf corresponds to a string in $\set{0,1}^k$ and is associated with a vertex in $V(G)$.
Call this directed binary tree $T_k$.  

We will use $M_H$ and $T_k$ to construct a new acyclic DFA $M$ that have at most $ B(s + O(|V(G)|))$ states and exactly $\ell = 2(k+\ell_H) +1$ layers. 
Now we proceed with the description of machine $M=(Q,\set{0,1}, \delta, q, F)$. 
We start by taking a copy of directed tree $T_k$, and call this copy $T_k^{(0)}$. 
The starting state $q$ is defined to be the root of $T_k^{(0)}$.
This is the {\em first phase} of the construction. Notice that there are $k$ layers in the first phase, so exactly $k$ positions of any input string will be read after this phase.
Each state in the last layer is indexed by $state(\enc{v})$ for each $v \in V(G)$. 

In the {\em second phase}, we take $B$ copies of the machines $M_H$ where the $j^{th}$ copy, 
denoted by $M_H^{(j)}=(Q_H^{(j)}, \{0,1\}, \delta_H^{(j)}, q_H^{(j)}, F_H^{(j)})$, 
is associated with color class $C_j$ defined earlier.  
For each vertex $v \in V(G)$, we connect the corresponding state $state(\enc{v})$ in the last layer of Phase 1 to the starting state $q_H^{(f(v))}$. 
This transition can be thought of as a ``null'' transition which can be removed afterward, but keeping it this way would make the analysis simpler.  
Since each copy of $M_H$ has $2\ell_H +1$ layers, now our construction has exactly $2\ell_H +k +1$ layers.  

In the final phase, we first extend all rejecting states in $M_H^{(j)}$ by a unified path until it reaches layer $2(\ell_H +k) +1$. 
This is a rejecting state $rej_0$.  
Now, for each $j=1,\ldots, B$, we connect each accepting state in the last layer of $M_H^{(j)}$ to the root in the copy $T^{(j)}_k$ again by a ``null'' transition, so we reach the desired number of layers now (notice that each root-to-leaf path has $2(k+\ell_H) +1$ states.)
The states in the last layer of $T^{(j)}_k$ are indexed by $state(j,\enc{v})$.
The accepting states of $M$ are defined as $F=\bigcup_{j=1}^B\set{state(j, \enc{u}^R): u\in C_j}$, and the rest of the states are defined as rejecting.  
This completes our construction.  
See Figure~\ref{fig:dfa-product:upperbound} for illustration.

\begin{figure}
\begin{center}  
 \includegraphics[trim = 0mm 0cm 0cm 0mm, scale=0.51]{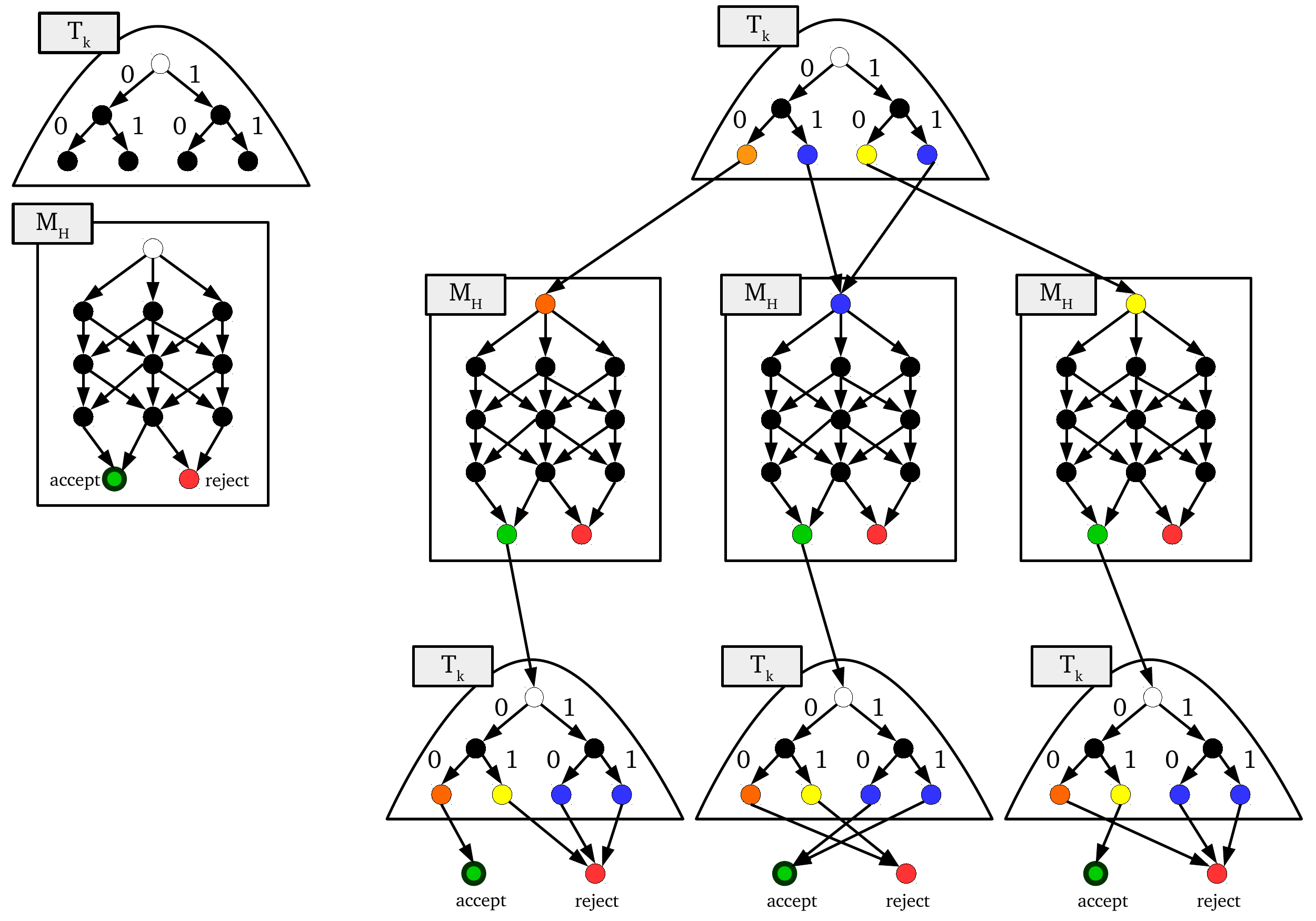}
\end{center}
\caption{The illustration of the construction in the proof of Lemma~\ref{lem:dfa-product}.
\label{fig:dfa-product:upperbound}
}
\end{figure}

The size of the construction is $|V(G)| + B s + O(|V(G)| B) = B(s + O(|V(G)|))$.  The next claim shows that the machine $M$ is consistent with samples obtained from the product of $G$ and $H$, which thus finish the proof.

\begin{claim} \label{claim:dfa-product:upperbound-consistent} 
Given a machine $M_H$ that is consistent with samples $R[H]$,
the machine $M$ constructed as above is consistent with samples $R[G \cdot H]$. 
\end{claim}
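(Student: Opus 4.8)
The plan is to prove the Claim by a direct trace of an input string through the three phases of $M$, checking separately that every positive sample of $R[G\cdot H]$ is accepted and every negative sample is rejected. Throughout we use for $R[G\cdot H]$ the encoding that writes a vertex $(u,v)\in V(G)\times V(H)$ as $\enc{(u,v)}=\enc{u}\enc{v}$, so that its first $k$ bits carry the $G$-coordinate, its next $h$ bits carry the $H$-coordinate, and $\enc{(u,v)}^R=\enc{v}^R\enc{u}^R$. The key structural observation to record first is that on any input of length $2k+2h+1$ the machine $M$ behaves as follows: Phase~1 (the tree $T_k^{(0)}$) consumes the first $k$ bits $\enc{u}$ and reaches $state(\enc{u})$, which null-transitions into $q_H^{(f(u))}$; the copy $M_H^{(f(u))}$ then consumes the next $2h+1$ bits and either ends in an accepting state — in which case we null-transition to the root of $T_k^{(f(u))}$ and consume the last $k$ bits, ending at some leaf $state(f(u),\cdot)$ — or ends in a rejecting state, in which case the unified path routes the computation to $rej_0$.

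For the positive samples: the sample for $(u,v)$ is $\enc{u}\enc{v}\,1\,\enc{v}^R\enc{u}^R$, so after Phase~1 the submachine $M_H^{(f(u))}$ reads its prefix $\enc{v}\,1\,\enc{v}^R$, which is exactly the positive $R[H]$-sample for $v$; by consistency of $M_H$ it ends in an accepting state, we enter $T_k^{(f(u))}$, read $\enc{u}^R$, and land at $state(f(u),\enc{u}^R)$. Since $f(u)$ is the colour of $u$ we have $u\in C_{f(u)}$, so this leaf is an accepting state of $M$ and the sample is accepted. For the negative samples: such a sample corresponds to an edge $(u,v)(u',v')$ of $G\cdot H$ and equals $\enc{u}\enc{v}\,1\,\enc{v'}^R\enc{u'}^R$, and by the definition of the lexicographic product one of two (not mutually exclusive) situations holds. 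If $u=u'$ and $vv'\in E(H)$, then $M_H^{(f(u))}$ reads $\enc{v}\,1\,\enc{v'}^R$, a negative $R[H]$-sample, so consistency of $M_H$ sends it to a rejecting state and the computation reaches $rej_0$. If instead $uu'\in E(G)$ (hence $u\ne u'$), then regardless of how $M_H^{(f(u))}$ treats $\enc{v}\,1\,\enc{v'}^R$: if it rejects we reach $rej_0$; if it accepts we enter $T_k^{(f(u))}$, read $\enc{u'}^R$, and land at $state(f(u),\enc{u'}^R)$, which is \emph{not} accepting because $C_{f(u)}$ is an independent set containing $u$ and $uu'\in E(G)$, so $u'\notin C_{f(u)}$. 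In all cases the sample is rejected, completing the argument.

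The step I expect to be the main obstacle — and the one that explains why the construction compresses $M_H$ across the \emph{colour classes} of $G$ rather than plugging in some DFA for $R[G]$ — is the second negative case, $uu'\in E(G)$. There the substring $\enc{v}\,1\,\enc{v'}^R$ handed to $M_H$ is generally \emph{not} a sample of $R[H]$ at all (it is not, when $v\ne v'$ and $vv'\notin E(H)$), so $M_H$'s behaviour on it is unconstrained and Phase~2 gives us nothing. One has to notice that the final tree $T_k^{(f(u))}$ re-reads the $G$-coordinate $\enc{u'}^R$ and accepts only when $u'$ carries colour $f(u)$, which an edge of $G$ forbids; thus Phase~2 certifies the $H$-edges and Phase~3 certifies the $G$-edges, and together they rule out exactly the edge set of $G\cdot H$. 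Once this division of labour is identified, the remaining bookkeeping (lengths adding up to $2k+2h+1$, the ``null'' transitions being harmless, every computation ending in the last layer) is routine.
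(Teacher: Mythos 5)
Your proof is correct and follows essentially the same structure as the paper's: after tracking the input through the three phases, you accept positives via $M_H$'s consistency plus $u\in C_{f(u)}$, and for negatives you split on whether the $G\cdot H$-edge comes from an $H$-edge (handled by $M_H$ in Phase~2) or a $G$-edge (handled by $T_k^{(f(u))}$ in Phase~3, since $u'\notin C_{f(u)}$). Your added remark isolating the second negative case as the crux is a nice piece of exposition but does not change the argument; one tiny nit is that the two cases are in fact mutually exclusive in a simple graph, though that has no bearing on the proof.
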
 

\begin{proof} 
First we check the positive sample. 
For each vertex $(u,a) \in V(G \cdot H)$, the corresponding string $\enc{(u,a)} 1 \enc{(u,a)}^R$ can be thought of as $x=\enc{u} \enc{a} 1 \enc{a}^R \enc{u}^R$.
After the first $k$ transitions, the machine $M$ will stop at the state $q_H^{(f(u))}$.
Then the substring $\enc{a}1 \enc{a}^R$ will lead to an accepting state in $F_H^{(f(u))}$ (since $M_H$ is consistent with samples in $R[H]$).
Now, at the current state, we are at the root of the tree $T^{(f(u))}_k$, 
and we are left with the substring $\enc{u}^R$.
Since $u \in C_{f(u)}$, the substring $\enc{u}^R$ leads to an accepting state. 
This proves that the machine $M$ always accepts positive samples. 

Next, consider a negative sample $\enc{(u,a)} 1 \enc{(v,b)}^R$ generated by the edge $(u,a)(v,b) \in E(G)$.
Again, this can be thought of as $\enc{u} \enc{a} 1 \enc{b}^R \enc{v}^R$.
There are two possible cases: 

\begin{itemize} 
\item If $uv \in E(G)$, then the machine will enter $M_H^{(f(u))}$ after reading the substring $\enc{u}$.
Next, the machine reads the substring $\enc{a} 1 \enc{b}^R$.
If it manages to reach the third phase without rejection 
(i.e., $M_H$ accepts $\enc{a} 1 \enc{b}^R$),
then it will enter the tree $T_k^{(f(u))}$.
Note that there is no edge joining two vertices in $C_{f(u)}$
because it is a color class.
Thus, the substring $\enc{v}$ leads to a rejection because $uv\in E(G)$ implies that $v \not \in C_{f(u)}$.

(Notice that the rejection does not depend on what happens inside $M_H$.)  

\item If $u = v$ and $ab \in E(G)$, then after the first $k$ transitions, the machine enters $M_H^{(f(u))}$ with the input string $\enc{a} 1 \enc{b}^R$ for $ab \in E(H)$.
Since $M_H$ is consistent with samples in $R[H]$, this would lead to a rejection in $M_H^{(f(u))}$ and therefore in $M$.  

\end{itemize}   

\end{proof} 
 
\end{proof}

We will also need the base case condition as required by the low $\alpha$-projection property.  

\begin{lemma}
\label{lem: base}  
For any graph $G$, $\opt(R[G]) \leq O(|V(G)|^2 \log |V(G)|)$ 
\end{lemma}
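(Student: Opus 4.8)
The plan is to build a canonical DFA for $R[G]$ directly (without any graph product), showing that $O(|V(G)|^2\log|V(G)|)$ states suffice. Recall that $R[G]=(\pset,\nset)$ with $\pset=\{\enc{u}1\enc{u}^R : u\in V(G)\}$ and $\nset=\{\enc{u}1\enc{v}^R : uv\in E(G)\}$, where each $\enc{u}\in\{0,1\}^k$ with $k=\ceil{\log|V(G)|}$, so every sample has length $2k+1$. First I would read the prefix $\enc{u}$ using a copy of the complete binary tree $T_k$, whose root is the start state; the $|V(G)|$ leaves are labelled $state(\enc{u})$ for $u\in V(G)$, and this phase contributes $O(|V(G)|)$ states across $k$ layers. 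After the single separator bit $1$ (one more layer), the machine is in a state that ``remembers'' $u$.

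Next, from each leaf $state(\enc{u})$ I would attach a dedicated gadget that reads the suffix $\enc{w}^R$ and accepts iff $w=u$ and rejects iff $uw\in E(G)$ (the behaviour on other $w$ is unconstrained, so I can simply reject). The natural choice is a ``reversed'' binary tree of depth $k$ hanging off $state(\enc{u})$ that checks the suffix bit-by-bit against the fixed string $\enc{u}^R$, routing to an accepting sink exactly on the path spelling $\enc{u}^R$ and to a common rejecting sink otherwise. Each such gadget has $O(k)=O(\log|V(G)|)$ states, and there is one per vertex, giving $O(|V(G)|\log|V(G)|)$ states in this phase. Padding all short reject/accept paths to the uniform length $\ell=2k+1$ uses only $O(k)$ extra states per branch, which is absorbed in the same bound. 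Since $uw\in E(G)$ forces $w\ne u$, the negative samples are rejected, and the positive sample $\enc{u}1\enc{u}^R$ is accepted by design, so $M$ is consistent.

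Counting: the prefix tree has $O(|V(G)|)$ states, and the suffix-checking gadgets contribute $O(|V(G)|)$ gadgets each of size $O(\log|V(G)|)$, for a total of $O(|V(G)|\log|V(G)|)$ states; this is in fact slightly stronger than, and certainly implies, the claimed $O(|V(G)|^2\log|V(G)|)$ bound. (If one prefers to stay safely within the stated bound and avoid any subtlety in sharing the prefix tree, one can even use one fresh length-$k$ reading gadget per vertex for the suffix and still land in $O(|V(G)|^2)$ or $O(|V(G)|^2\log|V(G)|)$.) The machine is canonical by construction: every root-to-sink path has length exactly $\ell=2k+1$ and all accepting states lie in the last layer. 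The only mild obstacle is bookkeeping the layer alignment so that the DFA is genuinely canonical — making sure the accept sink, the reject sink, and all intermediate branches are padded to exactly the same depth — but this is routine and costs only a constant factor per branch.
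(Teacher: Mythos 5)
Your construction is correct and in fact proves a strictly stronger bound, but it takes a genuinely different and more elaborate route than the paper's own proof, which is a one-liner. The paper simply takes the full complete binary tree $T_{2k+1}$ of depth $2k+1$ (one leaf for each string in $\set{0,1}^{2k+1}$), declares accepting exactly the leaves labelled $\enc{u}1\enc{u}^R$ for $u\in V(G)$, and observes that this is already a canonical DFA consistent with $(\pset,\nset)$ of size $O(2^{2k+1})=O(|V(G)|^2)$. There is no prefix tree, no per-vertex gadget, and no reject-chain padding to worry about: the full tree is depth-uniform with all accepting states in the last layer by construction. Your decomposition — the binary tree $T_k$ on the prefix, then a length-$k$ comparison gadget hanging off each leaf that matches the suffix against $\enc{u}^R$ and routes mismatches to a shared reject chain — gives $O(|V(G)|\log|V(G)|)$ states, which is smaller. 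However, the sharper base case buys nothing for the theorem: as the paper notes at the start of Section~\ref{sec:min-DFA}, the bottleneck of this reduction is that $|\nset|=\Theta(|V(G)|^2)$, which caps the obtainable hardness at $N^{1/2-\epsilon}$ regardless of how compact the base DFA is; that is precisely why the paper switches to the reduction $R_k[G^k]$ for the tight $N^{1-\epsilon}$ result. Two small notes on your write-up: the phrase ``reversed binary tree of depth $k$'' for the suffix gadget is misleading — since you size it at $O(k)$ states it is a path, not a tree with $2^k$ leaves; and ``$O(k)$ extra states per branch'' for the padding would, taken literally, give $O(|V(G)|\log^2|V(G)|)$ unless all reject exits funnel into one shared chain of $O(k)$ states per layer, which is the natural thing to do and what you presumably intend.
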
 
\begin{proof} 
We simply use the tree $T_{2k+1}$ with the initial state $q$ at the root of $T_{2k+1}$, where each vertex at the leaf can be associated with a string in $\set{0,1}^{2k+1}$. 
We simply define the accepting states to be those that correspond to the strings of the form $\enc{u} 1 \enc{u}^R$. 
The size of the construction is $2^{2k+1}  = O(|V(G)|^2 \log |V(G)|)$.  
\end{proof}

\subsection{Tight Hardness for {\sc MinCon}(ADFA, DFA)}
\label{sec:min-DFA}

Notice that the construction in the previous section is not tight because the size of the negative samples in $R[G]$ is large compared to the number of vertices in graph $G$, i.e., $|\nset| = \Theta(|V(G)|^2)$. 
To handle this problem, we take into account the structure of the lexicographic product and ``encode'' negative samples in a more compact form, i.e. we ideally want the construction size to be nearly linear on $|V(G)|$, i.e., $|\nset| = O(|V(G)|^{1+o(1)})$, instead of quadratic.

To this end, we construct a reduction $R_k[G^k]$  
We remark that, while the construction in this section gives tighter results for DFA, ADFA, and OBDD, it does not apply to NFA.  

\subsubsection{The Reduction $R_k[G^k]$}
\label{sec:min-DFA:reduction}

We show a reduction $R_k[G^k]$ of size $|R_k[G^k]| = |V(G)|^{k(1+o(1))}$. 
Consider a graph $H= G^k$ (the $k$-fold lexicographic product of $G$). 
We will encode the edge structures of $H$ into the positive and negative samples as follows. 

{\bf Positive Samples:}
For each $\vec{u}= (u_1,\ldots, u_k) \in V(H)$, define a positive sample 
\[
pos(\vec{u}, i) = \enc{u_1}\ldots \enc{u_k} 1 \enc{u_1} \ldots \enc{u_i}.
\]

The set of all positive samples is denoted by 
\[\pset = \set{pos(\vec{u}, i): u \in V(H), i =1\ldots k  } \] 

{\bf Negative Samples:}
For each a pair of vertices $\vec{u} \in V(H)$ and $v \in V(G)$ such that $u_i v \in E(G)$, define a negative sample
\[
neg(\vec{u}, v, i) = \enc{u_1}\ldots \enc{u_k} 1 \enc{u_1} \ldots \enc{u_{i-1}} \enc{v_i}\]

The set of all negative samples is denoted by 
\[
\nset = \set{neg(\vec{u}, v, i): \vec{u} \in V(H), v\in V(G), u_i v \in E(G), i = 1,\ldots, k } \]   

Intuitively, an edge in the the input graph represents a {\em conflict} between two vertices.
Negative samples are thus defined to capture a conflict (an edge) in the product of graphs between vertices $\vec{u}$ and $\vec{v}$ at coordinate $i$.
Notice that the size of positive and negative samples are $|\pset| = k n^k$ and $|\nset| = k n^k |E(G)| \leq k n^{k+2}$.

Let $\opt_{ADFA}(\cdot)$ denote the number of states in the optimal acyclic DFA that is consistent with the samples.  
We will prove the following lemma.

\begin{lemma}
\label{lmm:ineq-tight-min-DFA} 
$\chi(H) \leq \opt_{DFA}(\pset, \nset) \leq \opt_{ADFA}(\pset, \nset) \leq \chi(G)^{2k} |V(G)|^4.$
\end{lemma}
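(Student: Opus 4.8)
The plan is to establish the three inequalities separately. The middle inequality $\opt_{DFA}(\pset,\nset) \leq \opt_{ADFA}(\pset,\nset)$ is immediate since every acyclic DFA is a DFA, so every ADFA consistent with $(\pset,\nset)$ witnesses the DFA optimum. For the left inequality $\chi(H) \leq \opt_{DFA}(\pset,\nset)$, I would adapt the argument of Lemma~\ref{lem:independent-states} and Corollary~\ref{cor: soundness product} to the new encoding. Given a DFA $M$ consistent with $(\pset,\nset)$, consider for each $\vec u \in V(H)$ the state $\delta^*(q_0, \enc{u_1}\cdots\enc{u_k}1)$ reached after reading the prefix. The key claim is that if $\vec u$ and $\vec w$ are adjacent in $H = G^k$ — meaning they differ first at some coordinate $i$ with $u_iw_i \in E(G)$ and agree on coordinates $1,\dots,i-1$ — then the states reached after reading $\enc{u_1}\cdots\enc{u_k}1\enc{u_1}\cdots\enc{u_{i-1}}$ versus $\enc{w_1}\cdots\enc{w_k}1\enc{w_1}\cdots\enc{w_{i-1}}$ cannot both route $\enc{u_i}$ to an accepting continuation in the same way the positive sample $pos(\vec u, i)$ demands while being consistent with the negative sample $neg(\vec w, u_i\text{-like}, i)$; more precisely, the positive sample $pos(\vec u,i)$ forces acceptance of $\enc{u_1}\cdots\enc{u_k}1\enc{u_1}\cdots\enc{u_i}$ and the negative sample $neg(\vec u, w_i, i)$ forces rejection of $\enc{u_1}\cdots\enc{u_k}1\enc{u_1}\cdots\enc{u_{i-1}}\enc{w_i}$, so the state after the common prefix $\enc{u_1}\cdots\enc{u_k}1\enc{u_1}\cdots\enc{u_{i-1}}$ distinguishes $\enc{u_i}$ from $\enc{w_i}$. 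Formalizing this into a coloring: color each $\vec u \in V(H)$ by the tuple of states $(\delta^*(q_0, \text{prefix}_1(\vec u)), \dots)$ along its sequence of checkpoints, and argue two vertices with the same color class cannot be adjacent, yielding $\chi(H) \leq |Q|$ (possibly up to a fixed polynomial factor in the state count, or a more careful single-coordinate argument giving exactly $|Q|$).

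For the right inequality $\opt_{ADFA}(\pset,\nset) \leq \chi(G)^{2k}|V(G)|^4$, I would give an explicit construction, generalizing the layered "canonical DFA" idea from Lemma~\ref{lem:dfa-product} but now exploiting that the input is literally a $k$-fold product. Using an optimal coloring $f: V(G) \to [\chi(G)]$ of $G$, build a layered machine that first reads $\enc{u_1}\cdots\enc{u_k}$ and tracks, in its state, the color tuple $(f(u_1),\dots,f(u_k)) \in [\chi(G)]^k$ — this costs about $\chi(G)^k$ states per layer, times $O(k\log|V(G)|)$ layers for reading the encodings, plus the binary-tree overhead for each encoded vertex. Then after the delimiter $1$, as it re-reads $\enc{u_1}\cdots\enc{u_i}$, at each checkpoint it only needs to verify, using the stored color of coordinate $j$, that the just-read block $\enc{w_j}$ satisfies $f(w_j) = f(u_j)$ (accept on a positive sample) and to reject as soon as it sees a block whose vertex is a $G$-neighbor of the stored coordinate — but since within a color class there are no edges, storing the color suffices to detect the conflict encoded by the negative samples. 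The total state count multiplies out to at most $\chi(G)^{2k}$ (one factor $\chi(G)^k$ for the "forward read" phase and another for the "verification" phase, roughly) times a $\poly(|V(G)|)$ overhead bounded by $|V(G)|^4$ for the binary-tree gadgets and layer padding. I would then check consistency: positive samples $pos(\vec u,i)$ are accepted because every verified coordinate matches its own color; negative samples $neg(\vec u, v, i)$ with $u_iv\in E(G)$ are rejected because $v$ and $u_i$ adjacent forces $f(v) \neq f(u_i)$, which the machine detects at checkpoint $i$.

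The main obstacle is the right inequality — specifically, making the construction genuinely compact and acyclic while keeping the bookkeeping honest: one must route \emph{all} mismatches (at any checkpoint, for any negative sample) to a single rejecting sink via padding paths so that all root-to-sink paths have equal length (the "canonical" property), and one must be careful that tracking color tuples rather than the DFA-for-$G^{k-1}$ recursively is what avoids the blowup — this is exactly the place where the earlier approach gave only $|V(G)|^{2k}$ and the coloring trick is essential. Verifying that the state count is bounded by $\chi(G)^{2k}|V(G)|^4$ rather than, say, $\chi(G)^{2k}|V(G)|^{O(k)}$ requires noticing that the binary-tree gadgets $T_k$ for reading individual $k$-bit encodings are shared across all positions and contribute only $O(|V(G)|)$ states each, with $O(k)$ of them, well within the $|V(G)|^4$ slack. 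The left inequality is routine once the right adjacency-detection claim is phrased correctly; the only subtlety there is handling the "first differing coordinate" structure of the lexicographic product so that the argument produces a proper coloring of $H$ and not just a weaker bound.
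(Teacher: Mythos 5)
Your proposal matches the paper's proof. The lower bound colors each $\vec u \in V(H)$ by the single state $\delta^*(q_0, \enc{\vec u})$ and derives a contradiction from $pos(\vec u,i)$ being accepted while $neg(\vec v, u_i, i)=\enc{\vec v}1\enc{u_1}\cdots\enc{u_i}$ is rejected --- use this single-state coloring (your first instinct), not the tuple-of-states variant you float, which would only yield $\chi(H)\leq |Q|^{\Theta(k)}$ --- and the upper bound is exactly your layered track-colors-forward, verify-colors-backward machine, realized in the paper as a $\chi(G)$-ary tree $S$ of depth $2k$ with each node replaced by a binary tree over $V(G)$, giving the $2|V(G)|\chi(G)^{2k}$ state count.
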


The bound $\opt_{DFA}(\pset, \nset) \leq \opt_{ADFA}(\pset, \nset)$ is trivial.
For the other bounds, we will prove the left and right-hand side inequalities of Lemma~\ref{lmm:ineq-tight-min-DFA} in Section~\ref{sec:lower-bound-tight-DFA} and Section~\ref{sec:upper-bound-tight-DFA}, respectively.
The hardness result then follows trivially from Theorem~\ref{thm: coloring} and Theorem~\ref{thm: property of lexicographic product}. 
In particular, taking the hard instance of the graph coloring problem as in Theorem~\ref{thm: coloring}, we have that

\medskip

\yi: $\opt_{DFA}(\pset, \nset) \leq \chi(G)^{2k}n^4 \leq n^{O(1)}$.

\medskip

\ni: $\opt_{DFA}(\pset, \nset) \geq \chi(H) \geq \chi(G)^{k-o(1)} \geq n^{(k-O(1))}$.

\medskip

Since $|\pset|+|\nset|=O(kn^{k+2})$, this implies the hardness gap of $(|\pset|+|\nset|)^{1-\varepsilon}$, for any $\varepsilon>0$.  

\subsubsection{The Lower Bound of $\opt_{DFA}$}
\label{sec:lower-bound-tight-DFA}

First, we show the lower bound for $\opt_{DFA} (\pset, \nset)$.
Let $M=(Q,\Sigma, \delta, q_0, F)$ be a DFA consistent with $(\pset, \nset)$. 
We construct from $M$ a $|Q|$-coloring of $H$:
For each state $q \in Q$, we define a color class $C_q = \set{\vec{u}: \delta^*(q_0, \enc{\vec{u}}) = q}$.
Since $M$ is deterministic, each vertex must get at least one color.

\begin{lemma} 
For any vertices $\vec{u}, \vec{v} \in C_q$, $\vec{u} \vec{v} \not \in E(H)$.
That is, $C_q$ is a proper color class of $H$.  
\end{lemma}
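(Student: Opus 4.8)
The plan is to argue by contradiction. Suppose $\vec u,\vec v\in C_q$ but $\vec u\vec v\in E(H)$. The idea is to exhibit a single input string on which $M$ is forced, via two different runs — one reading the prefix $\enc{\vec u}$ and one reading $\enc{\vec v}$ — to end in the same state, yet which appears both as a positive and as a negative sample, contradicting the consistency of $M$.

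First I would unroll the definition of the lexicographic product. Since $H=G^k$, the edge $\vec u\vec v\in E(H)$ means there is a coordinate $i\in\{1,\dots,k\}$ with $u_j=v_j$ for all $j<i$ and $u_iv_i\in E(G)$; note that $u_i\neq v_i$, so $i$ is precisely the first coordinate on which $\vec u$ and $\vec v$ differ. Then I would pair up the positive sample $pos(\vec v,i)=\enc{v_1}\cdots\enc{v_k}\,1\,\enc{v_1}\cdots\enc{v_i}$ with the negative sample $neg(\vec u,v_i,i)=\enc{u_1}\cdots\enc{u_k}\,1\,\enc{u_1}\cdots\enc{u_{i-1}}\enc{v_i}$, which is a legitimate negative sample precisely because $u_iv_i\in E(G)$.

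The key computation is then short. Because $\vec u\in C_q$ we have $\delta^*(q_0,\enc{\vec u})=q$, and because $\vec v\in C_q$ we have $\delta^*(q_0,\enc{\vec v})=q$. Continuing each run from state $q$, the negative sample contributes $\delta^*(q,\,1\,\enc{u_1}\cdots\enc{u_{i-1}}\enc{v_i})$ and the positive sample contributes $\delta^*(q,\,1\,\enc{v_1}\cdots\enc{v_i})$; but since $u_j=v_j$ for every $j<i$ (so $\enc{u_j}=\enc{v_j}$), the two suffixes $1\,\enc{u_1}\cdots\enc{u_{i-1}}\enc{v_i}$ and $1\,\enc{v_1}\cdots\enc{v_i}$ are literally the same string. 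Hence $\delta^*(q_0,neg(\vec u,v_i,i))=\delta^*(q_0,pos(\vec v,i))$. Consistency of $M$ forces the left-hand side to be a non-accepting state and the right-hand side to be an accepting state — a contradiction. Therefore no edge of $H$ has both endpoints in $C_q$, i.e.\ $C_q$ is a proper color class.

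I do not expect a serious obstacle: the argument is the standard ``same state $+$ same suffix'' trick for DFAs. The only point needing care is the bookkeeping — correctly peeling the $k$-fold lexicographic product to locate the coordinate $i$, and deliberately pairing $pos(\vec v,i)$ with $neg(\vec u,v_i,i)$ (rather than a negative sample built from $\vec v$) so that the blocks after the ``$1$'', namely $\enc{v_1}\cdots\enc{v_i}$ and $\enc{u_1}\cdots\enc{u_{i-1}}\enc{v_i}$, coincide exactly. Once the samples are lined up this way the contradiction is immediate; combined with the fact that determinism makes the sets $C_q$ cover all of $V(H)$, this yields the $|Q|$-coloring of $H$ used in the lower bound on $\opt_{DFA}$.
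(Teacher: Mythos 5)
Your proof is correct and follows essentially the same argument as the paper: locate the first coordinate $i$ where $\vec u$ and $\vec v$ conflict, then run the DFA on a positive and a negative sample whose suffixes after the separator ``$1$'' coincide (because $u_j=v_j$ for $j<i$), so that determinism plus $\delta^*(q_0,\enc{\vec u})=\delta^*(q_0,\enc{\vec v})=q$ forces the same outcome on both and contradicts consistency. The only cosmetic difference is that the paper pairs $pos(\vec u,i)$ with $neg(\vec v,u_i,i)$ while you pair $pos(\vec v,i)$ with $neg(\vec u,v_i,i)$; these are symmetric.
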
    
\begin{proof} 
Suppose to a contrary that there is a pair of vertices$\vec{u}, \vec{v} \in C_q$ such that $\vec{u} \vec{v} \in E(H)$.
Since $H$ is obtained by the lexicographic product, there exists a coordinate $i$ in which $\vec{u}$ and $\vec{v}$ conflict, 
i.e., $u_j = v_j$ for all $j < i$ and $u_i v_i \in E(G)$.
%
We know that $\delta^*(q,1 \enc{u_1} \ldots \enc{u_i}) \in F$ because 
$pos(\vec{u}, i)=\enc{\vec{u}} 1 \enc{u_1}\ldots \enc{u_i}$ is a positive sample.
Since $\delta^*(q_0,\enc{\vec{u}})=\delta^*(q_0,\enc{\vec{v}})=q$,
we must also have $\delta^*(q_0,\enc{v}1\enc{u_1} \ldots \enc{u_i})=\delta^*(q,1\enc{u_1} \ldots \enc{u_i}) \in F$.
But, this contradicts the fact that 
$neg(\vec{v}, \vec{u}, i)= \enc{\vec{v}} 1 \enc{v_1} \ldots \enc{v_{i-1}} \enc{u_i} = \enc{\vec{v}} 1 \enc{u_1} \ldots \enc{u_i}$ is a negative sample.
\end{proof}

\subsubsection{The Upper bound of $\opt_{ADFA}$}
\label{sec:upper-bound-tight-DFA}

Now we need to argue that there is an acyclic DFA $M$ of size $\chi(G)^{2k} n^4$.   
Suppose $V(G) = \set{0,1}^\ell$. 
Let $c=  \chi(G)$, and $\sigma: V(G) \rightarrow [c]$ be an optimal coloring of $G$. 
Our construction has two steps.
First, we construct a complete rooted $c$-ary tree with $2k$ level, namely $S$.
Note that $S$ is a directed tree whose edges are oriented toward leaves.
Each vertex in $S$ except the root is associated with one color class from $\sigma$.
In particular, for each internal vertex $a$ of $S$,
each child $x$ of $a$ is associated with a distinct color from $[c]$.
We define the coloring of $S$ by $\rho: V(S) \rightarrow [c]$.
Second, we replace each vertex $a$ of $S$ by a complete binary tree $T$ with $n$ leaves;
we denote this copy of $T$ by $T_a$. 
Each leaf $q$ of $T_a$ is associated with a vertex $u$ of $G$ and thus has a color $\sigma(u)$ assigned.
(We abuse $\sigma(q)=\sigma(u)$ to mean a color of $q$.)
For any vertex $x$ in $S$ that is a child of $a$,
we join every leaf $q$ of $T_a$ with color $\sigma(q)=\rho(a)$ to the root $r$ of $T_x$.
The transition edge $qr$ is a null transition unless $a$ is a vertex at level $k$ in $S$; for the case that $a$ is at level $k$, the transition edge $qr$ is labeled ``1''.
(Note that a null transition edge $qr$ means that we will merge $q$ and $r$ in the final construction. It is easy to see that this results in a DFA (not NFA) because $S$ is a tree.)
It can be seen that the constructed directed graph 
has a single source vertex (i.e., a vertex with no incoming edges), 
which we define as a starting state $q_0$.

To finish the construction, we define accepting states.
Let $a_0$ be the root of $S$.
Consider a vertex $a_i$ at level $i>k$ in $S$ and its corresponding tree $T_{a_i}$.
Since $S$ is a tree, there is a unique path from $a_1$ to $a_i$, namely,
$P=a_0,\ldots,a_k,a_{k+1}\ldots,a_i$. 
For each leaf $q$ of $T_{a_i}$, we define $q$ as an accepting state if and only if $\sigma(q)=\rho(a_{i-k+1})$, i.e., $q$ and $a_{i-k+1}$ receive the same color.
See Figure~\ref{fig:dfa-tight:upperbound} for illustration.

Each copy of $T$ has at most $2n$ vertices,
and $S$ has at most $c^{2k}$ vertices.
Thus, the size of the DFA $M$ is at most $2n c^{2k}$.
Also, observe that $M$ is acyclic.

\begin{figure}

\begin{center}  
\includegraphics[trim = 0mm 0cm 0cm 0mm, scale=0.51]{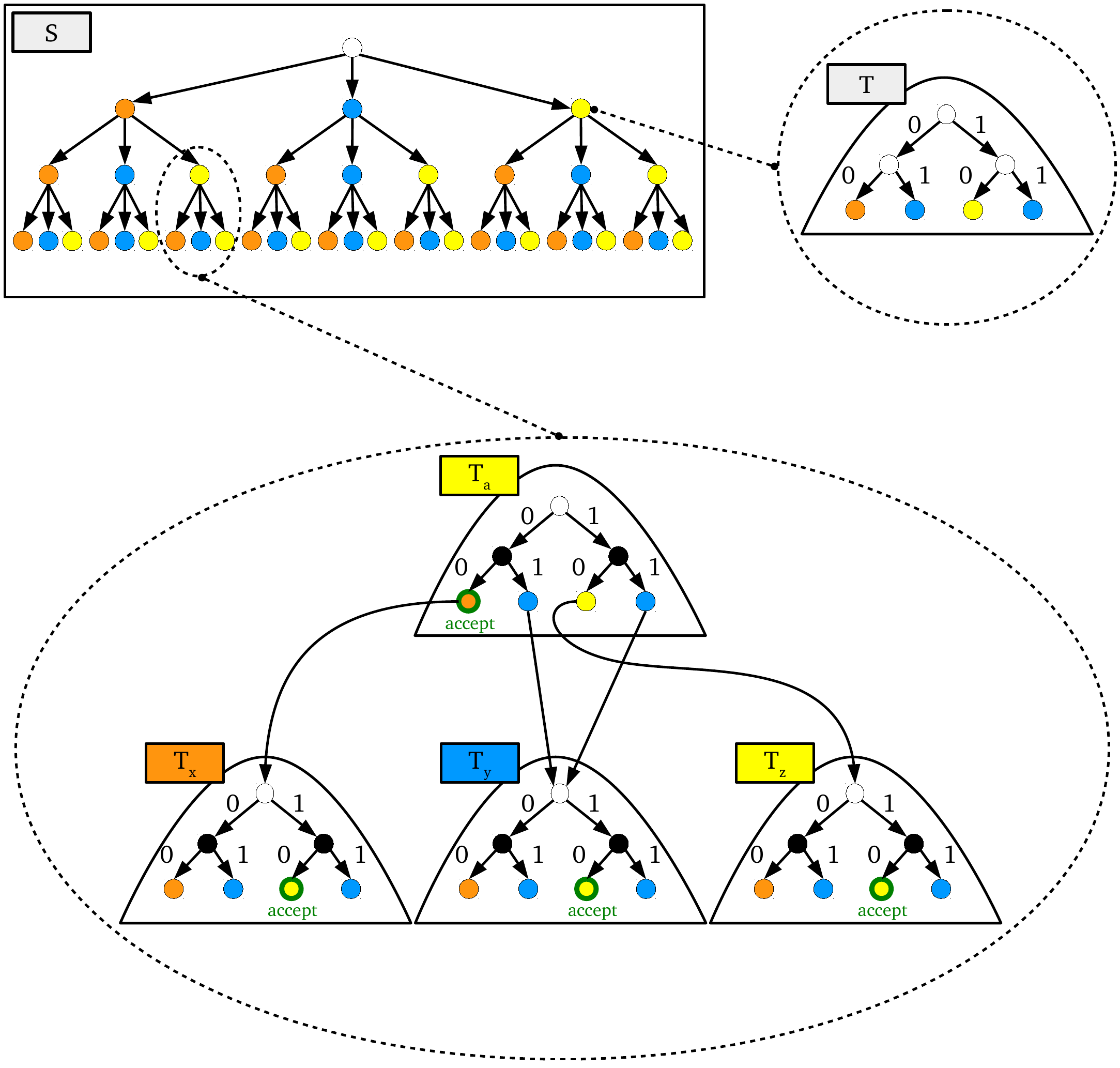}
\end{center}
\caption{The illustration of the construction in the proof of Lemma~\ref{lmm:ineq-tight-min-DFA}.
\label{fig:dfa-tight:upperbound}
}
\end{figure}

\begin{lemma} 
The DFA $M$ is consistent with $(\pset, \nset)$. 
\end{lemma}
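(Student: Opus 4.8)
The plan is to verify the two required consistency conditions directly from the layered structure of $M$, tracing where an input string is routed. First I would set up notation: given a vertex $\vec{u} = (u_1,\dots,u_k) \in V(H)$ with $H = G^k$, the string $\enc{\vec u}$ has length $k\ell$ and, read from the start state $q_0$ (the unique source, which is the root of $T_{a_0}$), it should pass through the $T$-copies $T_{a_0}, T_{a_1}, \dots, T_{a_{k-1}}$ attached along the colour-sequence $(\sigma(u_1), \sigma(u_2), \dots)$ in $S$, arriving after the first $k\ell$ bits at the root of $T_{a_k}$, where $a_0,\dots,a_k$ is the root-to-level-$k$ path in $S$ whose edges read off $\sigma(u_1),\dots,\sigma(u_k)$. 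I would prove this by induction on the coordinate index: reading $\enc{u_j}$ inside $T_{a_{j-1}}$ leads to the leaf of $T_{a_{j-1}}$ associated with $u_j$; by construction that leaf is joined (via a null transition, or via the ``1''-labelled transition when $j = k$) to the root of the child $T_{a_j}$ precisely when the leaf's colour $\sigma(u_j)$ equals $\rho(a_{j-1})$, which is arranged by how the path $a_0,\dots,a_k$ was chosen. The ``1'' separator bit in every sample is consumed exactly by this level-$k$ transition edge, so after $\enc{\vec u}1$ the machine sits at the root of $T_{a_k}$.

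Next I would handle the positive samples $pos(\vec u, i) = \enc{\vec u}\,1\,\enc{u_1}\cdots\enc{u_i}$. Continuing the trace past the root of $T_{a_k}$: reading $\enc{u_1}$ lands at the $u_1$-leaf of $T_{a_k}$, whose colour is $\sigma(u_1) = \rho(a_1)$ (since $a_0 a_1$ read off $\sigma(u_1)$), hence it is joined to the root of $T_{a_{k+1}}$; iterating, after reading $\enc{u_1}\cdots\enc{u_i}$ the machine reaches the $u_i$-leaf of $T_{a_{k+i-1}}$. By the acceptance rule, that leaf is accepting iff $\sigma(u_i) = \rho(a_{(k+i-1)-k+1}) = \rho(a_i)$; and indeed $\sigma(u_i) = \rho(a_i)$ because the edge $a_{i-1}a_i$ of $S$ was chosen to carry colour $\sigma(u_i)$. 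So $pos(\vec u, i)$ is accepted, as required.

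For the negative samples $neg(\vec u, v, i) = \enc{\vec u}\,1\,\enc{u_1}\cdots\enc{u_{i-1}}\enc{v_i}$ with $u_i v \in E(G)$ — reading the excerpt's notation, $v_i$ denotes the $i$-th block, i.e.\ $\enc{v}$ — I would again trace: after $\enc{\vec u}\,1\,\enc{u_1}\cdots\enc{u_{i-1}}$ the machine is at the $u_{i-1}$-leaf of $T_{a_{k+i-2}}$, which is joined to the root of $T_{a_{k+i-1}}$. Now reading the final block $\enc{v}$ inside $T_{a_{k+i-1}}$ lands at the leaf associated with $v$, and its colour is $\sigma(v)$. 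That leaf is accepting iff $\sigma(v) = \rho(a_i)$. But $\rho(a_i) = \sigma(u_i)$, and since $u_i v \in E(G)$ and $\sigma$ is a proper colouring of $G$ we have $\sigma(v) \neq \sigma(u_i)$; hence the leaf is rejecting, so $neg(\vec u, v, i)$ is rejected, as required. One should also note that $M$ is layered/complete in the sense that every sample string has exactly the right length to end at a leaf of some bottom-level $T$-copy (the sample lengths are $k\ell + 1 + i\ell$), so no string ``runs off'' the machine; if one wants a fully specified DFA one adds a dead rejecting state, which does not affect consistency or the size bound.

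The main obstacle I anticipate is purely bookkeeping: keeping the two indexing schemes aligned — the coordinate index $j \in \{1,\dots,k\}$ of the vertex tuple versus the level index of $S$, with the $k$-level offset between the ``$\enc{\vec u}$-reading phase'' (levels $0$ through $k-1$) and the ``suffix-reading phase'' (levels $k$ through $2k-1$), and the fact that the acceptance test at level $k+i-1$ compares against $\rho(a_{i})$. Once the invariant ``after reading $\enc{\vec u}\,1\,\enc{u_1}\cdots\enc{u_j}$ the machine is at the root of $T_{a_{k+j}}$ (for $j < k$), having routed correctly iff $\sigma(u_m) = \rho(a_m)$ for the relevant $m$'' is stated cleanly, both the positive and negative cases fall out immediately. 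No serious combinatorial difficulty remains beyond this careful indexing.
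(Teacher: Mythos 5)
Your proof is correct and follows essentially the same trace-through-the-layers approach as the paper's own argument. (One small indexing slip to flag: in restating the transition rule you wrote that the $u_j$-leaf of $T_{a_{j-1}}$ connects onward when $\sigma(u_j)=\rho(a_{j-1})$, but it should be $\sigma(u_j)=\rho(a_j)$ --- a leaf of colour $c$ is joined to the child whose $\rho$-colour is $c$, and $\rho(a_0)$ is not even defined since the root carries no colour --- yet the rest of your trace and your acceptance check $\sigma(u_i)=\rho(a_i)$ already use the correct convention, so the argument goes through unchanged.)
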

\begin{proof} 

Consider any sample $\vec{u}\in\pset\cup\nset$,
which must be of the form:
\[
\vec{u}=\enc{u_1}\ldots\enc{u_k}1\enc{u_{k+1}}\ldots\enc{u_{k+i}}
\quad\mbox{for $i:1\leq i \leq k$
 and $u_j\in V(S)$ for all $j=1,\ldots,k+i$}. 
\]

Note that $u_j=u_{k+j}$ for all $j=1,2,\ldots,i-1$. 
The transition $\delta^*(q_0,\vec{u})$ forms a path $P$ in $M$,
which traverses from the starting state $q_0$ to some state $q_j$.
(That is, $q_j=\delta^*(q_0,\vec{u})$.)
By construction, $P$ corresponds to the path $a_1,\ldots,a_i$ in $S$
and thus must visit a leaf $q_j$ of tree $T_{a_j}$, for $j=1,\ldots,i$.
Moreover, each $q_j$ is associated with vertex $u_j\in V(G)$. 
Notice that $\rho(a_j)=\sigma(u_{j-1})$ because we have an edge from $q_{j-1}$ to $T_{a_j}$ if and only if $q_{j-1}$ and $a_j$ receive the same color.

If $\vec{u}$ is a positive sample in $\pset$, then we have $u_{i-k}=u_i$.
(Note that $q_j$ and $u_j$ receive the same color for all $j=1,2,\ldots,k$.)
Since $a_{i-k+1}$ has the same color as $q_{i-k}$ (and so does $u_{i-k}=u_i$), we have $\rho(a_{i-k+1})=\sigma(q_i)$.
Thus, $q_i$ is an accepting state.

If $\vec{u}$ is a negative sample in $\nset$, then we must have an edge $u_{i-k}u_i\in E(G)$. 
So, $u_{i-k}$ and $u_i$ receive different colors.
Since $\rho(a_{i-k+1})=\sigma(u_{i-k})$, 
it follows that $\rho(a_{i-k+1})\neq\sigma(q_i)$.
Thus, $q_i$ is not an accepting state.
This proves that $M$ is consistent with both positive and negative samples.
\end{proof}


\subsection{Hardness of Proper PAC-Learning}
\label{sec:hardn-pac-learning}

Here we show that DFAs are not PAC-learnable. That is, we prove Corollary~\ref{cor:pac-learning-dfa}.
We will use the connection between PAC learning and the existence of an {\em Occam algorithm}, defined as follows. 

\begin{definition} 
An Occam algorithm for a hypothesis class $\hset$ in terms of function classes $\fset$ is an algorithm $\aset$ that for some constant $k \geq 0$ and $\alpha <1$, the following guarantee holds.
Let $h \in \hset$ has size $n$ and represents some language $L(h)$. Then on any input of $s$ samples of $L(r)$, each of length at most $m$, the algorithm $\aset$ outputs an element $h \in \hset$ of size at most $n^k m^k s^{\alpha}$ that is consistent with each of the $s$ samples.  
\end{definition} 

Therefore, an Occam algorithm for DFA is the case when $\hset = {\sf DFA}=  \fset$, and the measure of the size of each hypothesis $h \in {\sf DFA}$ is the number of states.  
It is known that PAC learnability of DFA implies the existence of an Occam algorithm for the same hypothesis class as stated formally in the following theorem.

\begin{theorem}[\cite{BoardP92}, statement from~\cite{Pitt89survey}]\label{thm:DFA-PAC-larning-implies-Occam}
If DFAs are properly PAC-learnable, then there exists a randomized Occam algorithm for DFA that runs in polynomial time.  
\end{theorem}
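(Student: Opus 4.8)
The plan is to argue directly: from a polynomial-time proper PAC-learning algorithm $L$ for DFAs, build a randomized polynomial-time Occam algorithm $\aset$ with $\hset=\fset={\sf DFA}$. Here a target DFA has $n$ states, the input sample $S$ consists of $s$ labelled strings each of length at most $m$ and is guaranteed consistent with some DFA of size $n$, and $\aset$ must output in time $\poly(n,s,m)$ a DFA consistent with all of $S$ whose number of states is at most $n^{k}m^{k}s^{\alpha}$ for fixed constants $k$ and $\alpha<1$. The basic tool is to simulate $L$ against the artificial distribution $D$ that is uniform over the multiset $S$: each time $L$ requests a labelled example, sample a string from $S$ and return it with its recorded label. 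Since $D$ is supported on $S$ and the labels agree with the guaranteed target, $L$'s learning guarantee applies.

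A single call does not suffice. Running $L$ with target size $n$ (not knowing $n$, try $n=1,2,4,\dots$ and keep the first value giving a consistent output), confidence $\tfrac14$, and accuracy $\tfrac1{2s}$ produces, with probability at least $\tfrac34$, a DFA of $D$-error below $\tfrac1s$, hence --- since every string of $S$ has $D$-mass at least $\tfrac1s$ --- a DFA consistent with all of $S$; but its size is only bounded by a polynomial in $s$, violating $\alpha<1$. Instead I would call $L$ at a \emph{constant} accuracy $\tfrac14$ (confidence $\tfrac1{4\lceil\log s\rceil}$), so that the number of examples drawn and the size of the output DFA $h_1$ are bounded by $\poly(n,m)$ up to a polylogarithmic factor in $s$, with no polynomial dependence on $s$, while $h_1$ is still correct on at least a $\tfrac34$ fraction of $S$. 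Deleting the at most $s/4$ strings where $h_1$ errs and recursing on the residual set (still consistent with a size-$n$ DFA) gives, after $t=O(\log s)$ rounds, automata $h_1,\dots,h_t$ of size $\poly(n,m)$ up to polylogarithmic factors, together with a descending chain $S=S_0\supseteq S_1\supseteq\cdots$ with $|S_i|\le s/4^{i}$ such that every string of $S$ is classified correctly by the appropriate $h_i$; a union bound over the $t$ calls keeps the overall failure probability below $\tfrac14$.

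The last step --- assembling $h_1,\dots,h_t$ into one DFA $H\in{\sf DFA}$ consistent with all of $S$ and of size below $n^{k}m^{k}s^{\alpha}$ --- is where I expect the real difficulty. The natural candidate is the cascade that, on input $x$, finds the largest layer $i$ with $x\in S_{i-1}$ and outputs $h_i(x)$; building it means combining the $h_i$ with automata that test membership in the sets $S_{i-1}$, and the pitfall is that naively encoding $S_0=S$ reintroduces a factor linear in $s$. Carrying out this combination so that the polynomial closure properties of regular languages together with the geometric decay $|S_i|\le s/4^{i}$ keep the total size sublinear in $s$ is the technical core; for this I would invoke the construction of Board and Pitt~\cite{BoardP92}. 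Since all $t=O(\log s)$ invocations of $L$ and the assembly run in polynomial time and the whole procedure succeeds with constant probability, this yields the claimed randomized polynomial-time Occam algorithm for DFAs.
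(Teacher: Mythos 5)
The paper does not prove this theorem. It is cited verbatim from Board and Pitt~\cite{BoardP92} (with the statement form taken from Pitt's survey~\cite{Pitt89survey}) and used as a black box: the paper's contribution is Theorem~\ref{thm:no-occam-dfa}, which rules out Occam algorithms, and Corollary~\ref{cor:pac-learning-dfa} then follows by composing with this imported result. So there is no ``paper's own proof'' to compare against.

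As for your sketch: the first two steps are sound and in the right spirit (simulate the learner against the uniform distribution on the sample; amplify by repetition), but the third step is where it genuinely breaks, as you yourself flag. The cascade that ``finds the largest $i$ with $x\in S_{i-1}$'' cannot be implemented inside a single DFA of sublinear size, because the very first test (membership in $S_0 = S$) already forces you to encode all $s$ sample strings, costing $\Omega(sm)$ states in any trie-like encoding, and the geometric decay $|S_i|\le s/4^i$ is of no help at level $0$. Punting this to ``the construction of Board and Pitt'' is circular, since that \emph{is} the theorem to be proved. The actual Board--Pitt argument takes a different and simpler route that avoids cascading entirely: make a \emph{single} call to the PAC learner at accuracy $\epsilon = s^{-1/(c+1)}$ (where $c$ is the degree of the learner's polynomial running time in $1/\epsilon$), producing a hypothesis $h$ of size $\poly(n,m)\,s^{c/(c+1)}$ that errs on at most $\epsilon s = s^{c/(c+1)}$ sample points, and then patch exactly those few exceptions into $h$ using the fact that DFAs are polynomially (indeed, nearly additively) closed under finite exception lists: graft a trie on the $O(s^{c/(c+1)} m)$ exception strings onto $h$, storing at each trie node the state $h$ would be in after reading that prefix, so that falling off the trie resumes simulation of $h$ with no state blowup. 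The resulting DFA has $\poly(n,m)\,s^{c/(c+1)}$ states, which is $n^k m^k s^{\alpha}$ with $\alpha = c/(c+1) < 1$, as required. The key structural difference from your plan is that the error budget $\epsilon$ is tuned as a function of $s$ rather than held constant, so only one round is needed and no $S_0$ ever has to be encoded.
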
 

Theorem~\ref{thm:DFA-PAC-larning-implies-Occam} implies that, to prove Corollary~\ref{cor:pac-learning-dfa}, it suffices to rule out the existence of a randomized Occam algorithm for DFA, which is shown in the next Theorem. 

\begin{theorem} \label{thm:no-occam-dfa}
Unless $\mathrm{NP} = \mathrm{RP}$, there is no polynomial time randomized Occam algorithm for DFA. 
\end{theorem}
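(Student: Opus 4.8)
The plan is to derive a contradiction with \Cref{thm:hardness-mincon-dfa}: a randomized polynomial-time Occam algorithm for DFA would decide, with \emph{one-sided} error and in polynomial time, the NP-hard promise problem of distinguishing \yi{} from \ni{} instances of {\sc MinCon}(ADFA, DFA), thereby placing an NP-hard problem in $\mathrm{RP}$ and forcing $\mathrm{NP}=\mathrm{RP}$.

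Concretely, suppose $\aset$ is such an Occam algorithm, with constants $k\geq 0$ and $\alpha<1$: on any $s$ samples of $L(h)$ for some $h\in\DFA$ of size $N$, each sample of length at most $m$, the algorithm $\aset$ outputs (with probability at least some fixed $p>0$, amplified below) a DFA consistent with all $s$ samples and of size at most $N^k m^k s^{\alpha}$. Fix a constant $\epsilon$ with $0<\epsilon<\tfrac{1-\alpha}{k+1}$, which is possible since $k$ and $\alpha$ are fixed. Apply \Cref{thm:hardness-mincon-dfa} with this $\epsilon$: it is NP-hard, given $(\pset,\nset)$ of sample size $n$ with each sample of length $O(\log n)$, to distinguish (\yi) ``there is an ADFA of size $n^{\epsilon}$ consistent with $(\pset,\nset)$'' from (\ni) ``every DFA consistent with $(\pset,\nset)$ has size at least $n^{1-\epsilon}$.'' Note that here the number of samples is $s\leq n$ and their length is $m=O(\log n)$.

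The decision procedure is: on input $(\pset,\nset)$, run $\aset$ on these samples, repeat $t=O(1/p)$ times, and accept iff some run outputs a DFA that is consistent with every string in $\pset\cup\nset$ (verifiable in polynomial time) and has fewer than $n^{1-\epsilon}$ states. For correctness, in a \ni{} no DFA of size less than $n^{1-\epsilon}$ is consistent with $(\pset,\nset)$, so the acceptance test fails for \emph{every} possible output of $\aset$ on \emph{every} run; the procedure rejects with probability $1$. In a \yi{}, the guaranteed ADFA $h$ has size $N\leq n^{\epsilon}$ and is a DFA consistent with $(\pset,\nset)$ (equivalently, $(\pset,\nset)$ are correctly labeled samples of $L(h)$), so with probability at least $p$ per run $\aset$ returns a consistent DFA of size at most $N^k m^k s^{\alpha}\leq n^{\epsilon k}\cdot(\log n)^{O(1)}\cdot n^{\alpha}=n^{\epsilon k+\alpha}\polylog{n}$, which is strictly less than $n^{1-\epsilon}$ once $n$ is large, by the choice $\epsilon(k+1)<1-\alpha$. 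Hence the procedure accepts with probability at least $1-(1-p)^t\geq \tfrac12$. This is an $\mathrm{RP}$ algorithm for an NP-hard problem, so $\mathrm{NP}=\mathrm{RP}$; equivalently, unless $\mathrm{NP}=\mathrm{RP}$ no such $\aset$ exists. (Combined with \Cref{thm:DFA-PAC-larning-implies-Occam}, this yields \Cref{cor:pac-learning-dfa}.)

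The only genuinely delicate point is that the randomized, possibly two-sided nature of $\aset$ must not spoil one-sided error, and this is exactly where verifiability of $\aset$'s output is used: since no small consistent DFA \emph{exists} in a \ni{}, any run of $\aset$ is caught by the consistency-and-size test, so false acceptances are impossible regardless of $\aset$'s internal coins. The remaining bookkeeping --- that $s\leq n$ and $m=O(\log n)$, and that the $\polylog{n}$ factor is absorbed by taking $\epsilon$ slightly below $\tfrac{1-\alpha}{k+1}$ --- is routine; all the real work has already been done in proving \Cref{thm:hardness-mincon-dfa}.
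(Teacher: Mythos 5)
Your proposal is correct and follows essentially the same route as the paper: instantiate Theorem~\ref{thm:hardness-mincon-dfa} with $\epsilon$ chosen as a function of the Occam constants $(k,\alpha)$ so that the Occam output size falls below the $N^{1-\epsilon}$ threshold on \yi{}s, and observe that on \ni{}s no consistent DFA below that threshold exists, so the Occam algorithm decides the gap problem. The only difference is that you spell out the $\mathrm{RP}$-specific bookkeeping (amplification and why verifiability of the output gives genuine one-sided error), which the paper leaves implicit; this is a harmless and arguably welcome elaboration rather than a different argument.
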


\begin{proof} 
We prove by contrapositive.
Assume that there is a randomized Occam algorithm $\aset$ for DFA with parameters $(k, \alpha)$ for some constants $k\geq 0$ and $0\leq \alpha<1$.
Then we argue that there would exist and an algorithm that distinguishes between the \yi and \ni given in Theorem~\ref{thm:hardness-mincon-dfa}.
To see this, take an instance of MinCon(ADFA,DFA) as in Theorem~\ref{thm:hardness-mincon-dfa}.
So, we have a pair of sets $(P,N)$ of $N$ samples, each of length $O(\log N)$. 
The parameters of the Occam algorithm $\aset$ are thus $s=N$ and $m=O(\log N)$. 

We choose the parameter $\epsilon$ in Theorem~\ref{thm:hardness-mincon-dfa} to be $\epsilon = (1-\alpha)/(2k+1)$.
 
In the \yi, there is a DFA of size $N^{\epsilon}$ consistent with the samples.
Thus, our hypothesis class has size $n=N^{\epsilon}$. 
By definition, the Occam algorithm $\aset$ gives us a DFA $M$ of size
\[
|M| \leq N^{\epsilon k} \cdot (\log N)^k \cdot N^{\alpha} 
    \leq N^{\alpha + (2 k) \epsilon}
    \leq N^{\alpha + (2 k) \frac{1-\alpha}{2k+1}}
    = N^{1-\frac{1-\alpha}{2k+1}}
    = N^{1-\epsilon}
\]

In the \ni, any DFA $M$ consistent with $(P,N)$ has size $|M| > N^{1-\epsilon}$.

Therefore, the randomized Occam algorithm $\aset$ can distinguish between the \yi and \ni in Theorem~\ref{thm:hardness-mincon-dfa}, implying that $\mathrm{NP}=\mathrm{RP}$.
This completes the proof.
\end{proof}  

A similar but weaker theorem can be proven for the case of NFAs. Indeed, we rule out the existence Occam algorithm for NFA with parameter $0\leq \alpha \leq 1/2$, assuming that $\mathrm{NP}\neq \mathrm{RP}$.

\begin{theorem} \label{thm:no-occam-nfa}
Unless $\mathrm{NP} = \mathrm{RP}$, there is no polynomial time randomized Occam algorithm for NFA with parameter $0\leq \alpha \leq 1/2$.  
\end{theorem}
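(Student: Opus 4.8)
Unless $\mathrm{NP}=\mathrm{RP}$, there is no polynomial-time randomized Occam algorithm for NFA with parameter $0\le\alpha\le 1/2$.

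The plan is to run the contrapositive argument of Theorem~\ref{thm:no-occam-dfa} verbatim, only replacing the appeal to the DFA consistency hardness by the NFA consistency hardness of Theorem~\ref{thm:hardness-mincon-nfa} (equivalently Theorem~\ref{thm: nfa hardness}). Suppose $\aset$ is a randomized polynomial‑time Occam algorithm for the class $\mathrm{NFA}$ with parameters $(k,\alpha)$, where $k\ge 0$ is a constant and $\alpha<1/2$. I will build from $\aset$ a randomized polynomial‑time procedure that distinguishes the two cases of Theorem~\ref{thm:hardness-mincon-nfa}, which places an NP‑hard problem in $\mathrm{RP}$ and hence forces $\mathrm{NP}=\mathrm{RP}$.

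Concretely, take an instance $(\pset,\nset)$ of $\mathrm{MinCon}(\mathrm{ADFA},\mathrm{NFA})$ produced by Theorem~\ref{thm:hardness-mincon-nfa} with a gap parameter $\delta>0$ to be fixed below. It has $N:=|\pset|+|\nset|$ sample strings, each of length $O(\log N)$, so in the notation of the Occam definition we have $s=N$ and $m=O(\log N)$. In the \yi there is an ADFA — and reading its state set as an NFA, an element of the class $\mathrm{NFA}$ with the same number of states — of size at most $N^{\delta}$ consistent with $(\pset,\nset)$, so by the Occam guarantee $\aset$ outputs, with probability bounded away from $0$, an NFA $M$ consistent with $(\pset,\nset)$ of size
\[
|M|\ \le\ \bigl(N^{\delta}\bigr)^{k}\cdot\bigl(O(\log N)\bigr)^{k}\cdot N^{\alpha}\ \le\ N^{\alpha+2k\delta}
\]
for all sufficiently large $N$. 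In the \ni, every NFA consistent with $(\pset,\nset)$ has at least $N^{1/2-\delta}$ states. Therefore, whenever $\alpha+2k\delta<1/2-\delta$, the decision procedure ``run $\aset$; accept iff it returns an NFA of size $<N^{1/2-\delta}$ that is consistent with all samples'' accepts every \yi with constant probability and rejects every \ni with certainty (consistency is checked by polynomial‑time NFA simulation). That is the desired $\mathrm{RP}$ algorithm for an NP‑hard problem.

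It only remains to choose $\delta$: for any fixed $\alpha<1/2$ one may take $\delta=\tfrac12\cdot\tfrac{1/2-\alpha}{2k+1}>0$, which makes $\alpha+2k\delta<1/2-\delta$, completing the proof (a weaker statement for NFA than Theorem~\ref{thm:no-occam-dfa} gives for DFA, which is why a separate theorem is stated). There is no real obstacle inside this argument — it is the same routine $\epsilon$‑chase already carried out for DFA — and the only substantive point is upstream: the admissible exponent drops from $\alpha<1$ to $\alpha<1/2$ precisely because the \ni of Theorem~\ref{thm:hardness-mincon-nfa} certifies only $N^{1/2-\delta}$ states, since no near‑linear‑size reduction analogous to the one of Section~\ref{sec:min-DFA} is available for NFA and only the quadratic‑blow‑up reduction $R$ of Section~\ref{sec:min-NFA} applies; the $s^{\alpha}=N^{\alpha}$ factor of the Occam bound then already matches the hardness gap at $\alpha=1/2$, so that endpoint is the natural limit of the method.
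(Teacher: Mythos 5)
Your proof is correct and follows exactly the same contrapositive route as the paper: instantiate the Occam algorithm on the hard $\mathrm{MinCon}(\mathrm{ADFA},\mathrm{NFA})$ instances of Theorem~\ref{thm:hardness-mincon-nfa}, and choose a gap parameter small enough that the Occam bound $N^{\alpha + O(k)\delta}$ lands below the soundness threshold $N^{1/2-\delta}$. The only (immaterial) divergence is your slightly different parameter choice $\delta=\tfrac12\cdot\tfrac{1/2-\alpha}{2k+1}$ versus the paper's $\epsilon=\tfrac{1/2-\alpha}{2k+1}$, and your correct observation that the argument really requires $\alpha<1/2$ strictly (at $\alpha=1/2$ both choices degenerate to a zero gap), a point the paper's statement glosses over by writing $\alpha\le 1/2$.
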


\begin{proof} 
The proof is essentially the same as that of Theorem~\ref{thm:no-occam-dfa} with slightly different parameters. 

We prove by contrapositive.
Assume that there is a randomized Occam algorithm $\aset$ for NFA with parameters $(k, \alpha)$ for some constants $k\geq 0$ and $0\leq \alpha \leq 1/2$.
We will show that the algorithm $\aset$ can be used to distinguishes between the \yi and \ni given in Theorem~\ref{thm:hardness-mincon-nfa} and thus implying that $NP=RP$. 

Take an instance of MinCon(ADFA,NFA) as in Theorem~\ref{thm:hardness-mincon-dfa}.
So, we have a pair of sets $(P,N)$ of $N$ samples, each of length $O(\log N)$. 
The parameters of the Occam algorithm $\aset$ are thus $s=N$ and $m=O(\log N)$. 

We choose the parameter $\epsilon$ in Theorem~\ref{thm:hardness-mincon-dfa} to be $\epsilon = (1/2 - \alpha)/(2k+1)$.
 
In the \yi, there is an NFA of size $N^{\epsilon}$ consistent with the samples.
Thus, our hypothesis class has size $n=N^{\epsilon}$. 
By definition, the Occam algorithm $\aset$ gives us a NFA $M$ with size
\[
|M| \leq N^{\epsilon k} \cdot (\log N)^k \cdot N^{\alpha} 
    \leq N^{\alpha + (2 k) \epsilon}
    = N^{\alpha + (2 k) \frac{1/2 - \alpha}{2k+1}}
    = N^{1/2-\frac{1/2-\alpha}{2k+1}}
    = N^{1/2-\epsilon}
\]

In the \ni, any NFA $M$ consistent with $(P,N)$ has size
$|M| > N^{1/2-\epsilon}$.

Therefore, the randomized Occam algorithm $\aset$ can distinguish between the \yi and \ni in Theorem~\ref{thm:hardness-mincon-nfa}, implying that $\mathrm{NP}=\mathrm{RP}$.
This completes the proof.
\end{proof}  

\begin{corollary} 
Unless $NP= RP$, there are no Occam algorithms for the following hypothesis classes: 

\begin{itemize} 
\item Deterministic Finite Automata (DFA) 


\item Acyclic Deterministic Finite Automata (ADFA) 

\item Ordered Branching Decision Diagram (OBDD) 
\end{itemize}
In particular, for any $\epsilon \in (0,1), k >0$, the minimum consistent hypothesis problems for these classes are $N^{1-\epsilon} \opt^{k}$-hard to approximate unless $NP = RP$.  
\end{corollary}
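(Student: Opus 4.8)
The plan is to observe that this corollary merely repackages the tight hardness of {\sc MinCon}(ADFA, DFA) from Theorem~\ref{thm:hardness-mincon-dfa} together with the Occam-to-distinguisher reduction already carried out in the proof of Theorem~\ref{thm:no-occam-dfa}; the only genuinely new ingredient is a pair of structural observations that let the same argument run for the classes ADFA and OBDD. For ADFA this is immediate: in the \yi of Theorem~\ref{thm:hardness-mincon-dfa} the consistent witness is already an ADFA of size $N^{\epsilon}$, and in the \ni every DFA --- \emph{a fortiori} every ADFA --- has size at least $N^{1-\epsilon}$. For OBDD one inspects the construction of Section~\ref{sec:upper-bound-tight-DFA}: the automaton built there is acyclic and leveled, and after padding all samples to the common length $2k\ell+1$ (with $\ell=\log|V(G)|$) it reads the input bit positions once each in the fixed left-to-right order, i.e.\ it is an OBDD, of size $N^{\epsilon}$; conversely any OBDD consistent with the samples, read bit by bit, is a DFA with $O(1)$ times as many states, so the \ni bound $N^{1-\epsilon}$ transfers (after replacing $\epsilon$ by a slightly larger constant) to ADFAs and OBDDs as well.

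Given these observations, the first assertion is proved exactly as Theorem~\ref{thm:no-occam-dfa}. Suppose $\hset=\fset$ is ADFA (resp.\ OBDD) and $\aset$ is a polynomial-time randomized Occam algorithm for it with parameters $(k,\alpha)$, $0\le\alpha<1$. Set $\epsilon=(1-\alpha)/(2k+1)$ and take the hard instance $(\pset,\nset)$ of Theorem~\ref{thm:hardness-mincon-dfa}, with $s=N$ samples each of length $m=O(\log N)$. In the \yi the hypothesis to be fit has size $n=N^{\epsilon}$, so $\aset$ returns a consistent hypothesis of size at most $n^{k}m^{k}s^{\alpha}\le N^{\epsilon k}(\log N)^{k}N^{\alpha}\le N^{1-\epsilon}$; in the \ni every consistent hypothesis has size exceeding $N^{1-\epsilon}$. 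Hence $\aset$ distinguishes the two cases in randomized polynomial time, forcing $\mathrm{NP}=\mathrm{RP}$; the case of DFA is Theorem~\ref{thm:no-occam-dfa} itself.

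For the ``in particular'' clause, let $\bset$ be a polynomial-time algorithm for the minimum-consistent-hypothesis problem for one of these classes that always outputs a consistent hypothesis of size at most $N^{1-\epsilon}\opt^{k}\cdot\opt$, where $\opt$ is the optimum and $k>0$ is fixed. Invoke Theorem~\ref{thm:hardness-mincon-dfa} with its exponent set to a constant $\delta$ small enough that $\delta(k+2)<\epsilon$. On a \yi we have $\opt\le N^{\delta}$, so $\bset$ outputs a consistent hypothesis of size at most $N^{1-\epsilon}\opt^{k+1}\le N^{1-\epsilon+\delta(k+1)}<N^{1-\delta}$, whereas on a \ni every consistent hypothesis --- using the structural transfer above in the ADFA and OBDD cases --- has size at least $\opt\ge N^{1-\delta}$. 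Thus $\bset$ separates \yi from \ni in polynomial time and $\mathrm{NP}=\mathrm{RP}$.

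The only real obstacle is the OBDD case of the structural observations: one must check that the leveled acyclic automaton of Section~\ref{sec:upper-bound-tight-DFA} genuinely is (or can be padded into) an OBDD over a fixed variable order, and that passing from an OBDD to the equivalent bit-reading DFA changes the number of nodes only by a constant factor, so that both the \yi upper bound and the \ni lower bound survive the translation. Once that bookkeeping --- together with tracking the two exponent parameters $\epsilon$ and $\delta$ --- is done, the rest is a verbatim repetition of the arguments already given.
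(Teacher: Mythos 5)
Your proposal is correct and is essentially the argument the paper intends (the paper states this corollary with no proof, treating it as an immediate consequence of Theorem~\ref{thm:no-occam-dfa} and the structure of the YI witness); the DFA bullet is literally Theorem~\ref{thm:no-occam-dfa}, and your observation that the NI lower bound on all DFAs in Theorem~\ref{thm:hardness-mincon-dfa} applies \emph{a fortiori} to the subclass ADFA is exactly what makes the ADFA case free. One small inaccuracy in the OBDD bookkeeping: converting an OBDD (which may skip variable levels after reduction) into a bit-by-bit leveled DFA over samples of length $m=O(\log N)$ can cost a factor $O(m)=O(\log N)$, not $O(1)$, per node; this does not hurt the argument, since $\log N$ is absorbed into the $N^{\epsilon}$ slack already present when passing between the exponent $\delta$ used in Theorem~\ref{thm:hardness-mincon-dfa} and the target $\epsilon$, but it is worth stating the correct factor so the exponent accounting remains airtight.
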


\section{Hardness of EDP on DAGs}
\label{sec:edp}

In this section, we prove the $|V(G)|^{1/2-\epsilon}$ hardness of approximating EDP on DAGs and packing vertex-disjoint bounded length cycles.  
We will first show the construction for EDP, and later we argue that a slight modification of the construction yields the hardness of packing vertex-disjoint bounded length cycles. 

\subsection{Reduction $R$}
\label{sec:edp:simplex-proof}

We first define the canonical reduction $\wall[G]$ formally. 
Given a graph $G=(V,E)$ on $n$ vertices, the switching graph of $G$, denoted by $\wall [G]$, is a graph defined on a plane and constructed in two steps as follows. 
The coordinates of graph $\wall[G]$ lie in the box formed by the corners $(0,0)$ and $(n,n)$.

\medskip

{\sc First Step:} For each vertex $i \in V(G)$, we draw a line segment $\ell_i$ on the plane connecting vertices $s_i$ and $t_i$ as shown in Figure~\ref{fig: transformation}. 
To be precise, the line $\ell_i$ goes from the coordinate $(n+1-i,0)$ to the coordinate $(n+1-i,i)$ of the grid and then goes to the coordinate $(0,i)$.
For each pair of vertices $i,j \in V(G)$, we have an {\em intersection point $y_{i,j}$} at the crossing point of lines $\ell_i$ and $\ell_j$.
Some of these intersection points will be later defined as vertices in the switching graphs whereas others are just a crossing points in the plane embedding.
We call this graph $\awall [G]$ which will also be crucial in our analysis.  
Edges in $\awall [G]$ are directed from left to right and top to bottom.  

\medskip

{\sc Second Step:} For each edge $ij \in E(G)$, we split $y_{i,j}$ into two vertices $x^{\mathrm{in}}_{i,j}$ and $x^{\mathrm{out}}_{i,j}$ and have a directed edge $e_{i,j} = x^{\mathrm{in}}_{i,j} x^{\mathrm{out}}_{i,j}$ in the graph $\wall[G]$.
Otherwise, if $ij \not\in E(G)$, the intersection point $y_{i,j}$ is replaced by an uncrossing as in Figure~\ref{fig: transformation}.

\begin{figure}[hbt]
\begin{center} 
\includegraphics[scale=0.5, clip=true, trim=  0 3cm 4cm 0cm] {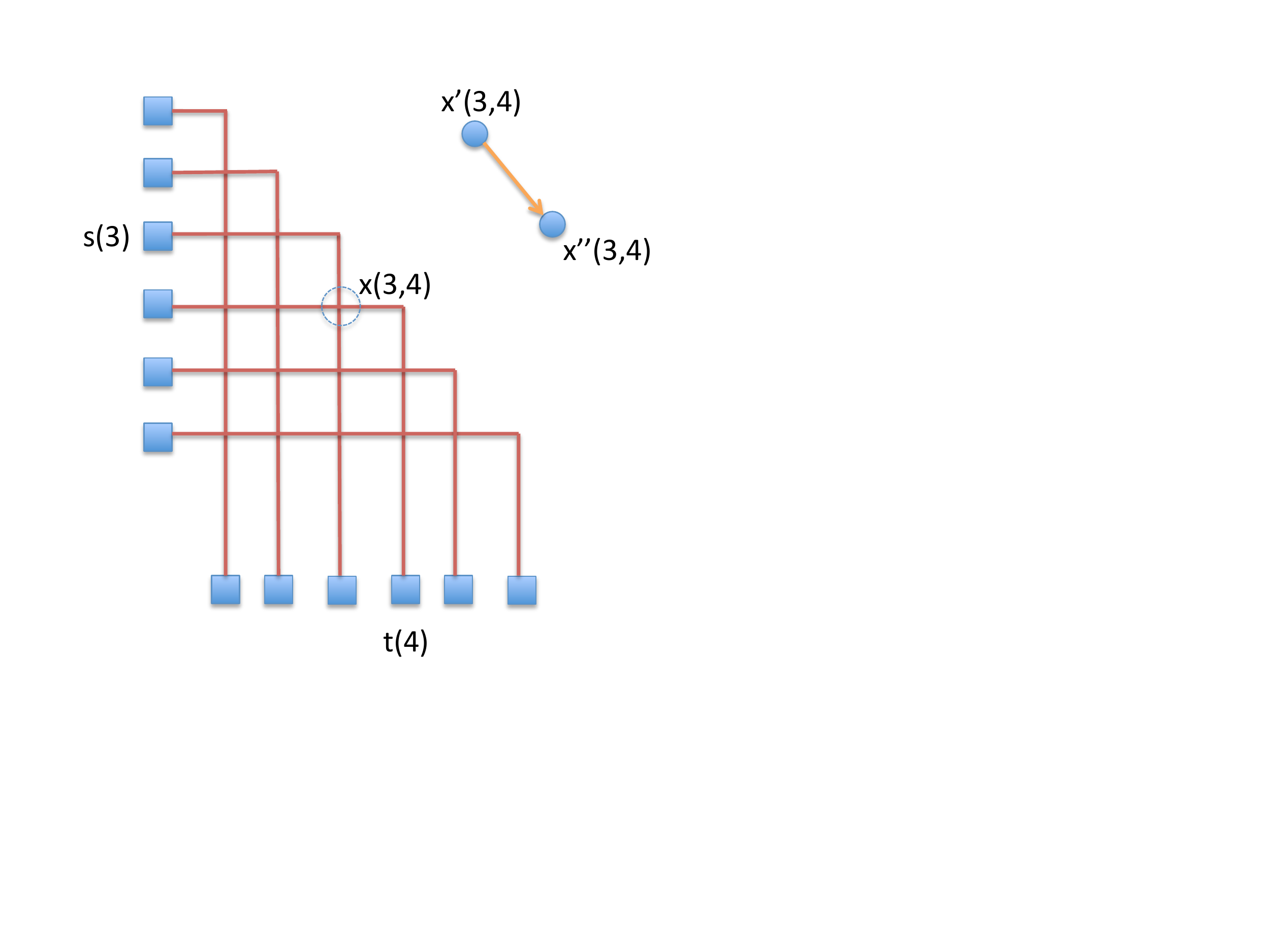}
\caption{The graph $\protect\wall[G]$ where $(3,4) \in E(G)$ but $(4,5) \not\in E(G)$}
\label{fig: transformation}
\end{center} 
\end{figure}
\danupon{I removed ``TO DO: Draw a new figure'' from the caption.}



%
%
%

First, the following lemma establishes a (simple) connection between \EDP and the maximum independent set problem.

\begin{lemma}
For any graph $H$, $\edp(\wall [H]) \geq \alpha(H)$. 
\end{lemma}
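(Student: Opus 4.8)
The plan is to exhibit, for any independent set $S \subseteq V(H)$, a collection of $|S|$ edge-disjoint paths in $\wall[H]$ connecting the corresponding source-sink pairs $\{(s_i, t_i) : i \in S\}$. The natural candidates are the \emph{canonical paths}: for each $i \in S$, let $P_i$ be the path that starts at $s_i = (n+1-i, 0)$, travels straight up the vertical segment $\ell_i$ to the corner $(n+1-i, i)$, then travels straight left along the horizontal segment to $t_i = (0, i)$. First I would check that each $P_i$ is in fact a valid directed path in $\wall[H]$: edges are oriented from left to right and top to bottom, the vertical part of $\ell_i$ goes downward-to-upward in coordinates but (reading the picture) is oriented consistently toward the corner, and at every intersection point $y_{i,j}$ along $\ell_i$ the path simply passes through --- either through the uncrossing gadget (when $ij \notin E(H)$) or straight through one of the two "tracks" of the split vertex $x^{\mathrm{in}}_{i,j} x^{\mathrm{out}}_{i,j}$ (when $ij \in E(H)$), so $P_i$ never needs to use the connecting edge $e_{i,j}$.

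The key step is edge-disjointness. Two canonical paths $P_i$ and $P_j$ with $i \neq j$, $i, j \in S$ can only interact near the intersection point $y_{i,j}$ of segments $\ell_i$ and $\ell_j$; away from that point the segments $\ell_i$ and $\ell_j$ occupy disjoint portions of the grid, so their paths share no edges there. At $y_{i,j}$ itself, since $S$ is independent we have $ij \notin E(H)$, so in $\wall[H]$ the point $y_{i,j}$ has been replaced by an uncrossing gadget. In that gadget the two segments pass through on separate edges (that is the whole point of the uncrossing), so $P_i$ and $P_j$ traverse distinct edges of the gadget. Hence no edge of $\wall[H]$ is used by two of the canonical paths, and we obtain $|S|$ edge-disjoint source-sink paths. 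Taking $S$ to be a maximum independent set gives $\edp(\wall[H]) \geq |S| = \alpha(H)$.

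The main obstacle --- really the only thing requiring care --- is making the "uncrossing gadget routes the two segments on disjoint edges" claim precise, since it depends on the exact local gadget construction in Figure~\ref{fig: transformation} (how many vertices the uncrossing introduces, and which edges belong to which of the two crossing strands). I would pin this down by describing the uncrossing as a small constant-size gadget in which the strand of $\ell_i$ and the strand of $\ell_j$ are edge-disjoint by construction, and noting that the canonical path $P_i$ always stays on the $\ell_i$-strand (it never "switches" to $\ell_j$, which would require using the cross edge $e_{i,j}$ that exists only when $ij \in E(H)$). A secondary minor point is confirming the orientation of edges along the vertical-then-horizontal canonical route is consistent with the stated "left-to-right, top-to-bottom" orientation so that $P_i$ is a genuine directed path; this is immediate from the coordinates once one fixes the convention, and I would state it as a one-line observation rather than belabor it.
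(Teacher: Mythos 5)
Your proof is correct and follows the paper's argument exactly: route the canonical paths for an independent set $S$, and edge-disjointness holds because every crossing $y_{i,j}$ with $i,j\in S$ is an uncrossing. One slip to fix, though it does not affect this lemma: when $ij\in E(H)$ the canonical path along $\ell_i$ \emph{does} traverse the shared bottleneck edge $e_{i,j}=x^{\mathrm{in}}_{i,j}x^{\mathrm{out}}_{i,j}$ --- the split vertex has a single connecting edge used by both strands, not ``two tracks,'' and this sharing is precisely what forces at most one of $P_i,P_j$ into any feasible solution when $ij\in E(H)$. Your argument is unharmed because every pair $i,j\in S$ that you actually route satisfies $ij\notin E(H)$, so only uncrossing gadgets occur between routed paths, but you should correct that picture of the gadget before attempting the $\alpha$-projection bound in Lemma~\ref{lem: edp product}.
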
 

\begin{proof} 
Let $S \subseteq V(H)$ be any independent set in $H$. 
We define the collection of paths $\pset_S = \set{P_i}_{i \in S}$ in graph $\wall [H]$. 
Since $S$ is an independent set, any pair of paths $P_i$ and $P_j$ for $i,j \in S$ are disjoint by construction.  
\end{proof} 

Unfortunately, the converse of this inequality does not hold within any reasonably small factor. 
In fact, there is a graph $H$ for which $\alpha(H) = 2$ but $\edp(\wall [H]) = \Omega(n)$; see Appendix~\ref{sec: bad}.
Therefore, we focus on proving the low $\alpha$-projection property.

\subsection{$\alpha$-Projection Property}

For technical reasons, we will need to analyze a slightly different measure from the optimal value $\edp(\wall [G])$. 
This notion will be a weaker notion of feasible solutions for EDP. 
We say that a collection of disjoint paths $\pset= \set{P_1,\ldots, P_{\ell}}$ is {\em orderly feasible} if for any pair $P=(s_i,\ldots, t_{j})$ and $P' =(s_{i'}, \ldots, t_{j'})$ such that $i < i'$, then it must be the case that $j < j'$; for instance, in an orderly feasible set, if we connect $s_1$ to $t_3$, it must be the case that $s_2$ is connected to $t_j$ for $j >3$.
Intuitively, in an orderly feasible set $\pset$, a path is allowed to start from $s_i$ and ends at some sink $t_j$ for $j \neq i$, but every pair of paths in $\pset$ is forced to ``cross'' at some point.    
Observe that any collection of feasible edge disjoint paths must also be orderly feasible.  
As a consequence, if we define $\tildeEDP (\wall [G])$ as the maximum cardinality of all orderly feasible collections of paths, then we have that $\tildeEDP (\wall [G]) \geq \edp (\wall [G])$.  


The following observation is more or less obvious. 

\begin{observation} 
For any graph $G$, $\tildeEDP(\wall [G]) \leq |\wall [G]| $
\end{observation}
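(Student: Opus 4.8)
The plan is to bound the number of paths in an orderly feasible collection by the number of terminal pairs, which is linear in the instance size. By construction $\wall[G]$ has exactly $n=|V(G)|$ sources $s_1,\dots,s_n$ and $n$ sinks $t_1,\dots,t_n$ (one line segment $\ell_i$ per vertex $i$ of $G$), so there are exactly $n$ source--sink pairs, and the total number of vertices of $\wall[G]$ is certainly at least $n$.

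First I would observe that in any collection of (edge-)disjoint paths $\pset=\{P_1,\dots,P_\ell\}$ in $\wall[G]$, the $\ell$ paths must start at $\ell$ \emph{distinct} sources: each source $s_i$ is an endpoint of exactly one of the segments $\ell_1,\dots,\ell_n$ and lies at an extreme corner of the grid, so in $\wall[G]$ it has in-degree zero and out-degree one; a path leaving $s_i$ must use that unique out-edge, so at most one path of $\pset$ can touch $s_i$. Hence $\ell\le n$. Since every orderly feasible collection is in particular a collection of edge-disjoint paths, and since $n\le|\wall[G]|$, this gives
\[
\tildeEDP(\wall[G])\;=\;\max_{\pset}|\pset|\;\le\;n\;\le\;|\wall[G]|,
\]
which is the claimed bound; this is exactly Property~(III) of the low $\alpha$-projection property, so nothing more is needed.

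I expect no real obstacle here: the statement is a pure counting bound. The only mild point of care is that ``disjoint'' means edge-disjoint, but since the ``one path per source'' bound already follows from the sources having out-degree one, the identical argument also yields $\edp(\wall[G])\le|\wall[G]|$ verbatim, and it would go through unchanged for the vertex-disjoint variant as well.
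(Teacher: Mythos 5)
The paper declares this observation ``more or less obvious'' and gives no proof, so there is nothing to contrast against; your argument is the natural one and it is correct. Counting paths by sources is sound: in $\wall[G]$ each source $s_i$ sits at a corner of the grid with a single outgoing edge, so edge-disjointness forces at most one path per source, giving $\tildeEDP(\wall[G]) \le n = |V(G)| \le |\wall[G]|$, exactly Property~(III).
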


Next, the following lemma will finish the proof of the low $\alpha$-projection property.

\begin{lemma}
\label{lem: edp product}  
For any two graphs $G$ and $H$, 
\[\tildeEDP(\wall[G \cdot H]) \leq 3|V(G)|^2 + \alpha(G)\tildeEDP(\wall [H]) \]
\end{lemma}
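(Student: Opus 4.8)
The plan is to take an orderly feasible collection $\pset = \set{P_1, \ldots, P_\ell}$ of paths in $\wall[G\cdot H]$ and decompose it according to which ``$G$-block'' each path spends its time in. Recall that a vertex of $G\cdot H$ is a pair $(u,a)$ with $u\in V(G)$, $a\in V(H)$; the sources/sinks of $\wall[G\cdot H]$ are indexed by these pairs in the lexicographic order, so the first coordinate $u$ determines a coarse ``block structure'' of $n_G := |V(G)|$ blocks, and within each block we see a copy of (roughly) $\wall[H]$, since a pair of vertices $(u,a),(u,b)$ with the same $G$-coordinate is adjacent in $G\cdot H$ iff $ab\in E(H)$. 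First I would make precise the claim that the sub-instance of $\wall[G\cdot H]$ restricted to line segments whose $G$-coordinate equals a fixed $u$ is (isomorphic to, or embeds into) $\wall[H]$; the intersection points between a line of block $u$ and a line of block $u'$ with $u\neq u'$ are always replaced by uncrossings, because $(u,\cdot)(u',\cdot)$ is an edge of $G\cdot H$ iff $uu'\in E(G)$ — so crossings survive between blocks exactly along edges of $G$, which is what lets the independence number of $G$ enter.

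Next I would classify each path $P_t\in\pset$ as either \emph{confined} — its source and sink lie in the same $G$-block $u$ — or \emph{crossing} — its source is in block $u$ and its sink in block $u'\neq u$. For the crossing paths: since $\pset$ is orderly feasible, any two paths must cross, and a path leaving block $u$ and entering block $u'$ forces a genuine crossing of the block boundaries; I would argue that the set of ordered pairs $(u,u')$ that arise this way, or rather the set of blocks \emph{touched} by crossing paths, forms (the vertex set of) a structure whose size is controlled by $\alpha(G)$ — more precisely, that the number of crossing paths is at most something like $|V(G)|$, hence absorbed into the $3|V(G)|^2$ additive term, OR that crossing paths can only occur among blocks forming an independent set, which again limits their count. (The cleanest version: show the number of crossing paths is $O(|V(G)|)$ outright, since each ``uses up'' a block boundary in the grid and there are only $|V(G)|$ of them; $|V(G)| \le 3|V(G)|^2$ trivially.) For the confined paths: group them by their block $u$. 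Within block $u$ the confined paths form an orderly feasible collection inside a copy of $\wall[H]$, so the number of confined paths in block $u$ is at most $\tildeEDP(\wall[H])$. But now the key point — this is where I expect the main obstacle — is that not all $n_G$ blocks can simultaneously host confined paths: if blocks $u$ and $u'$ both host a confined path, then those two paths, being in an orderly feasible set, must cross; but two paths confined to distinct blocks $u\ne u'$ can only cross where lines of block $u$ meet lines of block $u'$, i.e.\ at a surviving crossing, which exists only if $uu'\in E(G)$. Hence the set of blocks hosting at least one confined path must be an independent set in $G$ — wait, the opposite: it must be a \emph{clique}, so that every pair crosses. I need to be careful here about the direction of the argument and whether it is crossing-pairs that force edges or forbid them; the honest resolution is probably that the blocks hosting confined paths form a clique in $G$, which does not directly bound things by $\alpha(G)$, so instead the $\alpha(G)$ factor must come from a more refined accounting — likely: the confined paths, together with how they must pairwise cross, induce an independent set in $G$ of size = (number of blocks used), giving $\#\text{confined} \le \alpha(G)\cdot\tildeEDP(\wall[H])$.

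So the final assembly would be: $\tildeEDP(\wall[G\cdot H]) = \#\text{confined} + \#\text{crossing} \le \alpha(G)\,\tildeEDP(\wall[H]) + 3|V(G)|^2$, where the $3|V(G)|^2$ generously bounds the crossing paths (and any boundary-effect paths that don't cleanly fall into either category, e.g.\ paths touching the degenerate top/bottom blocks). The main obstacle, as flagged, is nailing down the combinatorial dichotomy: precisely why the ``pattern'' of blocks that confined paths inhabit corresponds to an independent set of $G$ (so that $\alpha(G)$ is the right bound and not $\omega(G)$ or $n_G$), and carefully handling paths that start in one block and end in an adjacent block — these are the ``crossing'' paths and their structure must mirror edges of $G$, so their contribution needs to be shown negligible (absorbed in $|V(G)|^2$) or also folded into an independent-set bound. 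I would also double-check the constant: each of the $O(|V(G)|)$ boundaries between consecutive blocks can be crossed by at most $O(|V(G)|)$ disjoint paths in the grid, giving $O(|V(G)|^2)$, and tracking the exact constant (the ``$3$'') is routine once the crossing-path bound is set up, so I would not belabor it here.
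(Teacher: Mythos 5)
Your high-level decomposition (split paths into those "stuck in one $G$-block" versus those that cross blocks, bound the first group by $\alpha(G)\cdot\tildeEDP(\wall[H])$, absorb the second into the additive term) is indeed the structure of the paper's proof, where the two classes are called \emph{semi-canonical} paths and the rest. However, there are two genuine problems with your execution.

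First, you have the reduction's convention backwards, and it shows. In $\wall[\cdot]$, when $uu'\in E(G)$ the intersection is split into two vertices joined by a single directed edge $e_{u,u'}$ — so two edge-disjoint paths \emph{cannot} cross there, since they would share that edge. When $uu'\notin E(G)$ the intersection is an \emph{uncrossing}: the two lines pass through each other in the plane without sharing any vertex or edge, so two paths \emph{can} cross there. You say a crossing "exists only if $uu'\in E(G)$," which is the reverse. This is why you vacillate between "the blocks form a clique" and "the blocks form an independent set." The paper's argument is clean once the convention is right: if regions $R_i$ and $R_j$ both host a semi-canonical path, orderly feasibility forces those paths to cross, and since they live in $R_i$ and $R_j$ respectively they can only cross inside the box $B(i,j)$; if $ij\in E(G)$ that box contains only shared-edge gadgets, contradiction; so $ij\notin E(G)$ and the occupied blocks form an independent set $\Lambda$ of $G$. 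Then each box $B(i,i)$ is an isomorphic copy of $\wall[H]$ carrying an orderly feasible sub-collection, giving $|\Lambda|\cdot\tildeEDP(\wall[H])\le\alpha(G)\tildeEDP(\wall[H])$.

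Second, your bound on the block-crossing paths is not actually an argument. You claim "each of the $O(|V(G)|)$ boundaries between consecutive blocks can be crossed by at most $O(|V(G)|)$ disjoint paths," but a block boundary in $\wall[G\cdot H]$ has length $\Theta(|V(G)|\cdot|V(H)|)$ and can be traversed by far more disjoint paths than $O(|V(G)|)$ — this counting simply does not go through, and there is no reason the non-confined paths should only touch $O(|V(G)|)$ boundary vertices. The paper does something quite different: it orders the $\binom{|V(G)|}{2}$ switching boxes, defines a sequence of "good" monotone sweep curves $C_1,\ldots,C_z$ so that $C_h$ and $C_{h+1}$ differ by exactly one box, associates to each $C_h$ a configuration (the order in which it meets the non-semi-canonical paths), and shows that passing one box can decrease the number of order-reversals by at most $3$. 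Since the first configuration is fully reversed and the last is sorted, one gets $|I''|-1\le 3(z-1)\le 3|V(G)|^2$. This configuration/reversal counting is the real content of the additive term and is missing from your proposal; without it (or some genuine substitute), the $3|V(G)|^2$ bound is unjustified.

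A smaller point: you classify a path as confined by asking that its source and sink lie in the same $G$-block, whereas the paper requires the path be \emph{entirely} contained in the region $R_i$. Because of directedness these notions are probably equivalent here, but the paper's definition is the one that plugs directly into the "two confined paths can only cross in $B(i,j)$" step, so it is the safer choice.
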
  

We will spend the rest of this section to prove the lemma.  









\subsection{Geometry of Paths: Regions, switching boxes, and configurations}

This section discusses the structure of the graph $\wall [G \cdot H]$ and a feasible solution for EDP in $\wall [G \cdot H]$.
We define some terminologies that will be needed in the analysis.  

\paragraph{Ordering of Paths} 
We need a notion of ``ordering'' of edge-disjoint paths with respect to certain curve.
We think of graph $\awall [G]$ as being drawn on the plane with standard $x$ and $y$ coordinates. 
All sources and sinks are on $y$ and $x$ axes respectively.  
 
For any collection of edge-disjoint paths $\pset$ in $\wall [G]$, one can naturally map these paths on the graph $\awall [G]$ and think of them as curves on the plane. 
A continuous curve $C: [0,1] \rightarrow {\mathbb R}^2$ is said to be {\em good} if for all $t < t'$, point $C(t)$ is dominated by point $C(t')$ in the plane and the curve $C$ does not go through any intersection point $y_{i,j}$ (informally, the curve is directed to the top and right). 
Let $C$ be any good curve. The ordering $\preceq_C$ is defined on the set of paths $\pset'$ intersecting $C$ as follows: Paths $P \prec_C P'$ if and only if $C$ intersects $P$ before it intersects $P'$.
Since $C$ does not intersect point $y_{i,j}$, either $P \prec_C P'$ or $P' \prec_C P$.

\begin{figure}[hbt]
\begin{center} 
\includegraphics[scale=0.45, clip=true, trim=  0 5cm 5cm 0cm] {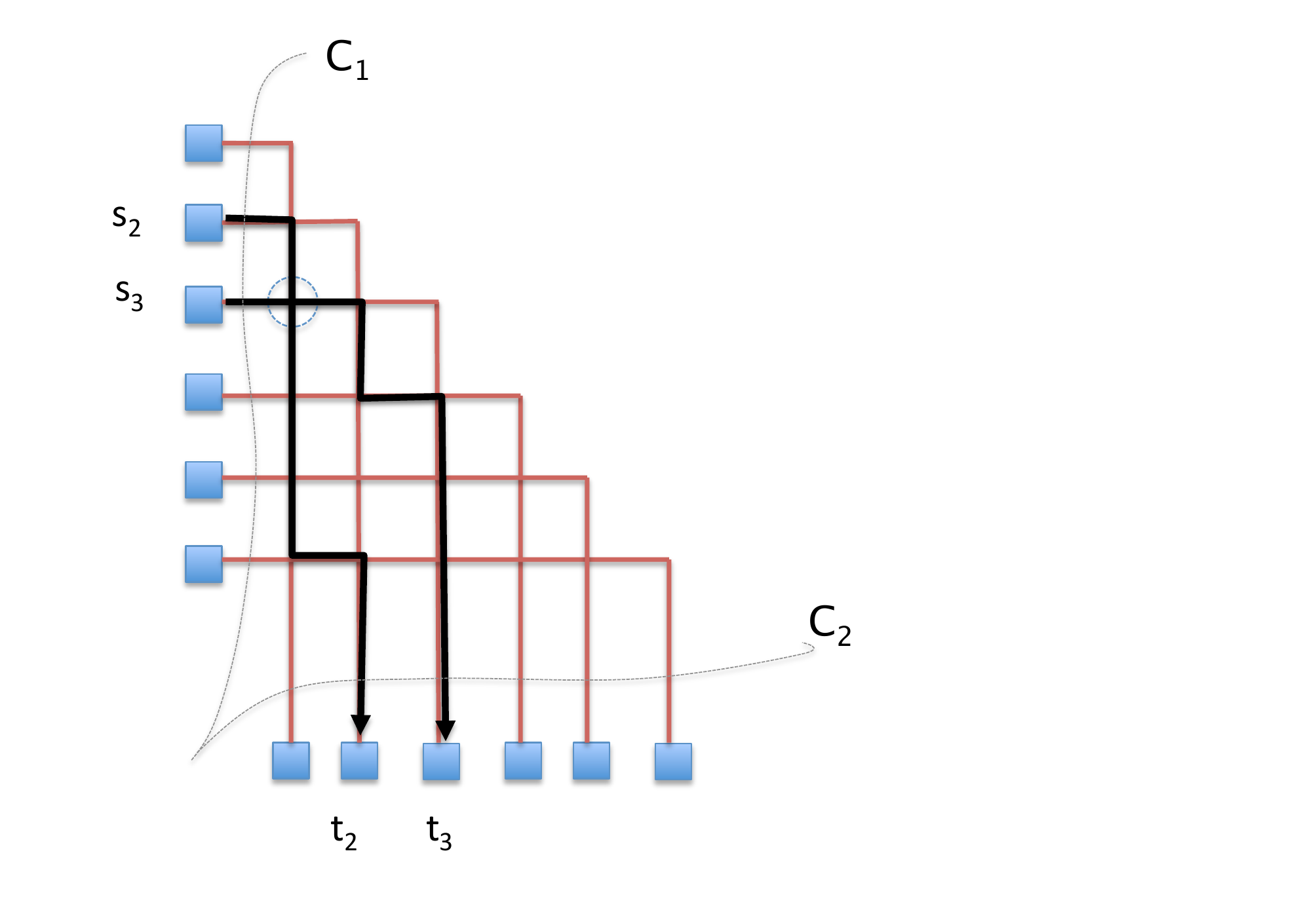}
\caption{Both $C_1$ and $C_2$ are good curves that originated from $(0,0)$ (illustrated by dotted lines). We have $Q' \prec_{C_1} Q$ while $Q \prec_{C_2} Q'$. \label{fig: curve}}  
\end{center} 
\end{figure}

\paragraph{Regions and Switching Boxes.}
In $\wall [G \cdot H]$, we have canonical paths $P_{ia}$ for $i \in V(G)$ and $a \in V(H)$. 
For each $i \in V(G)$, we define a {\em region} $R_i$ on the plane that contains all paths $P_{(i,a)}$ for $a \in [r]$. 
For $i, j \in V(G)$, the intersection between regions $R_i$ and $R_j$ is called a {\em bounding box} $B(i,j)$ which contains $|V(H)|^2$ virtual vertices of the form $Y{(i,a), (j,b)}$ for $a, b \in V(H)$.
Notice that a canonical path $P_{(i,a)}$ is completely contained inside region $R_i$, and as we walk on the path from $s(i,a)$ to $t(i,a)$, we will visit the bounding boxes $B(i,1),\ldots, B(i,n)$ in this order.
For convenience, the region in $R_i$ between $B(i,i-1)$ and $B(i,i+1)$ is called $B(i,i)$.  
See Figure~\ref{fig: region} for illustration.  

\begin{figure}[hbt]
\begin{center} 
\includegraphics[scale=0.5, clip=true, trim=  0 7cm 9cm 0cm] {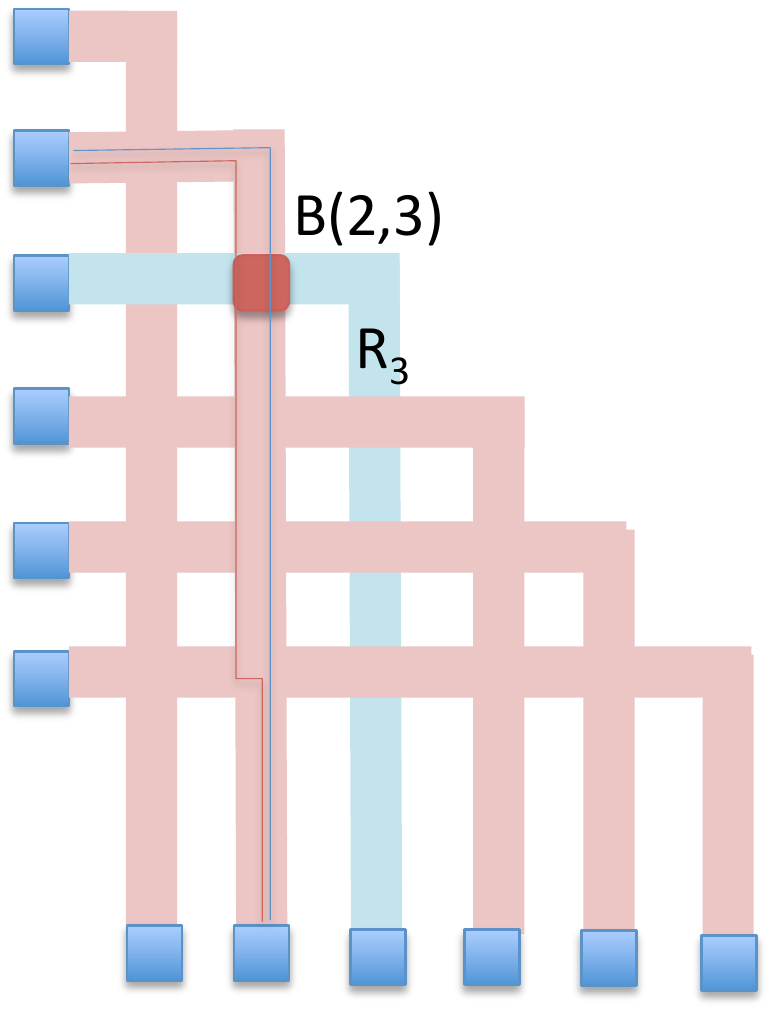}
\caption{Regions and switching boxes in $\vec{R}[G]$. There are two paths routed inside region $R_2$. \label{fig: region}}
\end{center} 
\end{figure}

\begin{proposition} 
Consider any box $B(i,j)$ for $i \neq j$. One of the following two cases holds: 
\begin{itemize} 
\item For all $a,b \in V(H)$, the virtual vertex $y_{(i,a), (j,b)}$ is a directed edge $e_{(i,a),(j,b)}$. This happens when $ij \in E(G)$, and we say that the box $B(i,j)$ is a non-switching box.  

\item For all $a,b \in V(H)$, the virtual vertex is an uncrossing, in which case, we say that the box $B(i,j)$ is a switching box.   

\end{itemize} 
\end{proposition}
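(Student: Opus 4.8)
The plan is to derive this purely from the definition of the lexicographic product together with the Second Step of the construction of $\wall[\cdot]$; no geometry or routing argument is needed. First I would recall that, by Eq.~\eqref{eq:lex product definition}, a pair $(i,a)(j,b)$ belongs to $E(G\cdot H)$ exactly when $ij\in E(G)$, \emph{or} $i=j$ and $ab\in E(H)$. Since the proposition fixes $i\neq j$, the second disjunct is vacuous, so for all $a,b\in V(H)$ we have $(i,a)(j,b)\in E(G\cdot H)$ if and only if $ij\in E(G)$ --- a condition that does not depend on $a$ or $b$.

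Next I would invoke the Second Step of the construction of the switching graph $\wall[G\cdot H]$: the intersection point $y_{(i,a),(j,b)}$ of the two canonical lines indexed by $(i,a)$ and $(j,b)$ is split into the directed edge $e_{(i,a),(j,b)}$ precisely when $(i,a)(j,b)\in E(G\cdot H)$, and is replaced by an uncrossing otherwise. Composing with the previous observation gives the dichotomy: if $ij\in E(G)$, then every virtual vertex $y_{(i,a),(j,b)}$ inside $B(i,j)$ is a directed edge, so $B(i,j)$ is a non-switching box; and if $ij\notin E(G)$, then every such virtual vertex is an uncrossing, so $B(i,j)$ is a switching box. The two cases are exhaustive and mutually exclusive, which is exactly the claim, and this is precisely what makes the ``switching box / non-switching box'' terminology well defined.

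The only bookkeeping point to check carefully --- and what I would regard as the one mild obstacle --- is that the box $B(i,j)$ really does consist of exactly the $|V(H)|^2$ intersection points $\{\,y_{(i,a),(j,b)} : a,b\in V(H)\,\}$; that is, that when we enumerate the vertices of $G\cdot H$ to draw the canonical lines of $\wall[G\cdot H]$, pairs sharing the same $G$-coordinate $i$ are grouped consecutively, so that the region $R_i$ collects exactly the lines $\ell_{(i,a)}$ for $a\in V(H)$ and the intersection $R_i\cap R_j = B(i,j)$ collects exactly their pairwise crossings. This is an immediate consequence of how $\wall[\cdot]$ and the regions $R_i$ were set up, so in the end the proposition is essentially a restatement of the construction and the proof is a short paragraph.
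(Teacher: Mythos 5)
Your argument is correct and is exactly the intended one: the paper states this proposition without a written proof precisely because, once one notices that for $i\neq j$ the disjunct ``$i=j$ and $ab\in E(H)$'' in Eq.~\eqref{eq:lex product definition} is vacuous, membership of $(i,a)(j,b)$ in $E(G\cdot H)$ depends only on whether $ij\in E(G)$, and the Second Step of the construction of $\wall[\cdot]$ then translates that uniform condition into a uniform choice of ``directed edge'' versus ``uncrossing'' across the whole box. Your closing remark about checking that $B(i,j)$ consists of exactly the $|V(H)|^2$ intersection points $\{y_{(i,a),(j,b)}: a,b\in V(H)\}$ is the right sanity check and indeed follows directly from how the regions $R_i$ are defined.
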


The term switching box is coined from an intuitive reason: Consider a switching box $B(i,j)$ and a collection of edge-disjoint paths that are routed in a solution.  
Let $\pset_{top}$ and $\pset_{left}$ be the paths in the solution that enter this box from the top and left respectively, so paths in $\pset_{top}$ (resp. $\pset_{left}$) must leave the box from the bottom (resp. right). 
Define the curves $C_{in}$ (and $C_{out}$) as the union of left and top boundaries of $B(i,j)$ (resp. the union of right and bottom boundaries).
With respect to the curve $C_{in}$ all paths in $\pset_{top}$ are ordered after paths in $\pset_{left}$, while this becomes the opposite for $C_{out}$.
In other words, the box $B(i,j)$ ``switches'' the order of these paths.


\subsection{Proof}\label{sec:proof edp ineq}

Now we prove Lemma~\ref{lem: edp product}. 
Let $I$ be the set of indices of edge-disjoint paths in $\wall [G \cdot H]$ where, for each $(i,a) \in I$, there is a path $Q_{(i,a)}$ connecting $s(i,a)$ to $t(\psi(i,a))$ and the paths $Q_{(i,a)}$ are edge-disjoint and orderly feasible (recall that orderly feasible solutions may connect $s(i,a)$ to some other sink $t(i',a')$).  
We say that a path $Q_{(i,a)}$ is {\em semi-canonical} if it is completely contained in region $R_i$. 
Let $I' \subseteq I$ be the set of semi-canonical paths. 

\begin{lemma} 
$|I'| \leq \alpha(G) \tildeEDP(\wall [H])$ 
\end{lemma}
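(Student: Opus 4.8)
The plan is to partition $I'$ according to which region each semi-canonical path inhabits, bound the contribution of each region by $\tildeEDP(\wall[H])$, and separately argue that only an independent set's worth of regions can contribute at all.

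First I would set up the bookkeeping. For each $i\in V(G)$ let $J_i=\{a\in V(H):(i,a)\in I'\}$, so that $I'=\bigsqcup_{i\in V(G)}(\{i\}\times J_i)$ and $|I'|=\sum_{i\in V(G)}|J_i|$. Let $S=\{i:J_i\neq\emptyset\}$ be the set of regions that actually contain a semi-canonical path. I would then prove that $S$ is an independent set in $G$. Suppose instead $ij\in E(G)$ with $i,j\in S$. By the proposition characterizing switching/non-switching boxes, $B(i,j)=R_i\cap R_j$ is a non-switching box, so every virtual vertex inside it is a directed edge $e_{(i,a'),(j,b')}$. A semi-canonical path $Q_{(i,a)}\subseteq R_i$ must traverse $B(i,j)$; since it never leaves $R_i$ and the graph is a DAG oriented toward the leaves, inside $B(i,j)$ it cannot turn onto any line $\ell_{(j,c)}$ (which lies in $R_j$), so it stays on its current line $\ell_{(i,a)}$ and therefore uses the entire ``row'' of edges $\{e_{(i,a),(j,c)}:c\in V(H)\}$. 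Symmetrically a semi-canonical path $Q_{(j,b)}\subseteq R_j$ traverses $B(j,i)=B(i,j)$ using the entire ``column'' $\{e_{(i,c),(j,b)}:c\in V(H)\}$. These two edge sets share $e_{(i,a),(j,b)}$, contradicting edge-disjointness. Hence $|S|\leq\alpha(G)$.

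Second I would show $|J_i|\leq\tildeEDP(\wall[H])$ for every $i$. Fix $i$ and consider the paths $\{Q_{(i,a)}:a\in J_i\}$, all contained in $R_i$. By the same reasoning as above applied to every off-diagonal box $B(i,j)$ with $j\neq i$ (whether it is switching, where an uncrossing gives no option to turn, or non-switching, where turning would exit $R_i$), none of these paths changes the line $\ell_{(i,\cdot)}$ it occupies except possibly inside the diagonal box $B(i,i)$, which is an isomorphic copy of $\wall[H]$. Consequently, restricting each $Q_{(i,a)}$ to $B(i,i)$ produces a path in $\wall[H]$ from the source corresponding to $a$ to the sink corresponding to $\psi(i,a)=(i,\phi_i(a))$; these restrictions are pairwise edge-disjoint because the $Q$'s are, and they are orderly feasible because the global solution is orderly feasible and the sources of $R_i$ appear consecutively, in the same relative order, as the sources of $\wall[H]$ (so $\phi_i$ is order-preserving on $J_i$). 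Therefore $|J_i|\leq\tildeEDP(\wall[H])$. Combining, $|I'|=\sum_{i\in S}|J_i|\leq|S|\cdot\tildeEDP(\wall[H])\leq\alpha(G)\,\tildeEDP(\wall[H])$, as claimed.

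The step I expect to be the main obstacle is making rigorous the geometric claim that a semi-canonical path stays on a single line $\ell_{(i,\cdot)}$ outside $B(i,i)$ and occupies a full row when it passes through a non-switching box: this rests on a precise reading of the planar embedding of $\wall[G\cdot H]$ (edge orientations, which lines are assigned to which region $R_i$, and the fact that ``turning a corner'' at a non-switching crossing necessarily moves a path out of its home region). Once that fact is established, both the independence of $S$ and the per-region bound follow cleanly, and the orderly-feasibility bookkeeping in the second step is routine.
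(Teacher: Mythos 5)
Your proposal is correct and follows the same overall structure as the paper's proof: partition $I'$ by the first coordinate, show that the set of first coordinates that actually appear forms an independent set in $G$ (hence at most $\alpha(G)$ of them), and show that each region contributes at most $\tildeEDP(\wall[H])$ semi-canonical paths by restricting to the diagonal box $B(i,i)$. The one place you diverge is in justifying independence: you argue geometrically that a semi-canonical path through a non-switching box $B(i,j)$ is forced onto a full row of edges $\{e_{(i,a'),(j,c)}\}_c$ (and the other onto a full column), whereas the paper invokes orderly feasibility directly --- the two paths must cross somewhere, the only shared real estate is $B(i,j)$, and a crossing in a non-switching box means a shared edge --- thereby sidestepping the full-row/full-column claim that you correctly flag as the step needing the most care. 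Both arguments work; the paper's is slightly lighter on planar-embedding bookkeeping, yours makes the local structure of non-switching boxes more explicit. Your step 2 (the per-region bound and preservation of orderly feasibility under restriction to $B(i,i)$) matches the paper's argument in substance.
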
    

\begin{proof}
We first define the partition of $I'$ by the first coordinates of paths.  
Define $I'_i = \set{(i,a): (i,a) \in I'}$, so we will have $I' = \bigcup_{i \in V(G)} I'_i$.  
We count the number of indices $i \in V(G)$ such that $I'_i \neq \emptyset$. Define $\Lambda = \set{i \in V(G): I'_i \neq \emptyset}$. We claim that $\Lambda$ is an independent set and therefore  $|\Lambda| \leq \alpha(G)$: Suppose $i, j \in V(G)$ such that $I'_i, I'_j \neq \emptyset$. 
Let $(i,a) \in I'_i$ and $(j,b) \in I'_j$ be any two paths.  
Due to the fact that this is an orderly feasible solution, these two paths must cross at some virtual vertex $y_{(i,a'), (j,b')}$ inside box $B(i,j)$, and they must share an edge $e_{(i,a'), (j,b')}$, a contradiction.
This implies that $|\Lambda| \leq \alpha(G)$. 

Next, we argue that $|I'_i| \leq \tildeEDP(\wall [H])$ for all $i \in \Lambda$, which will complete the proof of the lemma.
If we consider the box $B(i,i)$, we see an isomorphic copy $H'$ of graph $ \wall [H]$, in which each path $Q_{(i,a)}$ corresponds to another path $Q'_{\phi(a)}$ that connects some ``source'' $s'(\phi(a))$ to ``sink'' $t'(\xi(a))$.
We claim that the collection of paths $Q'_{\phi(a)}$ is orderly feasible in the instance $(H', \set{(s'(a),t'(a)}_{a \in V(H)})$: Assume otherwise that some paths $Q'_{\phi(a)}$ and $Q'_{\phi(b)}$ do not cross inside $H'$, so the paths $Q_{(i,a)}$ and $Q_{(i,b)}$ must cross at some other box $B(i,j)$ for $i \neq j$. 
If such box is a switching box, it is impossible for these two paths to cross because they must enter and leave the box from the same direction; otherwise, if box $B(i,j)$ is not a switching box, it is also impossible for them to cross.   
\end{proof} 

Let $I'' = I \setminus I'$. For convenience, let us renumber $I''$ such that $I'' = \set{1,\ldots, t}$ such that the source of path $1$ is above that of path $2$ and so on.  
Now we will show that $|I''| \leq 3 |E(G)| +1$. 
Our proof will rely on the notion of configurations.
We first define the order of boxes $B(i,j)$ for $i > j$ such that $B(2,1)$ is the first box, which precedes $B(3,1)$ (the second box), and so on. 
More formally, the box $B(i,j)$ precedes $B(i',j')$ if and only if $i < i'$ or $i=i'$ and $j < j'$; in short, this is simply a lexicographic order of boxes.  
This defines a total order over boxes.   
 
We define a number of good curves $C_1,\ldots, C_z$ for $z = {|V(G)| \choose 2}$, where the curve $C_h$ is any good curve such that (i) $C_h(0) = (0,0)$, (ii) $C_h(1) = (x_{\max}, y_{\max})$ and (iii) the first $h-1$ boxes are above $C_h$, while $z - h+1$ curves are below it (notice that $C_h$ partitions the region $[0, n+1] \times [0,n+1]$ into two parts, i.e. one above the curve and the other below it). 

\begin{observation}
For each $h = 1,\ldots, z$ and path $i \in I''$, the curve $C_h$ intersects path $i \in I''$.  
\end{observation}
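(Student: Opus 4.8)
The claim is that for every $h \in \{1,\ldots,z\}$ and every path $i \in I''$, the good curve $C_h$ intersects path $i$. The plan is to argue via the topological/planar structure of $\wall[G\cdot H]$: a curve $C_h$ that goes from the bottom-left corner $(0,0)$ all the way to the top-right corner $(x_{\max},y_{\max})$ must separate the plane into an ``above-left'' part and a ``below-right'' part, and every source-sink path has its source on one side of $C_h$ and its sink on the other, forcing an intersection.

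First I would pin down where sources and sinks live. By construction of $\wall[G]$ (and hence $\wall[G\cdot H]$), all sources $s(i,a)$ lie on the $y$-axis and all sinks $t(i,a)$ lie on the $x$-axis. The curve $C_h$ starts at the origin $(0,0)$ and ends at the far corner $(x_{\max}, y_{\max})$, and by the definition of a good curve it is monotone (each later point dominates each earlier point). Hence $C_h$, together with the segment of the $y$-axis above the origin and the segment of the $x$-axis to the right of the origin, bounds a closed region; equivalently, $C_h$ is a Jordan arc from one corner of the bounding box $[0,n]\times[0,n]$ to the opposite corner, so it partitions the box into exactly two connected pieces --- the piece containing the top-left corner $(0,y_{\max})$ (which meets the $y$-axis) and the piece containing the bottom-right corner $(x_{\max},0)$ (which meets the $x$-axis). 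Since $C_h(0)=(0,0)$ is on the $y$-axis and $C_h(1)=(x_{\max},y_{\max})$ is reached monotonically, every point of the positive $y$-axis above some threshold lies in the top-left piece and every point of the positive $x$-axis lies in the bottom-right piece; more carefully, one checks using monotonicity of $C_h$ that each entire source line and each entire sink line lies on a definite side except possibly for the degenerate shared origin.

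Next I would invoke the fact that each path $Q_i$ for $i\in I''$ is a continuous curve in the plane embedding of $\wall[G\cdot H]$ running from its source on the $y$-axis to its sink on the $x$-axis. Its source endpoint lies in the closed top-left region determined by $C_h$ and its sink endpoint lies in the closed bottom-right region; since $Q_i$ is connected and these two regions meet only along $C_h$, the path $Q_i$ must contain a point of $C_h$. The only thing to rule out is the trivial intersection at the corner $(0,0)$ itself --- but no source or sink of any path in $I''$ coincides with the origin (the sources are at positions $(0,y)$ with $y>0$ and sinks at $(x,0)$ with $x>0$ in the grid coordinates), so the crossing is a genuine interior crossing, and in particular $C_h$ really does intersect the path $i$. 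The main obstacle, and the part I would be most careful about, is making the ``Jordan-arc separates the box'' argument fully rigorous in the discrete/embedded setting: one must ensure that the monotonicity condition in the definition of a good curve genuinely forces $C_h$ to reach the top-right corner while keeping the $y$-axis side and $x$-axis side apart, and that the paths $Q_i$ (which live on the edges of $\wall[G\cdot H]$, themselves monotone to the right and down) cannot sneak from one side to the other without touching $C_h$. Once the planarity bookkeeping is set up, the conclusion is immediate.
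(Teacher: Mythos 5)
Your proof is correct and takes essentially the same approach as the paper's: the paper proves this observation in one sentence by noting that every path in the orderly feasible solution starts in the region above (to the top-left of) $C_h$ and ends in the region below (to the bottom-right of) $C_h$, hence must cross the curve. Your version is a careful unpacking of that Jordan-arc separation argument — pinning down that sources lie on the $y$-axis and sinks on the $x$-axis, and that the monotone curve from $(0,0)$ to $(x_{\max},y_{\max})$ separates the two — but the underlying idea is identical to the paper's terse proof.
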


\begin{proof} 
This is just because any path in the orderly feasible solution starts from the region above the curve $C_h$, while it ends in the region below the curve.  
\end{proof}

For each $h = 1,\ldots, z$, a curve $C_h$ can be used to define a configuration $\sigma_h = (x_1,\ldots, x_t)$ of paths in $I''$ where $x_j \in I''$ is the index of the $j$th path that intersects with the curve $C_h$(this order is well-defined because the curve has directions). 
Notice that $\sigma_1 = (t,\ldots, 1)$, and $\sigma_z = (1,\ldots, t)$.

The number of {\em reversals} of a configuration $\sigma_h$ is the number of locations $j$ such that $\sigma_j > \sigma_{j+1}$. Denote this number by $rev(\sigma_h)$, so we have that $rev(\sigma_1) = t-1$, and $rev(\sigma_z)  =0$.
Our proof proceeds by analyzing how the number of reversals changes over configurations $\sigma_1, \ldots, \sigma_z$. 
We will show that, for any $h$, we have $rev(\sigma_h) - rev(\sigma_{h+1}) \leq 3$, which implies that $t-1=  rev(\sigma_1) - rev(\sigma_z) = \sum_{h=1}^{z-1} (rev(\sigma_h) - rev(\sigma_{h+1})) \leq 3 (z-1)$; in other words, $|I''| = t \leq 3 z \leq 3 |V(G)|^2$.  
So the last thing we need to prove is the following lemma: 

\begin{lemma} 
For any $h=  1,\ldots, z-1$, we have $rev(\sigma_h)  -rev(\sigma_{h+1}) \leq 3$. 
\end{lemma}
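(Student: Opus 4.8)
The plan is to pin down exactly how the configuration changes when we move from the good curve $C_h$ to $C_{h+1}$, and then to read off the change in the number of descents (``reversals''). By construction of the curves, exactly one box --- the $h$-th box $B$ in the lexicographic order --- lies on opposite sides of $C_h$ and $C_{h+1}$, while all other boxes stay put; so I may take $C_h$ and $C_{h+1}$ to coincide outside a small neighbourhood of $B$, with one of them hugging the upper-left boundary of $B$ (its left side followed by its top side) and the other hugging the lower-right boundary (its bottom side followed by its right side). Since every path of $I''$ is monotone (down-and-right) it meets a good curve exactly once; hence the paths of $I''$ crossing the unmodified part of the curve keep their places in the configuration, while the paths of $I''$ passing through $B$ --- call this set $W$ --- are precisely those crossing the modified arc. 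As that arc is a contiguous piece of the curve, $W$ occupies one contiguous block of positions in both $\sigma_h$ and $\sigma_{h+1}$, and the only difference between the two configurations is the internal order of this block, which switches between the order in which the upper-left boundary of $B$ meets the $W$-paths and the order in which the lower-right boundary meets them.

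Next I would analyse this internal reordering via the two cases of the Proposition. If $B$ is a non-switching box ($ij\in E(G)$), every virtual vertex of $B$ is a single directed edge $e_{(i,a),(j,b)}$ usable by at most one path, and two distinct lines of the gadget meet only at such a vertex; hence no two paths of $W$ cross inside $B$, and a standard planarity fact (non-crossing arcs joining one boundary arc of $B$ to the complementary one preserve their order) shows that $W$ has the same internal order in $\sigma_h$ and $\sigma_{h+1}$ --- the two configurations are equal, so $\mathrm{rev}(\sigma_h)-\mathrm{rev}(\sigma_{h+1})=0$. If $B$ is a switching box ($ij\notin E(G)$), every virtual vertex is an uncrossing, so each $W$-path stays on a single line: the paths entering the left side of $B$ leave through the right side, those entering the top leave through the bottom, no two paths within either group cross, but every ``left'' path crosses every ``top'' path exactly once. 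Reading this off the two boundaries of $B$, one configuration lists the block as $[\text{left paths}][\text{top paths}]$ and the other as $[\text{top paths}][\text{left paths}]$ with the internal order of each sub-block unchanged; thus $\sigma_{h+1}$ is obtained from $\sigma_h$ by interchanging two adjacent sub-blocks. Such an interchange preserves every descent except possibly at three ``seams'' --- between the prefix and the block, inside the block between the two sub-blocks, and between the block and the suffix --- so $\mathrm{rev}(\sigma_h)-\mathrm{rev}(\sigma_{h+1})\le 3$, which is the claim.

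The main obstacle is making the first step fully rigorous for the switching-box case: because the $I''$-paths may be far from canonical, one must argue carefully, from the planar embedding of $\awall[G\cdot H]$ and the structure of switching/non-switching boxes, both that the $W$-paths form a single contiguous block of the configuration and that inside $B$ they split cleanly into the two sub-blocks with their internal orders intact. Once that is in place, the bound of $3$ follows from the elementary bookkeeping about descents at the three seams.
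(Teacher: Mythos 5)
Your proposal is correct and follows essentially the same route as the paper: split by whether the $h$th box $B(i,j)$ is switching or non-switching, argue the configuration is unchanged in the non-switching case (no two paths can swap without sharing the unique crossing edge), and in the switching case observe that the $J_{\mathrm{left}}$ and $J_{\mathrm{top}}$ sub-blocks are simply exchanged in place, so at most the three ``seam'' adjacencies can lose a descent. The preliminary observations you single out as needing care (that only the paths through $B$ can move, that they form a contiguous block, and that each sub-block's internal order is preserved) are exactly what the paper's proof takes as given, so your sketch fills in the same argument with a bit more explicit planarity bookkeeping.
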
 

\begin{proof} 
Let $B(i,j)$ be the $h$th box and $J \subseteq I''$ be the indices of paths entering this box.
If the box $B(i,j)$ is a non-switching box, then it must be the case that $\sigma_{h+1} = \sigma_h$due to the fact that paths cannot cross inside region $B(i,j)$.  
This implies that $rev(\sigma_h) = rev(\sigma_{h+1})$ in this case.  

Now we consider the other case when $B(i,j)$ is a switching box. 
We write $J = J_{top} \cup J_{left}$ where $J_{top}$ (and $J_{left}$) is the set of indices of paths entering box $B(i,j)$ from the top (and left respective). It is clear that paths coming out of the bottom and right of the box are exactly $J_{top}$ and $J_{left}$ respectively. 
Notice that, while the curve $C_h$ crosses $J_{top}$ after $J_{left}$, the curve $C_{h+1}$ would cross paths in $J_{left}$ before those in $J_{top}$.  
The configurations $\sigma_h$ and $\sigma_{h+1}$ can be written as $\sigma_h = \sigma' \circ \sigma^{left} \circ \sigma^{top} \circ \sigma''$ and $\sigma_{h+1} = \sigma' \circ \sigma^{top} \circ \sigma^{left} \circ \sigma''$ respectively.  
See Figure~\ref{fig: cross} for illustration.  
\end{proof} 

\begin{figure}[hbt]
\begin{center} 
\includegraphics[scale=0.9, clip=true, trim=  0 10cm 9cm 3cm] {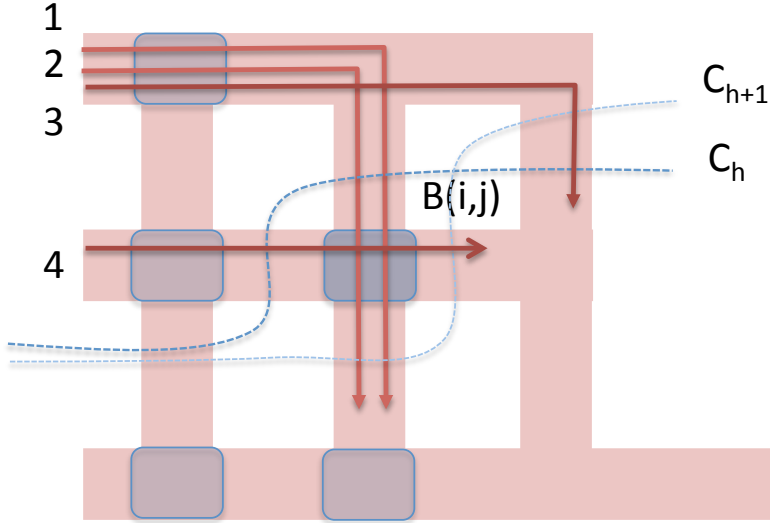}
\caption{Configurations of curve $C_h$ and $C_{h+1}$ are $\sigma_h = (4,2,1,3)$ and $\sigma_{h+1} = (2,1,4,3)$ respectively. In this case, $\sigma^{top} = (2,1)$ and $\sigma^{left} = (4)$. Box $B(i,j)$ is a switching box.\label{fig: cross}}
\end{center} 
\end{figure}

\section{Other Problems}\label{sec:other}

In this section, we prove the hardness of $k$-Cycle Packing and DNF/CNF Minimization. 
As noted previously, our proof for CNF minimization is an alternative proof of Aleknovich et al. \cite{AlekhnovichBFKP08}. 

\subsection{Hardness of $k$-Cycle Packing for Large $k$}

We consider the problem of packing edge-disjoint $k$-cycles in which our goal is to pack as many cycles of length at most $k$ as possible. 
We only need to slightly change the reduction $\wall [G]$ as used in Section~\ref{sec:edp} in the following way: In the second step, for each pair $i, j \in V(G)$, if $ij \in E(G)$, we do the same, but for $ij \not \in E(G)$ (including the case when $i=j$), we make two new vertices on each line before and after the jump (see Figure~\ref{fig: edc}).
Also, we have a {\em back edge} from $t_i$ to $s_i$ for each $i \in V(G)$.  

\begin{figure}[hbt]
\begin{center} 
\includegraphics[scale=0.5, clip=true, trim=  0 3cm 4cm 0cm] {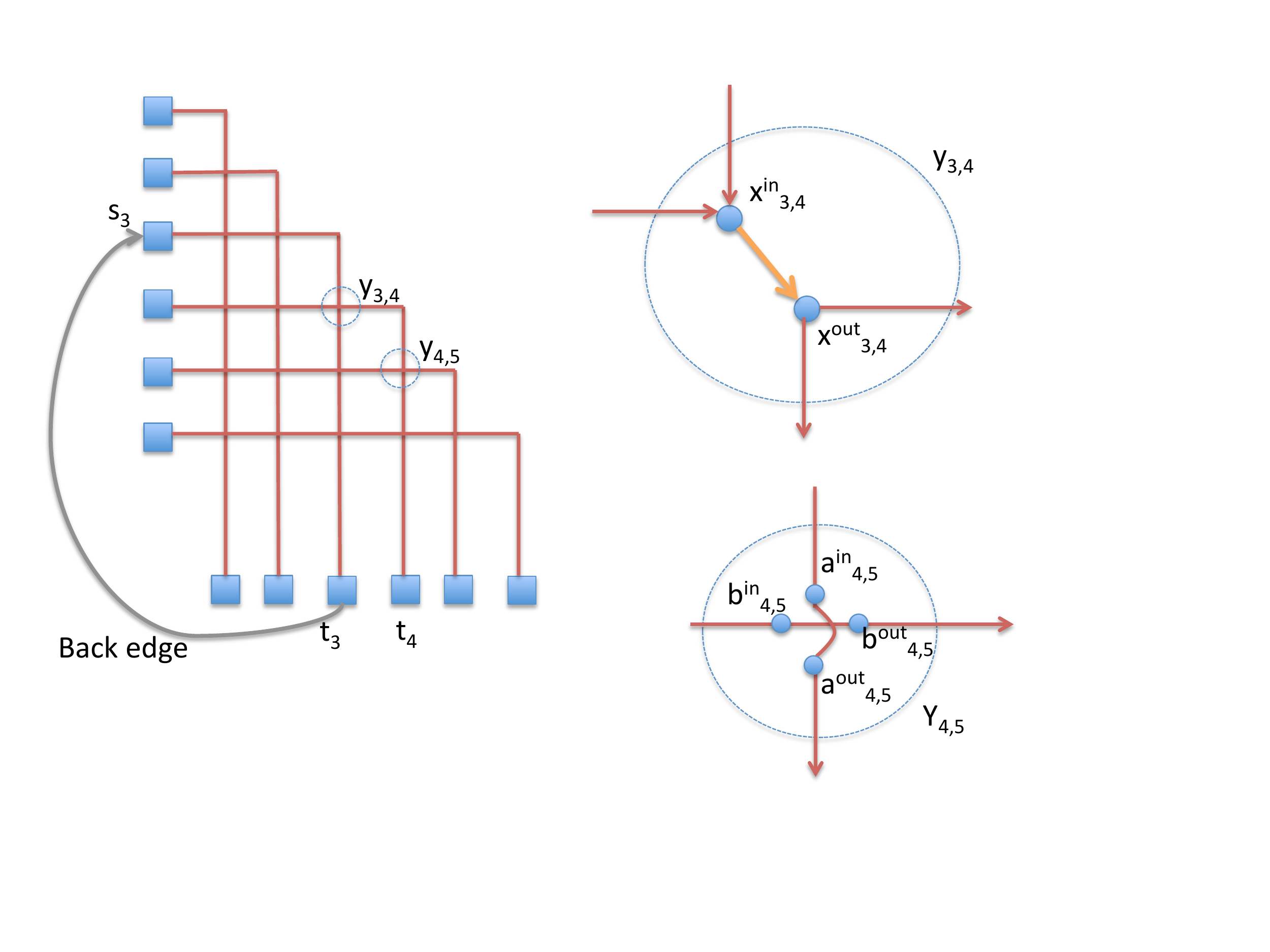}
\caption{The graph $\protect\wall[G]$ where $(3,4) \in E(G)$ but $(4,5) \not\in E(G)$. The differences between this gadget and EDP gadget only lies in the new vertices $a^{in}_{i,j}, a^{out}_{i,j}, b^{in}_{i,j}, b^{out}_{i,j}$}  
\label{fig: edc}
\end{center} 
\end{figure}

With this reduction, any ``canonical'' cycle between source $s_i$ to sink $t_i$ (and taking back edge to $s_i$) must have length exactly $2n+2$, so we choose the value $k=2n+2$.
Let $edc(\wall' [G])$ denote the optimal value of $k$-cycle packing. 
We now establish the connection between the optimal value of EDP solution in $\wall [G]$ and the $k$-EDC solution in $\wall' [G]$. 

Notice that for any cycle that uses only one back edge, there is a corresponding path from some $s_i$ to $t_i$. 
The number of these cycles corresponds exactly to $\EDP(\wall [G])$, so we can write 
\[\EDP(\wall [G]) \leq  edc(\wall' [G]) \leq \EDP(\wall [G]) + \widetilde{edc}(\wall' [G])\] 
where $\widetilde{edc}(\wall' [G])$ is the number of cycles that use more than one back edge. 
The following lemma says that these cycles must be longer than $k$, i.e. $\widetilde{edc}(\wall' [G]) = 0$. In other words, this implies that $edc(\wall' [G]) = \EDP (\wall [G])$.  
  

\begin{lemma} 
Let $C$ be a cycle in $\wall' [G]$ that uses more than one back edge. Then $|C| >k$. 
\end{lemma}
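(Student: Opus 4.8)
The plan is to analyze the structure of a cycle $C$ in $\wall'[G]$ that uses at least two back edges, and show directly that it must be too long to be a $k$-cycle. The key quantitative fact from the construction is that a \emph{canonical} cycle (going from $s_i$ up-and-over to $t_i$ along line $\ell_i$, then using the single back edge $t_is_i$) has length exactly $2n+2$, and we set $k=2n+2$. So I need to argue that using two or more back edges forces the cycle to be strictly longer than $2n+2$.

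First I would set up the geometry exactly as in the EDP section: edges of $\wall'[G]$ are oriented left-to-right and top-to-bottom, and the only edges that move ``backward'' (up or to the right) are the back edges $t_is_i$. Thus between two consecutive uses of back edges, the cycle must travel along ``forward'' edges only, i.e. it traces a monotone (down-right) curve from some source-region to some sink-region. Concretely, if $C$ uses back edges $t_{i_1}s_{i_1}, t_{i_2}s_{i_2}, \ldots, t_{i_r}s_{i_r}$ in cyclic order with $r\ge 2$, then $C$ decomposes into $r$ monotone forward sub-paths, where the $\ell$-th sub-path starts at $s_{i_\ell}$ and ends at $t_{i_{\ell+1}}$ (indices mod $r$). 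The crucial point is that each such forward sub-path, because it lives on the grid of line segments $\ell_1,\dots,\ell_n$ that have the specific $L$-shaped coordinates described in the first step of the construction (from $(n+1-i,0)$ up to $(n+1-i,i)$ then left to $(0,i)$), must have length at least roughly $n$ — it has to descend from the $y$-axis level all the way across. More carefully: any path from a source $s_{i_\ell}$ (on the $x$-axis) to a sink $t_{i_{\ell+1}}$ (on the $y$-axis) in this grid must cross, and the number of grid edges it traverses is at least the ``$L_1$-type'' distance in the embedding, which in this construction is at least $n$ (every canonical path already has length $n$, and the added vertices $a^{\text{in}},a^{\text{out}},b^{\text{in}},b^{\text{out}}$ at uncrossings only make non-canonical detours longer, never shorter).

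Then the counting is immediate: with $r\ge 2$ back edges and $r$ forward sub-paths each of length $\ge n$, together with $r$ back edges, we get $|C| \ge rn + r \ge 2n+2 = k$, and I need to sharpen this to strict inequality. For $r\ge 3$ this is trivial ($|C|\ge 3n+3 > k$). For $r=2$: the two forward sub-paths go $s_{i_1}\to t_{i_2}$ and $s_{i_2}\to t_{i_1}$; since a back edge connects $t_j$ to $s_j$ for the \emph{same} $j$, having $r=2$ with $i_1\ne i_2$ means at least one of the two sub-paths is genuinely non-canonical (it starts and ends on different line segments), so it must pass through at least one uncrossing box and pick up the extra added vertices, making its length strictly greater than $n$; hence $|C| > 2n+2 = k$. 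If $i_1=i_2$ the cycle would be using the same back edge twice, which is not allowed in a cycle (or traverses $\ell_{i_1}$ twice, again impossible for a simple cycle on a DAG-plus-back-edges structure).

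The main obstacle I anticipate is making the ``every source-to-sink forward sub-path has length $\ge n$, with strict inequality once it is non-canonical'' claim fully rigorous from the grid embedding — this requires a careful monotonicity/crossing argument about the $L$-shaped line segments and the uncrossing gadgets, rather than a slick one-liner. I would handle it by observing that projecting a forward sub-path onto the horizontal axis is non-increasing and onto the vertical axis is non-increasing, that it starts at horizontal coordinate near $n$ and vertical coordinate $0$ and must end at horizontal coordinate $0$, so it traverses at least $n$ distinct ``columns'' of the grid; and the uncrossing gadget for a non-present edge $ij$ inserts genuine extra vertices on each line, so a path that is forced through such a gadget (which a non-canonical sub-path must be) accrues at least one extra edge beyond the canonical count. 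That yields $|C| > k$.
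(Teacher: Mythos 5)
Your decomposition of the cycle into $r$ back edges and $r$ monotone ``forward'' sub-paths (from a source $s_{i_\ell}$ to a sink $t_{i_{\ell+1}}$) is the same one the paper uses, but the way you try to bound the forward sub-paths is where the argument breaks. You claim that every source-to-sink forward sub-path in $\wall'[G]$ has length at least $n$, justifying this by an ``$L_1$-type distance at least $n$''; this is not true in this construction. The exact length of a monotone sub-path from $s_j$ to $t_{j'}$ is $2(n - j + j' + 1)$ (it passes through exactly $n - j + j'$ intersection gadgets, each contributing two vertices, plus the two endpoints). When $j'$ is much smaller than $j$ this can be as small as a constant (e.g.\ a sub-path from $s_n$ to $t_1$ has length $4$), so the per-path lower bound of $n$ simply fails, and the $r=2$ case is not handled by your argument: with one long and one very short sub-path, your bound only gives $|C| \geq$ (something potentially below $k$).

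The second ingredient you rely on for the $r=2$ case --- that a non-canonical sub-path ``picks up extra added vertices at uncrossing gadgets and is therefore strictly longer'' --- is also incorrect. The vertices $a^{\mathrm{in}},a^{\mathrm{out}},b^{\mathrm{in}},b^{\mathrm{out}}$ are added precisely so that a path traversing an uncrossing contributes the same $2$ vertices as a path traversing a crossing edge $e_{i,j}$; as a result, the length of a monotone path from $s_j$ to $t_{j'}$ depends only on the pair $(j,j')$ and not on the route taken or on whether $j=j'$. There is no ``extra cost'' for being non-canonical.

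The paper's proof sidesteps both problems with a telescoping identity: writing the indices of back edges used as $i_0, i_1, \ldots, i_\ell = i_0$, the total length of $C$ is
$$\sum_{j=0}^{\ell-1} 2\,(n - i_j + i_{j+1} + 1) \;=\; 2\ell(n+1) + 2(i_\ell - i_0) \;=\; 2\ell(n+1),$$
where the cross terms cancel because the cycle returns to its start. Thus the individual sub-path lengths can vary wildly, but their sum is always exactly $2\ell(n+1)$, which exceeds $k = 2n+2$ as soon as $\ell \geq 2$. To repair your proof you would need to replace the (false) per-sub-path lower bound of $n$ by this exact-count-and-telescope step; as written, your argument does not establish the lemma.
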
 
\begin{proof} 
Any cycle $C$ must start at some $s_i$ and ends at $s_i$. 
Let $i = i_0, i_1, \ldots, i_{\ell} = i$ be the indices of the source-sink pairs visited in the cycle $C$, i.e. the cycle goes through $s_{i_0} \rightarrow t_{i_1} s_{i_1} \rightarrow s_{i_2} \rightarrow \ldots \rightarrow t_{i_{\ell}} s_{i_{\ell}}$.
Observe that a path that goes from $s_j$ to $t_{j'}$ visit exactly $(n  - j +j')$ vertices of the form $y_{i,j}$ , because such path must go right $j'$ times and go down $n-j$ times (in arbitrary order). 
Combining this with $s_j$ and $t_{j'}$, such path would visit $2(n-j+j'+1)$ vertices.  
Therefore, the total length of the cycle $C$ is  
\[\sum_{j=0}^{\ell-1} 2(n-i_j + i_{j+1}+1) = 2 \ell (n+1) + 2(i_{\ell} - i_0) = 2 \ell (n+1)   \] 
So this cycle would have been longer than the threshold $k$ if $\ell >1$.   
\end{proof}

\subsection{Learning CNF Formula}

We present an alternative proof for the hardness of properly learning CNF using our framework.  
Our reduction is quite similar to Alekhnovich~et~al.'s (see \cite{AlekhnovichBFKP08}), but our proof highlights the role of graph products in the proof (while their construction cannot be seen as a standard graph product in any way).

Let $G$ be any graph. We think of a vertex $u\in V(G)$ as an integer in $\set{1,\ldots, n}$. 
For each vertex $u \in V(G)$, we define an encoding $\enc{u} = 0^{u-1} 1 0^{n-u}$. 
For each edge $uv \in V(G)$, the encoding of an edge $uv$ has two $1$s at the positions corresponding to $u$ and $v$.  
Our reduction encodes the $k$-fold graph product $H=G^k$ into samples as follows.
For each $\vec{u}= (u_1,\ldots, u_k) \in V(H)$, we define a negative sample $neg(\vec{u}) = \enc{u_1} \ldots \enc{u_k}$. 
For each $\vec{u} \in E(H), i \in [k]$ and $u_i v \in E(G)$, we define a positive sample $pos(\vec{u}, v,i ) = \enc{u_1} \ldots \enc{u_{i-1}} \enc{u_iv} \enc{u_{i+1}} \ldots \enc{u_k}$.  

Notice that the total number of variables is $n k$, where we think of them as $k$ blocks; in each of which, there are $n$ variables. 
Denote by $z(i,u)$ the variable in block $i \in [k]$ that corresponds to a vertex $u \in V(G)$.

\begin{lemma} \label{thm:CNF lower}
$\opt(R[H]) \geq \chi(H)$
\end{lemma}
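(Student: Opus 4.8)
The plan is to show that any CNF formula $\phi$ consistent with the samples $R[H]$ yields a proper coloring of $H=G^k$ with at most $|\phi|$ colors, where $|\phi|$ denotes the number of clauses. The idea mirrors the NFA/DFA arguments in Section~\ref{sec:min-NFA} and Section~\ref{sec:min-DFA}: each "state" (there, a state of the automaton; here, a clause) will be shown to cover an independent set of vertices of $H$, and every vertex must be covered by at least one such set, so the sets form a proper coloring. First I would observe that $\phi$ evaluates to $0$ on every negative sample $neg(\vec u)=\enc{u_1}\ldots\enc{u_k}$; hence for each $\vec u\in V(H)$ there is at least one clause $D_{\vec u}$ of $\phi$ that is falsified by the assignment $neg(\vec u)$. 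For each clause $D$ of $\phi$, define the color class $C_D=\{\vec u\in V(H): D \text{ is falsified by } neg(\vec u)\}$. Then $\{C_D\}_{D\in\phi}$ covers $V(H)$, so it suffices to prove each $C_D$ is an independent set in $H$.

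The key step is the independence claim. Suppose $\vec u,\vec v\in C_D$ with $\vec u\vec v\in E(H)$. Since $H$ is the lexicographic $k$-fold product, there is a coordinate $i$ with $u_j=v_j$ for all $j<i$ and $u_iv_i\in E(G)$. Consider the positive sample $pos(\vec u, v_i, i)=\enc{u_1}\ldots\enc{u_{i-1}}\enc{u_iv_i}\enc{u_{i+1}}\ldots\enc{u_k}$; here $\enc{u_iv_i}$ is the encoding with $1$s in the positions of both $u_i$ and $v_i$ (in block $i$), and this is a valid positive sample because $u_iv_i\in E(G)$. The assignment $pos(\vec u,v_i,i)$ dominates the assignment $neg(\vec u)$ coordinatewise (it sets exactly the same variables to $1$, plus the extra variable $z(i,v_i)$). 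Now I need to argue that $D$ must also be falsified by $pos(\vec u,v_i,i)$, contradicting consistency. This is where it is crucial that the CNF is \emph{monotone} — that the reduction uses only monotone clauses, or more precisely that $\phi$ may be assumed monotone: since the literals appearing in a satisfying/consistency argument for monotone formulas behave monotonically, a clause falsified by a larger assignment (in the coordinatewise order on $\{0,1\}^{nk}$) is falsified by any smaller one. The monotone direction I want is: $neg(\vec u)\le pos(\vec u,v_i,i)$ pointwise, $D$ is falsified by $neg(\vec u)$, so if $D$ is monotone (all literals positive) then $D$ is falsified by $pos(\vec u,v_i,i)$ too — but wait, that's the wrong direction; falsifying a monotone clause means all its positive literals are $0$, and shrinking the assignment keeps them $0$, so actually $D$ falsified by the \emph{larger} assignment implies $D$ falsified by the \emph{smaller} one. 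Hence I should take $pos$ to be the smaller assignment. Re-examining: $neg(\vec u)$ sets variables $z(j,u_j)$ to $1$; $pos(\vec u,v_i,i)$ sets $z(j,u_j)$ to $1$ for $j\neq i$ and both $z(i,u_i),z(i,v_i)$ to $1$ — so in fact $neg(\vec u)\le pos(\vec u,v_i,i)$. For a monotone $D$ falsified by $neg(\vec u)$ (all its literals $z(\cdot,\cdot)$ are $0$ under $neg(\vec u)$), the literals remain $0$ only if we do not turn them on; but $pos$ turns on $z(i,v_i)$, which could be a literal of $D$. So the correct statement is the reverse: I should find a positive sample that is \emph{below} $neg(\vec u)$ — but positive samples are constructed by \emph{adding} a coordinate. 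So instead I use: $D$ falsified by $pos(\vec u,v_i,i)$ would follow from monotonicity if $pos \le neg$, which fails. The resolution, following Alekhnovich et al., is to instead use the positive sample to derive that $D$ \emph{contains the literal $z(i,v_i)$} (else $D$ would be falsified by $pos(\vec u, v_i,i)$ as well, since all other literals of $D$ are $0$ under $neg(\vec u)$ and hence under $pos$ except possibly $z(i,v_i)$). By symmetry (swapping the roles of $\vec u$ and $\vec v$, using coordinate $i$ and the sample $pos(\vec v,u_i,i)$ — noting $v_j=u_j$ for $j<i$), $D$ must also contain $z(i,u_i)$. But then consider $neg(\vec u)$ again: it sets $z(i,u_i)=1$, so the literal $z(i,u_i)\in D$ is satisfied, contradicting that $D$ is falsified by $neg(\vec u)$.

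So the key step is this double application of the positive-sample trick to force $D$ to contain two conflicting literals, then derive the contradiction. The main obstacle is getting the monotonicity bookkeeping exactly right and making sure the reduction is set up so that we may assume $\phi$ is monotone — this uses the standard fact that for monotone target concepts one can restrict to monotone hypotheses without loss (and the clauses $pos(\vec u,v,i)$, $neg(\vec u)$ are designed precisely so that all relevant assignments differ by turning single variables on). Once independence of each $C_D$ is established, the coloring $\{C_D\}$ has $|\phi|$ classes and covers $V(H)$, giving $|\phi|\ge\chi(H)$, i.e. $\opt(R[H])\ge\chi(H)$. I would close by remarking that this is the lower-bound half of the sandwich bound of Eq.~\ref{eq:intro cnf}; the matching upper bound $\cnf(R_{3,k}[G^k])\le\chi(G)^k|V(G)|^{O(1)}$ is proved by a construction analogous to the DFA construction of Section~\ref{sec:upper-bound-tight-DFA}.
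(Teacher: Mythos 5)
Your overall strategy is right and close to the paper's, but the digression into monotonicity is a genuine gap in the write-up, and an unnecessary one. The claim ``$\phi$ may be assumed monotone without loss'' is not obviously true for this minimization: we are minimizing the number of clauses of \emph{any} CNF consistent with the sample set, not of CNFs representing a monotone function exactly, and there is no standard clause-count-preserving monotonization for this setting (dropping negative literals preserves rejection of negative samples but can destroy acceptance of positive ones). The paper never invokes monotonicity, and neither should you.

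The good news is that your ``resolution'' already works verbatim for general CNFs if you fill in one observation: since $D$ is falsified by $neg(\vec u)$, every literal of $D$ is $0$ under $neg(\vec u)$; in particular $\bar z(i,v_i)\notin D$, because $neg(\vec u)$ sets $z(i,v_i)=0$ and hence $\bar z(i,v_i)=1$. Now $pos(\vec u,v_i,i)$ differs from $neg(\vec u)$ in exactly the variable $z(i,v_i)$, so the only literal of $D$ that could change value is the positive literal $z(i,v_i)$; if it is absent, $D$ remains falsified on $pos(\vec u,v_i,i)$, contradicting consistency. Hence $z(i,v_i)\in D$. The symmetric argument with $neg(\vec v)$ and $pos(\vec v,u_i,i)$ gives $z(i,u_i)\in D$, and then $neg(\vec u)$ sets $z(i,u_i)=1$, satisfying $D$ — contradiction. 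No monotonicity anywhere.

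For comparison, the paper's Claim derives the dual conclusion: using that the fixed clause $C_s$ is falsified by \emph{both} $neg(\vec u)$ and $neg(\vec v)$, it rules out $z(i,u_i)$ and $z(i,v_i)$ from appearing in $C_s$ in \emph{either} polarity, and then concludes that $pos(\vec u,v_i,i)$ (which agrees with $neg(\vec u)$ on all other variables) also falsifies $C_s$. Your route instead forces both variables \emph{into} $D$ positively and contradicts on the negative sample. Both are correct and of the same length; once you delete the monotonicity detour, your proof matches the paper's in spirit and difficulty.
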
 

\begin{proof} 
Suppose $\bigwedge_{j=1}^M C_j$ be a CNF formular that is consistent with all samples. 
We claim that the number of clauses $M$ is at least $\chi(H)$. 
For each $\vec{u} \in V(H)$, let $\sigma(\vec{u})$ be the index such that $C_{\sigma(\vec{u})}$ evaluates to false on sample $neg(\vec{u})$; this clause must exist since this is a negative sample (if there are many such indices for vertex $\vec{u}$, we choose any arbitrary one).
Now for each $s \in [M]$, we define the set of vertices $Q_{s} \subseteq V(H)$ as $Q_{s} = \set{\vec{u}: \sigma(\vec{u}) = s}$.  

\begin{claim} 
$Q_{s}$ is an independent set for all $s \in [M]$.  
\end{claim} 
\begin{proof} 
Assume otherwise that some $\vec{u} \vec{v} \in E(H)$ such that $\vec{u}, \vec{v} \in Q_{s}$.
Let $i$ be the index such that $u_j = v_j$ for all $j <i$ and $u_i v_i \in E(G)$.
Let $X, Y \subseteq [k] \times V(G)$ be the subset of variable indices that appear positively and negatively in clause $C_s$, so we can rewrite $C_s = \left( \bigvee_{(j,u) \in X} z(j,u) \right) \vee \left( \bigvee_{(j,u) \in Y} \overline{z(j,u)} \right)$.  
Since $C_{s}$ evaluates to false on both $neg(\vec{u})$ and $neg(\vec{v})$, we can neither have variable $z(i,u_i)$ nor $z(i,v_i)$ in the clause $C_{s}$: Suppose otherwise that $(i,u_i) \in X$ or $(i,u_i) \in Y$, then either $neg(\vec{u})$ or $neg(\vec{v})$ would have evaluated to true on clause $C_s$ (contradicting $\vec{u} \in Q_s$). 
Similarly, if $(i,v_i) \in X$ or $(i,v_i) \in Y$, then either $neg(\vec{v})$ or $neg(\vec{u})$ would have been true in clause $C_s$.

In other words, $(i,u), (i,v) \not \in X \cup Y$. 
But notice that $pos(\vec{u}, v_i,i)_{(i',u')} = neg(\vec{u})_{(i',u')}$ for all $(i',u') \not \in \set{(i,u_i), (i,v_i)}$, so $C_s$ must evaluate to false on input $pos(\vec{u}, v_i, i)$, a contradiction.  
%
\end{proof}
We have just shown that $\set{Q_s}_{s \in [M]}$ is a valid $M$-coloring for graph $H$, so we must have $M \geq \chi(H)$, as desired.  
\end{proof} 

Next, we prove the upper bound.

\begin{lemma} \label {lem:CNF-upperbound}
$\opt(R[H]) \leq \chi(G)^k n^{O(1)}$ 
\end{lemma}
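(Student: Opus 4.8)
The goal is to show that if $\chi(G)=c$ then there is a CNF consistent with the samples $R[H]$ (where $H=G^k$) whose size is $\chi(G)^k\cdot n^{O(1)}$. The natural approach mirrors the DFA upper bound (Lemma~\ref{lmm:ineq-tight-min-DFA}, Section~\ref{sec:upper-bound-tight-DFA}): use an optimal proper coloring $\sigma:V(G)\to[c]$ of $G$ to build, for each of the $c^k$ "color sequences" $\vec{\gamma}=(\gamma_1,\dots,\gamma_k)\in[c]^k$, a constant number of clauses that together rule out exactly the negative samples $neg(\vec{u})$ with $\sigma(u_j)=\gamma_j$ for all $j$, while satisfying every positive sample. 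First I would recall the structure of the samples: $neg(\vec u)=\enc{u_1}\cdots\enc{u_k}$ has exactly one $1$ in each of the $k$ blocks (at position $u_j$ in block $j$), whereas $pos(\vec u,v,i)$ is identical except block $i$ has two $1$s (at positions $u_i$ and $v$, where $u_iv\in E(G)$). So a positive sample differs from the corresponding negative sample only by turning one extra variable on in one block.

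The key clause gadget: fix a color class $C_\gamma=\sigma^{-1}(\gamma)\subseteq V(G)$ for each $\gamma\in[c]$. For a color sequence $\vec\gamma$ and each block $i\in[k]$, consider the clause
\[
D_{\vec\gamma,i}\;=\;\Bigl(\bigvee_{j\neq i}\ \bigvee_{u\notin C_{\gamma_j}} z(j,u)\Bigr)\ \vee\ \Bigl(\bigvee_{u\notin C_{\gamma_i}} z(i,u)\Bigr).
\]
This is falsified by an assignment whose block-$j$ support lies inside $C_{\gamma_j}$ for all $j$ — in particular by $neg(\vec u)$ whenever $\sigma(u_j)=\gamma_j$ for all $j$. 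I would then take the CNF to be $\bigwedge_{\vec\gamma\in[c]^k}\bigwedge_{i\in[k]} D_{\vec\gamma,i}$, giving $k\cdot c^k=\chi(G)^k\cdot n^{O(1)}$ clauses (each of polynomial size), as desired. The content of the proof is verifying consistency. For a negative sample $neg(\vec u)$: letting $\gamma_j=\sigma(u_j)$, the single variable on in block $j$ is $z(j,u_j)$ with $u_j\in C_{\gamma_j}$, so for this particular $\vec\gamma$ every disjunct of $D_{\vec\gamma,i}$ is $0$ — hence $neg(\vec u)$ is correctly rejected. For a positive sample $pos(\vec u,v,i)$ with $u_iv\in E(G)$: since $\sigma$ is a proper coloring of $G$, $\sigma(v)\neq\sigma(u_i)$; so in block $i$ the two on-variables are $z(i,u_i)$ and $z(i,v)$ with $\sigma(v)\neq\sigma(u_i)$, meaning for every color sequence $\vec\gamma$ at least one of $u_i,v$ lies outside $C_{\gamma_i}$, so the disjunct $\bigvee_{u\notin C_{\gamma_i}}z(i,u)$ is satisfied in $D_{\vec\gamma,i}$; and for clauses $D_{\vec\gamma,i'}$ with $i'\neq i$, block $i$ appears in the "$j\neq i'$" part, and again one of the two on-variables in block $i$ lies outside $C_{\gamma_i}$, so that clause is satisfied too. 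Thus every positive sample is accepted.

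The main obstacle — and the thing to get exactly right — is the indexing in the clause gadget so that it simultaneously (a) kills all negative samples of a given color profile, and (b) is satisfied by every positive sample regardless of profile. The subtlety is that a positive sample has two on-variables in one block, and I must ensure the "extra" on-variable always escapes whichever color class the clause is testing; this is exactly where properness of $\sigma$ on $G$ enters, and it must be invoked for the correct coordinate $i$. A secondary point is bookkeeping the size: each clause has at most $kn$ literals and there are $kc^k$ clauses, so $|R[H]|$-relative size is $\chi(G)^k\,n^{O(1)}$, matching the statement. I would close by combining Lemma~\ref{thm:CNF lower} and Lemma~\ref{lem:CNF-upperbound} exactly as in the DFA case — plug the Feige–Kilian hard coloring instance into $G$, take $k=\lceil 1/\epsilon\rceil$, and read off the $n^{1-\epsilon}$ gap from $\chi(G)^{k-o(1)}\le\cnf(R[G^k])\le\chi(G)^k n^{O(1)}$, with the PAC-learning corollary following from the Occam-algorithm argument of Section~\ref{sec:hardn-pac-learning}.
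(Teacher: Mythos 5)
Your construction is exactly the paper's formula $f=\bigvee_{i=1}^{k}f_i$ with $f_i=\bigwedge_{c=1}^{M}\bigvee_{u\notin I_c}z(i,u)$, written out directly in CNF by distributing the outer OR over the conjunctions: note that your $D_{\vec\gamma,i}$ does not actually depend on $i$ (it simplifies to $\bigvee_{j=1}^{k}\bigvee_{u\notin C_{\gamma_j}}z(j,u)$), so there are really only $\chi(G)^k$ distinct clauses, and the extra factor of $k$ in your count is a harmless redundancy. The consistency verification matches the paper's argument --- negatives falsify the clause indexed by $\vec\gamma=(\sigma(u_1),\dots,\sigma(u_k))$, positives satisfy every clause via properness of $\sigma$ at coordinate $i$ --- so the proposal is correct and takes essentially the same route as the paper.
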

\begin{proof}
We construct the same formula as in Aleknovich et al.  
That is, let $I_1,\ldots, I_M$ be color classes of $G$ and $\sigma: V(G) \rightarrow [M]$ be the corresponding coloring function.  
Define the formula $f_i(z)  = \wedge_{c=1}^M \vee_{u \not \in I_c} z(i,u) $,
for $i=1,2,\ldots,k$,
and define $f(z) = \vee_{i=1}^k f_i(z)$. 
This formula can be turned into a CNF of size at most $\chi(G)^k |V(G)|^{O(1)}$.  

\begin{claim} \label{clm:CNF-is-consistent}
The formula $f$ is consistent with all the samples. 
\end{claim} 

\begin{proof} 
Consider each negative sample $neg(\vec{u})$ for $\vec{u} = (u_1,\ldots, u_k) \in V(H)$. 
For each $i \in [k]$, notice that $f_i(\enc{u_i})$ evaluates to false because $\vee_{u \not \in I_{\sigma(u_i)}} \enc{u_i}_{ u}$ is false (since $u_i \in I_{\sigma(u_i)}$ is the only bit of $\enc{u_i}$ that is ``1''). 
This implies that $\vee_{i=1}^k f_i(neg(\vec{u}))$ is false. 

Now consider, for each $\vec{u} \in V(H)$, $v: u_i v \in E(G)$ and $i \in [k]$, a positive sample $pos(\vec{u}, v, i)$.  
We claim that $f_i(pos(\vec{u}, v,i))$ is true, which causes $f(pos(\vec{u}, v, i))$ to be true: Assume to the contrary that $f_i$ is false, so some term $\vee_{w \not \in I_c} \enc{u_i v}_{w}$ is false for some $c$; notice that it can be false only if $\enc{u_i v}_w = 0$ for all $w \not \in I_c$; since we have $\enc{u_i v}_{u_i} = \enc{u_i v}_{v} = 1$, it must be the case that both $u_i$ and $v$ belong to $I_c$, contradicting the fact that $I_c$ is a color class.     
\end{proof} 

\end{proof}

\section{Conclusion and Open Problems}

We have shown applications of pre-reduction graph product techniques in proving hardness of approximation.
For some applications, such as EDP, proving $\alpha$-projection property implies tight hardness, but for some others, we need a more careful reduction of the form $R[G^{\ell}]$ (taking into account the fact that the input is an $\ell$-fold product of graphs).  

There are many open problems on edge-disjoint paths. 
Most notably can one narrow down the gap of undirected EDP between $O(\sqrt{n})$ upper bound and $\log^{1/2-\epsilon} n$ lower bound?  
For directed EDP, there is still a (relatively large) gap in the case of low congestion routing, between the upper bound of $n^{1/c}$~\cite{KolliopoulosS04} and the lower bound of $n^{1/(3c+23)}$~\cite{ChuzhoyGKT07} if we allow routing with congestion $c$. 
We believe that our techniques are likely to work there (in a much more sophisticated manner), and it would potentially close this gap. 
This would resolve an open question in Chuzhoy et al.~\cite{ChuzhoyGKT07}.

Another interesting problem is the cycle packing problem. 
For this problem, the approximability is pretty much settled on undirected graphs with an upper bound of $O(\log^{1/2} n)$ and a lower bound of $\log^{1/2 -\epsilon} n$~\cite{FriggstadS11,KrivelevichNSYY07}.
On directed graphs, there is still a large gap between $n^{1/2}$ and $\Omega(\frac{\log n}{\log \log n})$.
For $k$-cycle packing problem, it is interesting to see whether our technique gives $k^{1-\epsilon}$ hardness for small $k$.

\paragraph{Acknowledgement:}
We thank Julia Chuzhoy for suggesting the EDP reduction and for related discussions when the first author was still at the University of Chicago.

\bibliographystyle{alpha}
\bibliography{reference}

\newcommand{\etalchar}[1]{$^{#1}$}
\begin{thebibliography}{GKR{\etalchar{+}}03}

\bibitem[Aar08]{Aaronson_lecture08}
Scott Aaronson.
\newblock {6.080/6.089 Great Ideas in Theoretical Computer Science, Spring
  2008, Lecture 21}.
\newblock MIT OpenCourseWare, 2008.
\newblock Available at
  \url{http://stellar.mit.edu/S/course/6/sp08/6.080/courseMaterial/topics/topic1/lectureNotes/lec21/lec21.pdf}.

\bibitem[ABF{\etalchar{+}}08]{AlekhnovichBFKP08}
Michael Alekhnovich, Mark Braverman, Vitaly Feldman, Adam~R. Klivans, and
  Toniann Pitassi.
\newblock The complexity of properly learning simple concept classes.
\newblock {\em J. Comput. Syst. Sci.}, 74(1):16--34, 2008.
\newblock Announced at FOCS 2004.

\bibitem[ABX08]{ApplebaumBX08}
Benny Applebaum, Boaz Barak, and David Xiao.
\newblock On basing lower-bounds for learning on worst-case assumptions.
\newblock In {\em FOCS}, pages 211--220, 2008.

\bibitem[ACG{\etalchar{+}}10]{AndrewsCGKTZ10}
Matthew Andrews, Julia Chuzhoy, Venkatesan Guruswami, Sanjeev Khanna, Kunal
  Talwar, and Lisa Zhang.
\newblock Inapproximability of edge-disjoint paths and low congestion routing
  on undirected graphs.
\newblock {\em Combinatorica}, 30(5):485--520, 2010.

\bibitem[Ang78]{Angluin78}
Dana Angluin.
\newblock On the complexity of minimum inference of regular sets.
\newblock {\em Information and Control}, 39(3):337--350, 1978.

\bibitem[AZ06]{AndrewsZ06}
Matthew Andrews and Lisa Zhang.
\newblock Logarithmic hardness of the undirected edge-disjoint paths problem.
\newblock {\em J. ACM}, 53(5):745--761, 2006.
\newblock Also, in STOC'05.

\bibitem[BP92]{BoardP92}
Raymond Board and Leonard Pitt.
\newblock On the necessity of occam algorithms.
\newblock {\em Theoretical Computer Science}, 100(1):157 -- 184, 1992.

\bibitem[BS92]{BermanS92}
Piotr Berman and Georg Schnitger.
\newblock On the complexity of approximating the independent set problem.
\newblock {\em Inf. Comput.}, 96(1):77--94, 1992.
\newblock Also, in STACS'89.

\bibitem[CGKT07]{ChuzhoyGKT07}
Julia Chuzhoy, Venkatesan Guruswami, Sanjeev Khanna, and Kunal Talwar.
\newblock Hardness of routing with congestion in directed graphs.
\newblock In {\em STOC}, pages 165--178, 2007.

\bibitem[Chu12]{Chuzhoy12a}
Julia Chuzhoy.
\newblock Routing in undirected graphs with constant congestion.
\newblock In {\em STOC}, pages 855--874, 2012.

\bibitem[CK07]{ChekuriK07}
Chandra Chekuri and Sanjeev Khanna.
\newblock Edge-disjoint paths revisited.
\newblock {\em ACM Transactions on Algorithms}, 3(4), 2007.

\bibitem[CKS05]{ChekuriKS05}
Chandra Chekuri, Sanjeev Khanna, and F.~Bruce Shepherd.
\newblock Multicommodity flow, well-linked terminals, and routing problems.
\newblock In {\em STOC}, pages 183--192, 2005.

\bibitem[CKS06]{ChekuriKS06}
Chandra Chekuri, Sanjeev Khanna, and F.~Bruce Shepherd.
\newblock An {$O(\sqrt{n})$} approximation and integrality gap for disjoint
  paths and unsplittable flow.
\newblock {\em Theory of Computing}, 2(1):137--146, 2006.

\bibitem[CKS09]{ChekuriKS09}
Chandra Chekuri, Sanjeev Khanna, and F.~Bruce Shepherd.
\newblock Edge-disjoint paths in planar graphs with constant congestion.
\newblock {\em SIAM J. Comput.}, 39(1):281--301, 2009.

\bibitem[CL12]{ChuzhoyL12}
Julia Chuzhoy and Shi Li.
\newblock A polylogarithmic approximation algorithm for edge-disjoint paths
  with congestion 2.
\newblock In {\em FOCS}, pages 233--242, 2012.

\bibitem[CLN13a]{ChalermsookLN-SODA13}
Parinya Chalermsook, Bundit Laekhanukit, and Danupon Nanongkai.
\newblock Graph products revisited: Tight approximation hardness of induced
  matching, poset dimension and more.
\newblock In {\em SODA}, pages 1557--1576, 2013.

\bibitem[CLN13b]{ChalermsookLN-FOCS13}
Parinya Chalermsook, Bundit Laekhanukit, and Danupon Nanongkai.
\newblock Independent set, induced matching, and pricing: Connections and tight
  (subexponential time) approximation hardnesses.
\newblock In {\em FOCS}, pages 370--379, 2013.

\bibitem[CLN14]{ChalermsookLN-LATIN14}
Parinya Chalermsook, Bundit Laekhanukit, and Danupon Nanongkai.
\newblock Coloring graph powers: Graph product bounds and hardness of
  approximation.
\newblock In {\em LATIN}, pages 409--420, 2014.

\bibitem[DlH10]{DelaHiguery10-book}
Colin De~la Higuera.
\newblock {\em Grammatical inference: learning automata and grammars}.
\newblock Cambridge University Press, 2010.

\bibitem[DLSS14]{DanielyLS14}
Amit Daniely, Nati Linial, and Shai Shalev-Shwartz.
\newblock From average case complexity to improper learning complexity.
\newblock In {\em STOC}, pages 441--448, 2014.

\bibitem[Fei02]{Feige02}
Uriel Feige.
\newblock Relations between average case complexity and approximation
  complexity.
\newblock In {\em STOC}, pages 534--543, 2002.

\bibitem[Fel08]{Feldman08}
Vitaly Feldman.
\newblock Hardness of proper learning.
\newblock In {\em Encyclopedia of Algorithms}. Springer, 2008.

\bibitem[FK98]{FeigeK98}
Uriel Feige and Joe Kilian.
\newblock Zero knowledge and the chromatic number.
\newblock {\em J. Comput. Syst. Sci.}, 57(2):187--199, 1998.
\newblock Also, in CCC 1996.

\bibitem[FS11]{FriggstadS11}
Zachary Friggstad and Mohammad~R. Salavatipour.
\newblock Approximability of packing disjoint cycles.
\newblock {\em Algorithmica}, 60(2):395--400, 2011.

\bibitem[GKR{\etalchar{+}}03]{GuruswamiKRSY03}
Venkatesan Guruswami, Sanjeev Khanna, Rajmohan Rajaraman, F.~Bruce Shepherd,
  and Mihalis Yannakakis.
\newblock Near-optimal hardness results and approximation algorithms for
  edge-disjoint paths and related problems.
\newblock {\em J. Comput. Syst. Sci.}, 67(3):473--496, 2003.
\newblock Also, in STOC 1999.

\bibitem[GL14]{GuruswamiL14}
Venkatesan Guruswami and Euiwoong Lee.
\newblock Inapproximability of feedback vertex set for bounded length cycles.
\newblock {\em Electronic Colloquium on Computational Complexity (ECCC)}, 21:6,
  2014.

\bibitem[Gol78]{Gold78}
E.~Mark Gold.
\newblock Complexity of automaton identification from given data.
\newblock {\em Information and Control}, 37(3):302--320, 1978.

\bibitem[H{\aa}s96]{Hastad96}
Johan H{\aa}stad.
\newblock Clique is hard to approximate within $n^{1-\epsilon}$.
\newblock In {\em FOCS}, pages 627--636, 1996.

\bibitem[KK10]{KawarabayashiK10}
Kenichi Kawarabayashi and Yusuke Kobayashi.
\newblock The edge disjoint paths problem in eulerian graphs and
  4-edge-connected graphs.
\newblock In {\em SODA}, pages 345--353, 2010.

\bibitem[Kle96]{Kleinberg96}
Jon~Michael Kleinberg.
\newblock {\em Approximation algorithms for disjoint paths problems}.
\newblock PhD thesis, Citeseer, 1996.

\bibitem[Kle05]{Kleinberg05}
Jon~M. Kleinberg.
\newblock An approximation algorithm for the disjoint paths problem in
  even-degree planar graphs.
\newblock In {\em FOCS}, pages 627--636, 2005.

\bibitem[KMR97]{KargerMR97}
David~R. Karger, Rajeev Motwani, and G.~D.~S. Ramkumar.
\newblock On approximating the longest path in a graph.
\newblock {\em Algorithmica}, 18(1):82--98, 1997.

\bibitem[KNS{\etalchar{+}}07]{KrivelevichNSYY07}
Michael Krivelevich, Zeev Nutov, Mohammad~R. Salavatipour, Jacques Yuster, and
  Raphael Yuster.
\newblock Approximation algorithms and hardness results for cycle packing
  problems.
\newblock {\em ACM Transactions on Algorithms}, 3(4), 2007.

\bibitem[KS04]{KolliopoulosS04}
Stavros~G. Kolliopoulos and Clifford Stein.
\newblock Approximating disjoint-path problems using packing integer programs.
\newblock {\em Math. Program.}, 99(1):63--87, 2004.

\bibitem[KT98]{KleinbergT98}
Jon~M. Kleinberg and {\'E}va Tardos.
\newblock Approximations for the disjoint paths problem in high-diameter planar
  networks.
\newblock {\em J. Comput. Syst. Sci.}, 57(1):61--73, 1998.

\bibitem[KV94]{KearnsV94}
Michael~J. Kearns and Leslie~G. Valiant.
\newblock Cryptographic limitations on learning boolean formulae and finite
  automata.
\newblock {\em J. ACM}, 41(1):67--95, 1994.
\newblock Announced at STOC 1989.

\bibitem[LV88]{LiV88}
Ming Li and Umesh~V. Vazirani.
\newblock On the learnability of finite automata.
\newblock In {\em COLT}, pages 359--370, 1988.

\bibitem[Pit89]{Pitt89survey}
Leonard Pitt.
\newblock {\em Inductive inference, {DFAs}, and computational complexity}.
\newblock Springer, 1989.

\bibitem[PV88]{PittV88}
Leonard Pitt and Leslie~G. Valiant.
\newblock Computational limitations on learning from examples.
\newblock {\em J. ACM}, 35(4):965--984, 1988.

\bibitem[PW93]{PittW93}
Leonard Pitt and Manfred~K. Warmuth.
\newblock The minimum consistent {DFA} problem cannot be approximated within
  any polynomial.
\newblock {\em J. ACM}, 40(1):95--142, 1993.
\newblock Announced at STOC 1989.

\bibitem[RS95]{RobertsonS95b}
Neil Robertson and Paul~D. Seymour.
\newblock Graph minors .xiii. the disjoint paths problem.
\newblock {\em J. Comb. Theory, Ser. B}, 63(1):65--110, 1995.

\bibitem[VV04]{VaradarajanV04}
Kasturi~R. Varadarajan and Ganesh Venkataraman.
\newblock Graph decomposition and a greedy algorithm for edge-disjoint paths.
\newblock In {\em SODA}, pages 379--380, 2004.

\bibitem[Zuc07]{Zuckerman07}
David Zuckerman.
\newblock Linear degree extractors and the inapproximability of max clique and
  chromatic number.
\newblock {\em Theory of Computing}, 3(1):103--128, 2007.
\newblock Also, in STOC 2006.

\end{thebibliography}

\newpage
\appendix
\section*{Appendix}

\section{List of Bad Examples}
\label{sec: bad}

In this section, we provide the evidences that all the reductions considered in this paper are neither trivially ``working''\danupon{What is ``working''?} nor subadditive in the sense of our previous SODA paper \cite{ChalermsookLN-SODA13}. 
Therefore, we will need the new conceptual ideas introduced in this paper.

\danupon{If there will be only one subsection, we should remove the subsection.}




\subsection{Edge Disjoint Paths}\label{sec:bad example edp}

\danupon{It's still very unclear why this is an evidence that we cannot use our SODA concept}

We show a graph $G$ in which $\alpha(G) = 2$ but $\EDP(\wall [G]) = n/3$.
To ensure that $G$ does not have an independent set of size $3$, we define graph $G$ by defining a triangle-free graph $H$ and let $G= K_n \setminus H$. 
We consider a set of vertices $V(G)  = A \cup B \cup C$, where $|A|= \set{1,\ldots, n/3}, B= \set{n/3+1,\ldots, 2n/3}$ and $C= \set{2n/3 + 1, \ldots, n}$.
Graph $H$ only have edges between $A$ and $B$ in such a way that, for any $i \in A$ and $j \in B$, we have $ij \in E(H)$ if and only if $i -j > n/3$. 
It is obvious that $H$ is bipartite, so if we define $G=  K_n \setminus H$, we have $\alpha(G) =2$.

\begin{figure}[hbt]
\begin{center} 
\includegraphics[scale=0.6, clip=true, trim=  0 7cm 5cm 3.4cm] {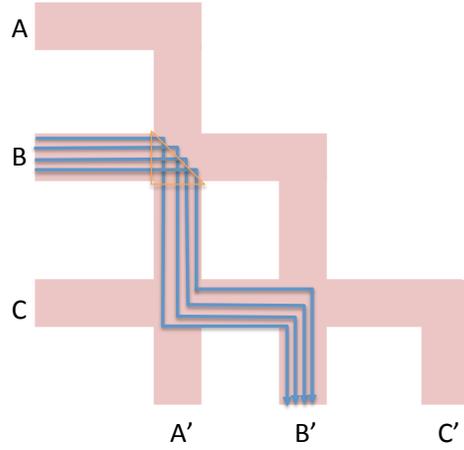}
\caption{Bad Example for EDP reduction}
\label{fig: bad edp}
\end{center} 
\end{figure}

Now we check that $\EDP(\wall [G]) = n/3$. For each $i \in [n]$, let $E_i$ denote the set of edges on the canonical path from $s_i$ to $t_i$ in $\wall [G]$.
For each $i \in B$, we define a path $P_i$ that: 
\begin{itemize} 
\item Start at $s_i$ and go straight until it meets with an edge in $E_{i- n/3}$, at which the path turns downward (the turning is possible because we have an edge $i (i-n/3) \in E(G)$). 

\item The path goes downward until it meets with an edge in $E_{n - i +n/3}$, at which the path turns again towards the right.

\item Once path $P_i$ meets with an edge in $E_i$ again, it takes a turn downward and remains so until it reaches $t_i$.   
\end{itemize}  

\begin{observation} 
For any $i ,j \in B$, $P_i \cap P_j = \emptyset$. 
\end{observation}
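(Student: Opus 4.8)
The plan is to prove the observation by following each path $P_i$, recording the \emph{set of canonical lines} $\ell_m$ along which it runs, and showing that these sets are pairwise disjoint for distinct indices in $B$ --- which makes it essentially automatic that two such paths share no edge. First I would record the anatomy of $P_i$ ($i\in B$) dictated by its three-phase definition: $P_i$ leaves $s_i$ along $\ell_i$, switches onto $\ell_{i-n/3}$ for the downward phase, then onto $\ell_{n-i+n/3}$ for the rightward phase, and finally returns to $\ell_i$ down to $t_i$. Each of the three turns is legitimate precisely because $i(i-n/3)$, $(i-n/3)(n-i+n/3)$ and $(n-i+n/3)i$ are edges of $G=K_n\setminus H$ (two of them automatically, since $H$ carries only certain $A$--$B$ edges, and the first by the defining rule of $H$). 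Thus the set of lines carrying $P_i$ is $L_i=\{\,i,\ i-n/3,\ n-i+n/3\,\}$, with one element in $B$, one in $A$ and one in $C$.

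The combinatorial core is then the claim that $L_i\cap L_j=\emptyset$ for distinct $i,j\in B$. This is immediate: $A$, $B$ and $C$ are disjoint, and the three maps $i\mapsto i$, $i\mapsto i-n/3$ and $i\mapsto n-i+n/3$ are each injective, so the $B$-, $A$- and $C$-components of $L_i$ and $L_j$ all differ and no cross-class coincidence is possible.

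It remains to deduce edge-disjointness from $L_i\cap L_j=\emptyset$. Every edge of $P_i$ lies on some canonical line $\ell_m$ with $m\in L_i$; an interior (non-crossing) edge of $\ell_m$ lies on no other line, and a ``turn'' edge $e_{m,m'}$ used by $P_i$ has $\{m,m'\}\subseteq L_i$. Hence the only edge $P_i$ and $P_j$ could conceivably share is a crossing edge $e_{m,m'}$ used straight-through by $P_i$ on a line of $L_i$ and by $P_j$ on a line of $L_j$, i.e.\ with $m\in L_i$, $m'\in L_j$ and $mm'\in E(G)$. Taking $i<j$, the only such pairs that are $A$--$B$ pairs --- the only candidates for being edges of $H$, equivalently non-edges of $G$ --- are $(i-n/3,\,j)$ and $(i,\,j-n/3)$, and a short case check shows both are harmless: one of the two associated crossing points is an uncrossing of $\wall[G]$ by the defining rule of $H$ (so it carries no shared edge at all), and for the other, $P_i$ and $P_j$ use sub-arcs of the two lines that do not meet at that point. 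Therefore $P_i\cap P_j=\emptyset$; summing over $i\in B$ then yields $\EDP(\wall [G])\ge|B|=n/3$, which is the point of the example.

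The step I expect to be the main obstacle is this last one: pinning down exactly which sub-arc of each canonical line each $P_i$ traverses, and verifying --- against the defining inequality of $H$ and the explicit shape of the $P_i$'s --- that no genuine crossing edge $e_{m,m'}$ is traversed by both paths. Everything else reduces to the clean combinatorial fact $L_i\cap L_j=\emptyset$.
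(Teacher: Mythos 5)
The paper states this observation without proof, so I'll assess your proposal on its own terms.

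Your overall plan is structurally sound: record the set $L_i=\{\,i,\ i-n/3,\ n-i+n/3\,\}$ of canonical lines traversed by $P_i$, prove $L_i\cap L_j=\emptyset$ for distinct $i,j\in B$, and reduce the question of edge-disjointness to whether $P_i$ and $P_j$ share a crossing edge $e_{m,m'}$ with $m\in L_i$, $m'\in L_j$, $mm'\in E(G)$. The disjointness claim $L_i\cap L_j=\emptyset$ is correct and you argue it correctly (though ``one element in $C$'' fails at the endpoint $i=2n/3$, where $n-i+n/3=i$; this does not affect disjointness).

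The gap is in the last step, and I think you have the danger inverted. You narrow to the two $A$--$B$ pairs $(i-n/3,\,j)$ and $(i,\,j-n/3)$ as ``the only candidates for being edges of $H$, equivalently non-edges of $G$.'' But non-edges of $G$ produce \emph{uncrossings} in $\wall[G]$, which are automatically harmless. The genuine threat comes from pairs $(m,m')$ with $mm'\in E(G)$, and since $H$ is bipartite between $A$ and $B$, \emph{every} non-$A$--$B$ pair in $L_i\times L_j$ --- that is, seven of the nine --- is an edge of $G$ and hence a genuine crossing edge $e_{m,m'}$. For each of these seven, plus the $A$--$B$ pair $(i,\,j-n/3)$ (which turns out to lie in $E(G)$ as well, since $i-(j-n/3)<n/3$), one must verify that the sub-arcs of $\ell_m$ and $\ell_{m'}$ actually traversed by $P_i$ and $P_j$ do not both reach the crossing point $y_{m,m'}$. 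You acknowledge this as the ``main obstacle'' and leave it undone, so the argument is incomplete as written. For what it's worth, if one writes down explicitly the intervals of each line carried by each $P_i$ (namely $\ell_i$ for $x\in[0,i-n/3]$ and $y\in[0,i-n/3+1]$, $\ell_{i-n/3}$ for $y\in[i-n/3+1,\,n+1-i]$, and $\ell_{n-i+n/3}$ for $x\in[i-n/3,i]$) and checks all nine pairs, the only pair where both paths actually reach the crossing is $(i-n/3,\,j)$ --- and that is exactly the pair for which $j-(i-n/3)>n/3$ puts it in $E(H)$ and makes it an uncrossing. So your plan would succeed if completed, but the case-check you defer is the entire content of the proof, not a side verification, and the framing around $A$--$B$ pairs misidentifies where the work lies.
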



%
%
%
%

\danupon{I commented out the DNF minimization section since there's nothing.}

\end{document}